\newtheorem{definition}{Definition}[section]
\newtheorem{proposition}{Proposition}[section]
\newtheorem{remark}{Remark}[section]
\newenvironment{proof}{{\bf Proof\ }}{\QED\\}
\newtheorem{lemma}{Lemma}[section]
\numberwithin{equation}{section}
\newtheorem{theorem}{Theorem}[section]
\newtheorem{corollary}{Corollary}[section]
\newcommand{\QED}{\hspace*{\fill}\rule{2.5mm}{2.5mm}}
\newcommand\qed{\hfill$\sqcap\kern-7.5pt\hbox{$\sqcup$}$}
\newcommand{\RR}{\mathbb{R}}
\newcommand{\beqn}{\begin{equation}}
\newcommand{\eeqn}{\end{equation}}
\newcommand{\bear}{\begin{eqnarray}}
\newcommand{\eear}{\end{eqnarray}}
\newcommand{\bean}{\begin{eqnarray*}}
\newcommand{\eean}{\end{eqnarray*}}
\newcommand{\cE}{\mathcal{E}}
\begin{document}
\title{On the dynamics of finite temperature trapped Bose gases}

\author{Avy Soffer\footnotemark[1] \and Minh-Binh Tran\footnotemark[2] 
}

\renewcommand{\thefootnote}{\fnsymbol{footnote}}

\footnotetext[1]{Mathematics Department, Rutgers University, New Brunswick, NJ 08903 USA.\\Email: soffer@math.rutgers.edu. 
}

\footnotetext[2]{Department of Mathematics, University of Wisconsin-Madison, Madison, WI 53706, USA. \\Email: mtran23@wisc.edu
}

\maketitle
\begin{abstract} 
The system that describes the dynamics of a Bose-Einstein Condensate (BEC) and the thermal cloud at finite temperature consists of a nonlinear Schrodinger (NLS) and a quantum Boltzmann (QB) equations. In such a system of trapped Bose gases at finite temperature, the QB equation corresponds to the evolution of the density distribution function of the thermal cloud and the NLS is the equation of the condensate.  The quantum Boltzmann collision operator in this temperature regime is the sum of two operators $C_{12}$ and $C_{22}$, which describe collisions of the condensate and the non-condenstate atoms and collisions between non-condensate atoms. Above the BEC critical temperature, the system is reduced to an equation containing only a collsion operator similar to $C_{22}$, which possesses a blow-up positive radial solution with respect to the $L^\infty$ norm (cf. \cite{EscobedoVelazquez:2015:FTB}).
On the other hand, at the very low temperature regime (only a portion of the transition temperature $T_{BEC}$), the system can be simplified into an equation of $C_{12}$, with a different (much higher order) transition probability, which
has a unique global classical positive radial solution with weighted $L^1$ norm (cf. \cite{AlonsoGambaBinh}). In our model, we first decouple the QB, which contains $C_{12}+C_{22}$, and the NLS equations, then show a global existence and uniqueness result for classical positive radial solutions to the spatially homogeneous kinetic system. Different from the case considered in \cite{EscobedoVelazquez:2015:FTB}, due to the presence of the BEC, the collision integrals are associated to sophisticated  energy manifolds rather than  spheres, since the particle energy is approximated  by the  Bogoliubov dispersion law. Moreover,  the mass of the full system is not conserved while it is conserved for the case considered in  \cite{EscobedoVelazquez:2015:FTB}. A new theory is then supplied.
\end{abstract}

{\bf Keyword:}
{Quantum kinetic theory; Bose-Einstein  condensate; quantum Boltzmann equation;  defocusing cubic nonlinear Schrodinger equation; quantum gases. 

{\bf MSC:} {82C10, 82C22, 82C40.}


\section{Introduction}

The study of kinetic equations has a very long history, starting with the classical Boltzmann equation, which provides a description of the dynamics of dilute monoatomic gases (cf. \cite{Cercignani:1975:TAB,Cercignani:1988:BEI,CercignaniIllnerPulvirenti:1994:TMT,Glassey:1996:CPK,Villani:2002:RMT}). As an attempt to extend the Boltzmann equation to deal with quantum gases, the Boltzmann-Nordheim (Uehling-Uhlenbeck) equation was introduced \cite{Nordheim:OTK:1928,UehlingUhlenbeck:TPI:1933}. However, the Boltzmann-Nordheim (Uehling-Uhlenbeck) equation fails to describe a Bose gas at temperatures which are close to and below the Bose-Einstein Condensate (BEC) critical temperature, due to the fact that its steady-state solution is a Bose-Einstein distribution in particle energies. Below the critical temperature, many-body effects modify the equilibrium distribution so that this distribution depends on quasiparticle energies. These
are accounted for by mean fields which break the the
unperturbed Hamiltonian $U (1)$ gauge symmetry. Therefore, a new description in terms of quasiparticles is required. Such a quantum kinetic theory was initiated by    Kirkpatrick and Dorfman \cite{KD1,KD2}, based on the rich body of research carried out in the period 1940-67 by Bogoliubov, Lee and Yang, Beliaev, Pitaevskii, Hugenholtz and Pines, Hohenberg and Martin, Gavoret and Nozi`eres, Kane and Kadanoff and many others. After the production of the first BECs, that later led Cornell, Wieman, and Ketterle to the 2001 Nobel Prize of Physics \cite{WiemanCornell,Ketterle,BarbaraGoss}, there has been an explosion of research on  the kinetic theory associated to BECs. Based on Kirkpatrick-Dorfman's works, Zaremba, Nikuni and Griffin successfully formulated a  self-consistent Gross-Pitaevskii-Boltzmann model, which is nowadays known as the ‘ZNG’ theory (cf. \cite{bijlsma2000condensate,ZarembaNikuniGriffin:1999:DOT}). Independent of the mentioned authors, Pomeau et. al. \cite{pomeau1999theorie} also proposed a similar model for the kinetics of BECs. Later, Gardinier, Zoller and collaborators  derived a Master Quantum Kinetic Equation (MQKE) for BECs, which returns to the ZNG model at the limits,  and  introduced the terminology ``Quantum Kinetic Theory'' in the series of papers \cite{QK1,QK3,QK5,QK0,QK6,QK4,QK2}.  The ZNG theory also gave the first quantitative predictions of vortex nucleation at finite temperatures \cite{williams2002dynamical}. Many other experiments have also confirmed the validity of the model (cf. \cite{proukakis2008finite}).  We refer to the review paper  \cite{anglin2002bose} for discussions on the condensate growth problem concerning the  MQKE model and  the books \cite{GriffinNikuniZaremba:2009:BCG,inguscio1999bose,ColdAtoms1} for more theoretical and experimental justifications of  the ZNG model, as well as  the tutorial article \cite{proukakis2008finite} for an easy introduction. Let us mention that besides the ZNG theory, there have been other works  describing the kinetics of BECs as well (see \cite{Allemand:DOF:2009,IG, kagan1997evolution,semikoz1995kinetics,semikoz1997condensation,Spohn:2010:KOT,Stoof:1999:CVI}, and references therein).

Let us first recall the ZNG model for finite temperature trapped bose gases, i.e. the temperature $T$ of the gas is below the transition temperature $T_{BEC}$ but above absolute zero. Denote $f(t,r,p)$ to be the density function of the Bose gas at time $t$, position $r$ and momentum $p$ and $\Phi(t,r)$ to be the wave function of the BEC. Employing the short-handed notation
$f_i=f(t,r,p_i)$, $i=1,2,3,4$. The Schr\"odinger (or the Gross-Pitaevski) equation for the condensates reads (cf. \cite{bijlsma2000condensate}):
\begin{equation}
\begin{aligned}\label{GPA}
i \hbar {\partial_t \Phi(t,r)} =&\ \Big(-\frac{\hbar^2 \Delta_{{r}}}{2m}+g[N_c(t,r) + 2n_n(t,r)] -i\Lambda_{12}[f](t,r) +V(r)\Big)\Phi(t,r), \ \ (t,r)\in\mathbb{R}_+\times\mathbb{R}^3,\\
\Lambda_{12}[f](t,r) = &\ \frac{\hbar}{2N_c}\Gamma_{12}[f](t,r),\\
\Gamma_{12}[f](t,r)= &\ \int_{\mathbb{R}^{3}}C_{12}[f](t,r,p)\frac{d p}{(2\pi \hbar)^3},\\ 
 n_n(t,r) \ = & \ \int_{\mathbb{R}^3}f(t,r,p)d p,
 \\
~~~\Phi(0,r)=& \ \Phi_0(r), \forall r\in\mathbb{R}^3,
\end{aligned}
\end{equation}
where $N_c(t,r)=|\Phi|^2(t,r)$ is the condensate density, $\hbar$ is the Planck constant, $g$ is the interaction coupling constant proportional to  the $s$-wave scattering length $a$, $V(r)$ is the confinement potential, and the operator $C_{12}$ can be found in   
the quantum Boltzmann equation for the non-condensate atoms (cf. \cite{bijlsma2000condensate}), written below:
\begin{eqnarray}\label{QBFullA}
{\partial_t f}(t,r,p)&+&\frac{p}{m}\cdot\nabla_{{r}} f(t,r,p) \ - \ \nabla_r U(t,r)\cdot \nabla_p f(t,r,p)\\\nonumber
&=&Q[f](t,r,p):=C_{12}[f](t,r,p)+C_{22}[f](t,r,p), (t,r,p)\in\mathbb{R}_+\times\mathbb{R}^3\times\mathbb{R}^3,\\\nonumber
C_{12}[f](t,r,p_1)&:=&\lambda_1N_c(t,r) \iint_{\mathbb{R}^{3}\times\mathbb{R}^{3}}K^{12}({p}_1,{p}_2,{p}_3)\delta(mv_c+ {p}_1-{p}_2-{p}_3)\delta(\mathcal{E}_c+\mathcal{E}_{{p}_1}-\mathcal{E}_{{p}_2}-\mathcal{E}_{{p}_3})\\\label{C12A}
& &\times[(1+f_1)f_2f_3-f_1(1+f_2)(1+f_3)]d p_2d p_3\\\nonumber
&&-2\lambda_1n_c(t,r)\iint_{\mathbb{R}^{3}\times\mathbb{R}^{3}}\delta(mv_c+{p}_2-{p}_1-{p}_3)\delta(\mathcal{E}_c+\mathcal{E}_{{p}_2}-\mathcal{E}_{{p}_1}-\mathcal{E}_{{p}_3})\\\nonumber
& &\times[(1+f_2)f_1f_3-f_2(1+f_1)(1+f_3)]d p_2d p_3,\\\label{C22A}
C_{22}[f](t,r,p_1)&:=&\lambda_2\iiint_{\mathbb{R}^{3}\times\mathbb{R}^{3}\times\mathbb{R}^{3}}K^{22}({p}_1,{p}_2,{p}_3,p_4)\delta({p}_1+{p}_2-{p}_3-{p}_4)\\\nonumber
& &\times\delta(\mathcal{E}_{{p}_1}+\mathcal{E}_{{p}_2}-\mathcal{E}_{{p}_3}-\mathcal{E}_{{p}_4})\times\\\nonumber
&&\times [(1+f_1)(1+f_2)f_3f_4-f_1f_2(1+f_3)(1+f_4)]d p_2d p_3d p_4,
\\\nonumber
f(0,r,p)&=&f_0(r,p), (r,p)\in\mathbb{R}^3\times\mathbb{R}^3,
\end{eqnarray}
where $\lambda_1=\frac{2g^2}{(2\pi)^2\hbar^4},$ $\lambda_2=\frac{2g^2}{(2\pi)^5\hbar^7}$, $m$ is the mass of the particles,  and $\mathcal{E}_{{p}}$ is the Hartree-Fock energy \cite{GriffinNikuniZaremba:2009:BCG}
\begin{equation}\label{HF}
\mathcal{E}_p=\frac{|p|^2}{2m}+U(t,r).
\end{equation}

Notice that $C_{22}$ is the Boltzmann-Nordheim (Uehling-Ulenbeck) quantum Boltzmann collision operator. 
If one writes  
\begin{equation}\label{def-PhiA}
\begin{aligned}
\Phi \ = & \ |\Phi(t,r)|e^{i\phi(r,t)},
\end{aligned}
\end{equation}
the condensate velocity can be defined as
\begin{equation}\label{def-vcA}
v_c(t,r) = \frac{\hbar}{m}\nabla \phi(t,r),
\end{equation}
and the condensate chemical potential is then
\begin{equation}\label{def-mucA}
 \mu_c 
=\ \frac{1}{\sqrt{n_c}}\left(-\frac{\hbar^2\Delta_r}{2m}+ V +g[2n_n+N_c]\right)\sqrt{N_c}.
\end{equation}
The potential $U$ and the condensate energy $\mathcal{E}_c$ are  written as follows
\begin{equation}\label{def-U}
U(t,r)\ = \ V(r) + 2g[N_c(t,r)+n_n(t,r)],
\end{equation}
and 
\begin{equation}\label{def-Ec}
\mathcal{E}_c(t,r) \ = \ \mu_c(t,r) \ +\ \frac{mv_c^2(t,r)}{2}.
\end{equation}

Notice that \eqref{C12} describes collisions of the condensate and the non-condensate atoms (condensate growth term), \eqref{C22} describes collisions between non-condensate atoms, and \eqref{GPA} is the defocusing nonlinear Schrodinger equation of the condensate (see Figure \ref{fig}). 
For the sake of simplicity, we denote $\lambda_1={\frac{2g^2N_c}{(2\pi)^2\hbar^4}}$ and $\lambda_2=\frac{2g^2}{(2\pi)^5\hbar^7}$.

The transition probability kernel $$K^{12}(p_1,p_2,p_3)=|A^{12}(|p_1|,|p_2|,|p_3|)|^2$$ of $C_{12}$  is given by the scattering amplitude (cf. \cite{E,ReichlGust:2012:CII,ReichlGust:2013:RRA,KD1,KD2,ReichlGust:2013:TTF})
\begin{equation}\label{Def:TransitionProbabilityKernel:K12}
\begin{aligned}
&~~~A^{12}(|p_1|,|p_2|,|p_3|):=\\
:=&~(u_{p_3}-v_{p_3})(u_{p_1}u_{p_2}+ v_{p_1}v_{p_2})+(u_{p_2}-v_{p_2})(u_{p_1}u_{p_3}+ v_{p_1}v_{p_3}) - (u_{p_1}-v_{p_1})(u_{p_2}v_{p_3}+ v_{p_2}u_{p_3}),
\end{aligned}
\end{equation}
where 
$$u^2_p=\frac{\frac{p^2}{2m}+gn_c+\mathcal{E}_p}{2\mathcal{E}_p},~~~ u^2_p-v^2_p=1.$$

The transition probability kernel $$K^{22}(p_1,p_2,p_3,p_4)=|A^{22}(p_1,p_2,p_3,p_4)|^2$$ of $C_{22}$  is given by the scattering amplitude (cf. (cf. \cite{E,ReichlGust:2012:CII,ReichlGust:2013:RRA,KD1,KD2,ReichlGust:2013:TTF}))
\begin{equation}\label{Def:TransitionProbabilityKernel:K22}
\begin{aligned}
&~~~A^{22}(|p_1|,|p_2|,|p_3|,|p_4|):=\\
:=&~u_{p_1}u_{p_2}u_{p_3}u_{p_4}+u_{p_1}v_{p_2}u_{p_3}v_{p_4}+u_{p_1}v_{p_2}v_{p_3}u_{p_4}+v_{p_1}u_{p_2}u_{p_3}v_{p_4}+v_{p_1}u_{p_2}v_{p_3}u_{p_4}+v_{p_1}v_{p_2}v_{p_3}v_{p_4}.
\end{aligned}
\end{equation}

When the temperature of the system is very low $T< 0.5T_{BEC}$ (cf. \cite{GriffinNikuniZaremba:2009:BCG}), the Hatree-Fock energy is no longer valid. A more general energy is used instead: the Bogoliubov dispersion law (cf. \cite{E,KD1,KD2})
\bear \label{def-E}
\mathcal{E}_p=\mathcal{E}(p)=\sqrt{\kappa_1 |p|^2 + \kappa_2 |p|^4}, \qquad \kappa_1=\frac{gN_c}{m}>0, \quad \kappa_2=\frac{1}{4m^2}>0.
\eear

Notice that the first rigorous proof of BEC in a physically realistic, continuum model was given in 2002 by Lieb and Seiringer (cf. \cite{lieb2002proof}). Besides the kinetic theory point of view, there are other approaches, valid with different physical assumptions, of understanding the dynamics of BECs and their thermal clouds, for instance, the works \cite{bach2016time,KirkpatrickSchlein:2013:ACL,DeckertFrohlichPickl:2016:DSW,GrillakisMachedon:2013:BMF,GrillakisMachedon:2013:PEA,GrillakisMachedonMargetis:2011:SOC, mitrouskas2016bogoliubov,seiringer2011excitation}, and cited references.

  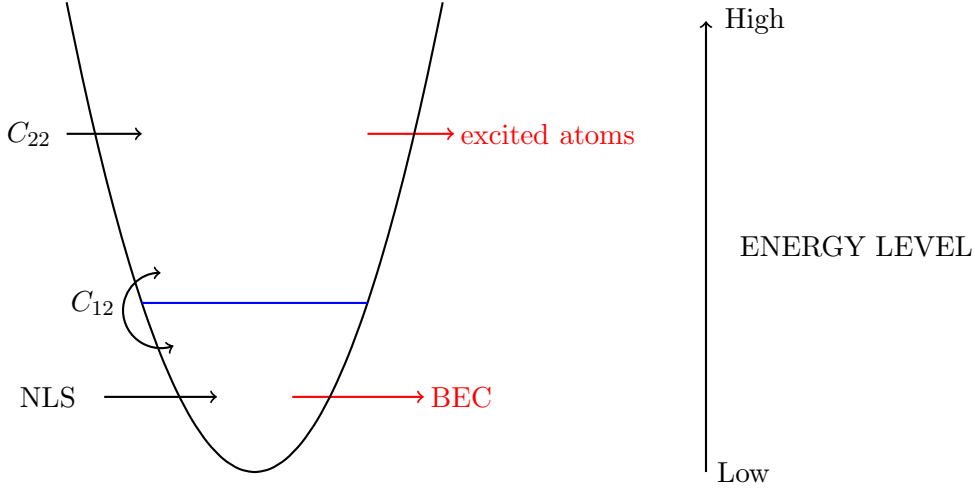
\begin{figure}

\begin{tikzpicture}
      \draw[thick, ->] (-2.5,4.5) -- ++(1,0) node[xshift=-1.5cm] {$C_{22}$};
      \draw[thick, ->, red] (1.5,4.5) -- ++(1.15,0) node[xshift=1.25cm] {excited atoms};
      \draw[thick, ->] (-2,1) -- ++(1.5,0) node[xshift=-2.25cm] {NLS};
      \draw[thick, ->, red] (0.5,1) -- ++(1.75,0) node[xshift=.5cm] {BEC};
      \draw[thick, ->] (6,0) node [xshift=.5cm]{Low} -- ++(0,3)  node[xshift=2.0cm] {ENERGY LEVEL} -- ++(0,3) node[xshift=.65cm] {High};
      \draw[ thick,domain=-2.5:2.5,smooth,variable=\x,black] plot ({\x},{\x*\x});
      \draw [thick, blue] (-1.5, 2.25) -- (1.5, 2.25);

      \draw[thick, <->] (-1.25,2.65) arc (90:290:.5);
      \node at (-2.15, 2.25) {$C_{12}$};
    \end{tikzpicture}
    \label{fig}\caption{The Bose-Einstein Condensate (BEC) and the excited atoms.}
    
   \end{figure}

\bigskip
{\bf The toy model}
\bigskip

Since the system \eqref{GPA}-\eqref{QBFullA} is too complicated, it is impossible to study all its properties in one single paper. As the first step to understand \eqref{GPA}-\eqref{QBFullA}, we impose a few simplifications. We suppose that the equation \eqref{QBFullA} is homogeneous in space and  the condensate density distribution function $N_c=|\Phi|^2$ can be considered as a constant $n_c$. The system is then reduced to the following toy model:

\begin{eqnarray}\label{QBFull}
\frac{\partial f}{\partial t} &=&Q[f]:=C_{12}[f]+C_{22}[f], (t,p)\in\mathbb{R}_+\times\mathbb{R}^3,\\\nonumber
f(0,p)&=&f_0(p), p\in\mathbb{R}^3,\end{eqnarray}
in which we rewrite $C_{12}$ and $C_{22}$ following the KD style \cite{KD1,KD2}, since it is simpler
\begin{eqnarray}\label{C12}
C_{12}[f](t,r,p_1)&:=&n_c\lambda_1\iint_{\mathbb{R}^{3}\times\mathbb{R}^{3}}K^{12}({p}_1,{p}_2,{p}_3)\delta({p}_1-{p}_2-{p}_3)\delta(\mathcal{E}_{{p}_1}-\mathcal{E}_{{p}_2}-\mathcal{E}_{{p}_3})\\\nonumber
& &\times[(1+f_1)f_2f_3-f_1(1+f_2)(1+f_3)]d{p}_2d{p}_3\\\nonumber
&&-2n_c\lambda_1\iint_{\mathbb{R}^{3}\times\mathbb{R}^{3}}K^{12}({p}_1,{p}_2,{p}_3)\delta({p}_2-{p}_1-{p}_3)\delta(\mathcal{E}_{{p}_2}-\mathcal{E}_{{p}_1}-\mathcal{E}_{{p}_3})\\\nonumber
& &\times[(1+f_2)f_1f_3-f_2(1+f_1)(1+f_3)]d{p}_2d{p}_3,\\\label{C22}
C_{22}[f](t,r,p_1)&:=&\lambda_2\iiint_{\mathbb{R}^{3}\times\mathbb{R}^{3}\times\mathbb{R}^{3}}K^{22}({p}_1,{p}_2,{p}_3,p_4)\delta({p}_1+{p}_2-{p}_3-{p}_4)\\\nonumber
& &\times\delta(\mathcal{E}_{{p}_1}+\mathcal{E}_{{p}_2}-\mathcal{E}_{{p}_3}-\mathcal{E}_{{p}_4})\times\\\nonumber
&&\times [(1+f_1)(1+f_2)f_3f_4-f_1f_2(1+f_3)(1+f_4)]d{p}_2d{p}_3d{p}_4,
\end{eqnarray}
where $\mathcal{E}_p$ is the Bogoliubov dispersion relation \eqref{def-E} and is valid when $T< 0.5T_{BEC}$ (cf. \cite{E}). The form of the Bogoliubov dispersion relation makes the study of \eqref{QBFull} much more difficult than the classical Boltzmann equation and it is the goal of our paper to develop techniques that can resolve this difficulty.

Let us mention that in real physical situations, the system of bosons is normally not spacial dependent. Moreover, the toy model \eqref{QBFull} does not really describe the full dynamics of the thermal cloud-condensate system since the equation does not conserve the total number of atoms. However, it is interesting to begin studying the ZNG model by understanding the spatially homogeneous kinetic equation with the Bogoliubov dispersion relation, in order to gain some insights into the full model. Such toy models are indeed useful and have been used in the physics community (cf. \cite{EPV,ReichlGust:2012:CII,ReichlGust:2013:RRA,gust2013transport}), under the assumption that the temperature of the system is very low $T<0.01T_{BEC}$. In such a low temperature system, the portion of excited atoms outside the condensate is very small, in comparison with the number of condensed atoms; and therefore $N_c$ can be regarded as a constant.  For instance, in \cite{ReichlGust:2013:RRA}, 
the authors consider only the kinetic part of the system, which is spatially homogeneous, as what we are assuming in our paper (equation (10) in \cite{ReichlGust:2013:RRA})
$$\partial_t f = C_{12}[f]+C_{22}[f] +C_{13}[f].$$
Indeed, the contribution of \cite{ReichlGust:2013:RRA} is that they could derive a new collision operator $C_{13}[f]$ which complements the ZNG theory in some  cases. We refer to  \cite{ReichlGust:2012:CII,ReichlGust:2013:RRA,gust2013transport} for the discussion about the reason why the number of the excitations (bogolon number) is not conserved. In those works, the solutions of the kinetic equations are shown to converge to equilibrium and explicit convergence rates are computed.  The theoretical findings are shown to be in perfect agreements with  experimental results and with the famous Lee-Yang theory. A similar model has  also been used in \cite{EPV}.  We refer to the physics paper \cite{ReichlTran} for more theoretical and experimental discussions on spacial homogeneous models describing the dynamics of low temperature dilute Bose gases.  

Notice that the two ways of writing \eqref{C12A}-\eqref{C22A} and \eqref{C12}-\eqref{C22} are equivalent as explained in \cite{ZarembaNikuniGriffin:1999:DOT}. Indeed, in \eqref{C12}-\eqref{C22}, $p_i$ represent the quasiparticle momentum in the local rest frame.


Let us mention  a difficulty, pointed out by Eckern (cf. \cite{E}), that arises from the form of the transition probability $A^{22}$.  Define the characteristic momentum for the crossover
between the linear and the quadratic part of the spectrum to be
$p_0=2mn_cg$, in which $g$ is the repulsive point interaction.  If all momenta are much smaller than $p_0$ i.e. $|p_1|, |p_2|, |p_3|<<p_0$, we obtain the following unphysical asymptotic behavior (cf. \cite{E})
$$|A^{22}(|p_1|,|p_2|,|p_3|,|p_4|)|^2\approx |p_1|^{-1}|p_2|^{-1}|p_3|^{-1}|p_4|^{-1}.$$ 
This question is still open in the physics community, as discussed in \cite{BennemannBennemann:TPL:1976,E,Nepomnyashchii:1978,PopovSeredniakov:1978}. In a mathematical point of view, there are many kinetic equations with singular kernels
for which it is possible to prove existence of solutions. However, such singularities in our case lead to the loss in moments of the solutions of the equation. An investigation of this sophisticated  question will be the scope of our forthcoming paper. As a consequence, to avoid this singular behavior, the following transition probability is chosen for mathematical convenience
\begin{equation}\label{Def:TransitionProbabilityKernel:K22A}
\begin{aligned}
K^{22}(p_1,p_2,p_3,p_4)=|A^{22}(|p_1|,|p_2|,|p_3|,|p_4|)|^2\chi_{\{|p_1|,|p_2|,|p_3|,|p_4|\geq p_0\}},
\end{aligned}
\end{equation}
where $\chi_{\{|p_1|,|p_2|,|p_3|,|p_4|\geq p_0\}}$ is the characteristic function of the set $\{|p_1|,|p_2|,|p_3|,|p_4|\geq p_0\}$: namely, it turns out that the dominant collision process of non-condensate atoms in the low temperature region with small momenta is the interactions between non-condensate atoms and the condensate; in this region, we there suppose that the effect of $C_{22}$ is much smaller than $C_{12}$. In other words, we assume that when $p$ is small $C_{12}$ is dominant and when $p$ is large $C_{22}$ is dominant, since $C_{12}$ describes the collisions between excited atoms at low quantum levels and atoms in the ground state, and $C_{22}$ describes collisions between excited atoms at high quantum levels.    As a consequence, it is reasonable to impose a cut-off on the kernel of $C_{22}$ and not on $C_{12}$. With this truncated transition probability, there exists a positive constant $\Gamma$ depending on $p_0$, such that
\begin{equation}\label{K22Bound}
\begin{aligned}
K^{22}(p_1,p_2,p_3,p_4)<\Gamma.
\end{aligned}
\end{equation}
~~\\

Note that some other mathematical results for quantum kinetic equations have been obtained in \cite{AlonsoGambaBinh,JinBinh,CraciunBinh,Binh9,GambaSmithBinh,GermainIonescuTran,ToanBinh,ToanBinh2,ReichlTran,soffer2016coupling} . Quantum kinetic equations have very similar formulations with the so-called wave turbulence kinetic equations. We refer to \cite{buckmaster2016analysis,buckmaster2016effective,EscobedoVelazquez:2015:OTT,FaouGermainHani:TWN:2016,germain2015high,germain2015continuous,germain2016continuous,LukkarinenSpohn:WNS:2011,Nazarenko:2011:WT,Spohn:WNW:2010,zakharov2012kolmogorov,Zakharov:1998:NWA} for more recent advances on the theory.

In the current work, we restrict our attention to  spatial homogeneous and radial solutions of \eqref{QBFull} $$f(0,p)=f_0(|p|),~~f(t,p)=f(t,|p|).$$

Note that in \cite{EscobedoVelazquez:2015:FTB}, the authors consider the Boltzmann-Nordheim (Uehling-Uhlenbeck) equation
\begin{eqnarray*}
\partial_t f_1&=&\iiint_{\mathbb{R}^{3}\times\mathbb{R}^{3}\times\mathbb{R}^{3}}\delta({p}_1+{p}_2-{p}_3-{p}_4)\delta(|{{p}_1}|^2+|{{p}_2}|^2-|{{p}_3}|^2-|{{p}_4}|^2)\times\\\nonumber
&&\times [(1+f_1)(1+f_2)f_3f_4-f_1f_2(1+f_3)(1+f_4)]d{p}_2d{p}_3d{p}_4,
\end{eqnarray*}
which, by the radially symmetry assumption of $f$, can be reduced to the equivalent form
\begin{eqnarray*}
\partial_t f_1&=&\iiint_{\mathbb{R}_+\times\mathbb{R}_+\times\mathbb{R}_+ }\frac{\min\{|p_1|,|p_2|,|p_3|,|p_4|\}|p_1||p_2||p_3||p_4|}{|p_1|^2}\\
&&~~\times\delta(|{p_1}|^2+|{p_2}|^2-|{p_3}|^2-|{p_4}|^2)[f_3f_4(1+f_1+f_2)\\
&&~~-f_1f_2(1+f_3+f_4)]d|{p}_2|d|{p}_3|d|{p}_4|,
\end{eqnarray*}
By the same argument as in \cite{EscobedoVelazquez:2015:FTB}, $C_{22}$ also has the following  form
\begin{equation}\label{C22New}\begin{aligned}
C_{22}[f]=&~~\kappa_3\iiint_{\mathbb{R}_+\times\mathbb{R}_+\times\mathbb{R}_+ }K^{22}(p_1,p_2,p_3,p_4)\frac{\min\{|p_1|,|p_2|,|p_3|,|p_4|\}|p_1||p_2||p_3||p_4|}{|p_1|^2}\\
&~~\times\delta(\mathcal{E}_{p_1}+\mathcal{E}_{p_2}-\mathcal{E}_{p_3}-\mathcal{E}_{p_4})[f_3f_4(1+f_1+f_2)-f_1f_2(1+f_3+f_4)]d|{p}_2|d|{p}_3|d|{p}_4|,\end{aligned}
\end{equation}
where $\kappa_3$ is some positive constant.\\
Equation \eqref{QBFull} can be simplified as follows
\begin{equation}\label{QB}
\frac{\partial f}{\partial t}=Q[f]=C_{12}[f]+C_{22}[f],~~~ f(0,p)=f_0(|p|), \forall p\in\mathbb{R}^3.
\end{equation}
~~

We are interested in the existence and uniqueness of strong, classical and radial solutions of \eqref{QB}.
\begin{definition}\label{StrongSolution}
A function $f$ is defined to be a strong radial solution in $C([0,T),X)\cap C^1((0,T),Y)$,  for some function spaces $X,Y$, to \eqref{QB}, where $C_{22}$ is of the form \eqref{C22New}, $K_{22}$ is of the form \eqref{K22Bound}, $\mathcal{E}_p$ is  the Bogoliubov dispersion law and $N_c$ is assume to the the constant $n_c$, if and only if $f$ satisfies 
\begin{equation*}
\frac{\partial f}{\partial t}=Q[f]=C_{12}[f]+C_{22}[f],~~~ f(0,p)=f_0(|p|), \mbox{ for a.e. } p\in\mathbb{R}^3.
\end{equation*}
Moreover $f(t,p)=f(t,|p|)$ for all $(t,p)\in [0,T)\times\mathbb{R}^3$.

\end{definition}

{ Bosons are sensitive to temperature.  When the temperature is below the transition temperature $T<T_{BEC}$, the BEC is formed.   When we lower the temperature $T$, the behavior of the quasi-particles change. This can be seen clearly through the Bogoliubov dispersion relation, which depends on the density of the condensate and the temperature since $g$ depends on $T$. In the lower temperature range when $T$ is only a portion of  $T_{BEC}$, sometimes, we can suppose (cf. \cite{E,EPV})   that the interaction between bosons, i.e. the $C_{22}$ collision operator, is negligible, and the BEC is very stable. In this case, the system can be reduced to a kinetic equation involving the $C_{12}$ collision operator only:
\begin{equation}\label{Peierls}
\frac{\partial f}{\partial t}=C_{12}[f],~~~ f(0,p)=f_0(p), \forall p\in\mathbb{R}^3.
\end{equation}
In this  regime, the transition probability takes the form $\mathcal{C}_K|p_1||p_2||p_3|,$ which is unbounded, while \eqref{Def:TransitionProbabilityKernel:K12} is bounded. In the series of beautiful works \cite{ArkerydNouri:2012:BCI,ArkerydNouri:AMP:2013,ArkerydNouri:2015:BCI}, the study of \eqref{Peierls} has been done for the first time. In \cite{AlonsoGambaBinh} it has been proved that \eqref{Peierls} has a unique positive radial solution, based on an argument of propagation of polynomial and exponential moments. We will see later that, unlike \eqref{Peierls}, polynomial and exponential moments of solutions of \eqref{QBFull} are not propagating on the time interval $[0,\infty)$, due to the presence of the collision operator $C_{22}$.  In \cite{ToanBinh}, it is prove that the solution of \eqref{Peierls} is bounded from below by a Gaussian. In other words, the operator $C_{12}$ is ``strongly'' positive. 
\\

Above the BEC critical temperature, the density of the condensate $n_c$ is $0$, then $C_{12}=0$. Equation \eqref{QB} is reduced to the Boltzmann-Nordheim (Uehling-Uhlenbeck) equation 
\begin{equation}\label{UU}
\frac{\partial f}{\partial t}=C_{22}[f],~~~ f(0,p)=f_0(|p|), \forall p\in\mathbb{R}^3,
\end{equation}
which has a blow-up positive radial solution in the $L^\infty$ norm if the mass of the initial data is too concentrated around the origin (cf. \cite{EscobedoVelazquez:2015:FTB}). Note that in this temperature regime, the transition probability is $K^{22}=1$ (cf. \cite{ReichlGust:2012:CII,gust2013transport}), which is different from the regime considered in this paper. The existence of a global weak and measure solution for the equation was treated in \cite{Lu:2004:OID,Lu:2005:TBE,Lu:2013:TBE}. In \cite{BriantEinav:2016:OTC}, local existence and uniqueness results, with respect to the $L^\infty$  norm, were obtained for the Boltzmann-Nordheim (Uehling-Uhlenbeck) equation. Let us mention that when the temperature is above the BEC critical temperature, the energy is of the form $\frac{p^2}{2m}$. The collision of two microscopic boxes of particles with momenta $p_1$ and $p_2$ changes the momenta into $p_3$ and $p_4$; and the conservation laws read: 
$$|p_1|^2+|p_2|^2=|p_3|^2+|p_4|^2, \ \ \ p_1+p_2=p_3+p_4. $$
Since $p_1$, $p_2$, $p_3$, $p_4$ belong to the sphere centered at $\frac{p_1+p_2}{2}$ with radius $\frac{|p_1-p_2|}{2}$, the collision operator $C_{22}$ can be expressed as a integration on a sphere, following the strategy represented in \cite{Carleman:1933:TEI,Villani:2002:RMT} for the classical Boltzmann operator.

In our case $\mathcal{E}_p$ is approximated by the Bogoliubov dispersion law \eqref{def-E},  the collision operators are integrals on much more complicated manifolds. Classical techniques used for the classical Boltzmann equation cannot be applied. New estimates on 
 energy manifolds, such as,
 $$\mathcal{E}_{p_1}=\mathcal{E}_{p_2}+\mathcal{E}_{p_3}, \ \ \ p_1=p_2+p_3. $$
 are then required. 
 
  Moreover, \eqref{UU} conserves the mass of the solution, while the full equation \eqref{QB} does not. As a consequence, estimating the mass of the solution to \eqref{QB} is  a crucial task.

Let us emphasize that due to the presence of the $C_{12}$ term, which is much more complicated than the classical Boltzmann collision operators due to its non-symmetry structure, \eqref{QB} is much more complicated than the  Boltzmann-Nordheim (Uehling-Uhlenbeck) equation, as it has already been noticed in a series of beautiful works \cite{ArkerydNouri:2012:BCI,ArkerydNouri:AMP:2013,ArkerydNouri:2015:BCI}, where the study of $C_{12}$ has been done for the first time. In order to study $C_{12}$,  the authors of \cite{AlonsoGambaBinh} have developed special techniques, based on the ideas of propagation and creation of exponential and polynomial moments for only the collision operator $C_{12}$. In our case, as it is shown later, the mass of the solution of \eqref{QB} is not conserve and the presence of $C_{12}+C_{22}$ makes the problem different from the case considered in \cite{AlonsoGambaBinh}.  Indeed, in the works \cite{ArkerydNouri:2012:BCI,ArkerydNouri:AMP:2013,ArkerydNouri:2015:BCI},  the authors impose a  cut-off on $C_{12}$ for small momentums. As a consequence, the authors approximate the Bogoliubov dispersion relation by a simplified one, which significantly simplifies the analysis. However, we do not need to impose this cut-off on $C_{12}$. Since $C_{12}$ is used in its most general form in our case, it is very important that the Bogoliubov dispersion relation is kept. 

Moreover, we tried the strategy of \cite{Arkeryd:1972:OBE} used to prove the existence and uniqueness of classical solutions to the classical Boltzmann equation, but it does not seem applicable. The main reason is that for \eqref{QB}, we can only establish bounds on  weighted $L^1$ norms of the solution. Bounds on weighted $L^p$ norms is still an open question and the H-theorem is not useful in this case. As a consequence, the  Dunford-Pettis theorem cannot be used and the strategy of \cite{Arkeryd:1972:OBE}, even though very powerful, cannot be directly applied. We then have to develop new ideas to show that \eqref{QB} indeed has a global and classical solution in weighted $L^1$  spaces. Our result is different from the results considered in \cite{Lu:2004:OID,Lu:2005:TBE,Lu:2013:TBE} about the existence and uniqueness of global weak solution for the Boltzmann-Nordheim (Uehling-Uhlenbeck) equation. Note that our method also works for the collision kernel $K^{12}$ of the more complicated form $C|p_1|^\varrho|p_2|^\varrho|p_3|^\varrho$, $(\varrho\ge 0)$.

Let us define
\begin{equation}\label{L1Space}
L^1_m(\mathbb{R}^3)=\left\{~~f~~\Big{|}~~ \|f\|_{L^1_m}:= \int_{\mathbb{R}^3}|p|^m|f(p)|dp<\infty\right\},
\end{equation}
\begin{equation}\label{L1Space1}
\mathcal{L}^1_{m}(\mathbb{R}^3)=\left\{~~f~~\Big{|}~~ \|f\|_{\mathcal{L}^1_{m}}:= \int_{\mathbb{R}^3}|f(p)|\mathcal{E}_p^{m/2}dp<\infty\right\},
\end{equation}
\begin{equation}\label{L1Space1}
\mathbb{L}^1_{m}(\mathbb{R}^3)=\left\{~~f~~\Big{|}~~ \|f\|_{\mathbb{L}^1_{m}}:= \int_{\mathbb{R}^3}|f(p)|\left(1+\mathcal{E}_p^{m/2}\right)dp<\infty\right\}.
\end{equation}
Our main result is the following theorem.
\begin{theorem}\label{Theorem:ExistenceKinetic}
Suppose that $f_0(p)=f_0(|p|)\geq 0$, and
$$\int_{\mathbb{R}^3}(1+\mathcal{E}_p)f_0(p)dp<\infty.$$

For any time interval $[0,T]$, let $n$, $n^*$ be two positive integers, $n>1$, $n_*$ is an odd number, $n^*>n+4$. For any positive number $\mathcal{R}$, there exists $\mathfrak{c}_{n_*}$ depending on $\mathcal{R}$ and $T$  satisfying $\mathfrak{c}_{n_*}(\mathcal{R},T)$ tends to infinity as $T$ or $\mathcal{R}$ tends to infinity, such that if
$$\int_{\mathbb{R}^3}\mathcal{E}_p^{n^*}f_0(p)<\mathfrak{c}_{n_*}(\mathcal{R},0), \int_{\mathbb{R}^3}f_0(p)<\mathcal{R},$$
then there exists a unique classical positive  radial solution $$f(t,p)=f(t,|p|)\in C^0([0,T],\mathbb{L}^1_{2n}(\mathbb{R}^3))\cap C^1((0,T),\mathbb{L}^1_{2n}(\mathbb{R}^3)) $$ of \eqref{QB} where $C_{22}$ is of the form \eqref{C22New}, $K_{22}$ is of the form \eqref{K22Bound}, $\mathcal{E}_p$ is  the Bogoliubov dispersion law and $N_c$ is assume to the the constant $n_c$.
\end{theorem}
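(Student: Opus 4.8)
The plan is to construct the solution by a careful iteration-and-fixed-point argument carried out on a well-chosen invariant set, since, as the authors emphasize, neither the Dunford--Pettis/weak-compactness route nor the moment-propagation strategy of \cite{AlonsoGambaBinh} survives the presence of $C_{22}$. First I would rewrite the collision operators in the polar (radial) variables, using the representation \eqref{C22New} for $C_{22}$ and the analogous reduction of $C_{12}$ to a double integral over the Bogoliubov energy manifold $\mathcal{E}_{p_1}=\mathcal{E}_{p_2}+\mathcal{E}_{p_3}$, $p_1=p_2+p_3$; the first technical lemma to establish is that, after integrating out the momentum and energy $\delta$-functions, the resulting kernels are bounded on $\{|p_i|\ge p_0\}$ (for $C_{22}$, by \eqref{K22Bound}) and have at worst a mild, integrable growth in the energy variable coming from the Jacobian of the Bogoliubov change of variables. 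This reduces everything to estimates of the schematic form $\|C_{12}[f]\|_{\mathbb{L}^1_{2n}}+\|C_{22}[f]\|_{\mathbb{L}^1_{2n}}\le \Psi(\|f\|_{\mathbb{L}^1_{2n^*}})$ together with a one-sided (Gronwall-type) a priori differential inequality for the moments $M_k(t)=\int \mathcal{E}_p^{k}f(t,p)\,dp$.

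Next I would set up the iteration. Split $Q[f]=Q^+[f]-f\,L[f]$ into gain and loss parts (the loss frequency $L[f]$ is linear in $f$ after the standard rearrangement), and define $f^{(j+1)}$ by the Duhamel/mild formulation $f^{(j+1)}(t,p)=f_0(p)\exp\!\big(-\int_0^t L[f^{(j)}](s,p)\,ds\big)+\int_0^t \exp\!\big(-\int_s^t L[f^{(j)}]\big)\,Q^+[f^{(j)}](s,p)\,ds$, which automatically preserves positivity and radial symmetry. The heart of the argument is to exhibit, for a suitable $T$ (or for the prescribed $T$ once $\mathfrak{c}_{n_*}(\mathcal{R},T)$ is chosen large enough), a closed ball $\mathcal{B}=\{f\ge 0\text{ radial}: \sup_{[0,T]}\int f\le \tilde{\mathcal{R}},\ \sup_{[0,T]}M_{n^*}(t)\le \tilde{\mathfrak{c}}\}$ which is invariant under this map. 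Mass control is the crux: since mass is \emph{not} conserved, one must show $\frac{d}{dt}\int f\,dp$ is controlled — here the sign structure of $C_{12}$ (the condensate-growth term) and the bounds on the truncated $C_{22}$ kernel must be combined to get $\frac{d}{dt}\int f\le \Phi(\int f, M_{n^*})$, and simultaneously $\frac{d}{dt}M_{n^*}(t)\le C(1+M_{n^*}(t))^{\theta}$ with constants controlled by the mass; the smallness hypothesis $M_{n^*}(0)<\mathfrak{c}_{n_*}(\mathcal{R},0)$ is exactly what keeps the coupled system of differential inequalities from blowing up before time $T$, and it forces $\mathfrak{c}_{n_*}(\mathcal{R},T)\to\infty$ as $\mathcal{R}$ or $T$ grows. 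Once $\mathcal{B}$ is invariant, uniform bounds give equicontinuity in time and one passes to the limit $f^{(j)}\to f$ — either by a contraction estimate directly in $\mathbb{L}^1_{2n}$ (subtracting two iterates and using bilinearity plus the moment bounds to absorb the kernel growth, which needs the gap $n^*>n+4$) or, if contraction is too delicate, by an Arzelà--Ascoli argument in a weaker weighted space followed by an interpolation/uniqueness upgrade.

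Uniqueness I would handle by the same bilinear difference estimate: if $f,g$ are two solutions in $C^0([0,T],\mathbb{L}^1_{2n})$, then $w=f-g$ satisfies $\|w(t)\|_{\mathbb{L}^1_{2n}}\le \int_0^t \big(\|Q[f]-Q[g]\|_{\mathbb{L}^1_{2n}}\big)\,ds\le \int_0^t C\big(1+\|f\|_{\mathbb{L}^1_{2n^*}}+\|g\|_{\mathbb{L}^1_{2n^*}}\big)\|w(s)\|_{\mathbb{L}^1_{2n}}\,ds$, and Gronwall closes it — but this requires the higher moments $M_{n^*}$ of both solutions to be finite and propagated, which is why the theorem carries the $\mathcal{E}_p^{n^*}f_0$ hypothesis rather than just $\mathcal{E}_pf_0\in L^1$. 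The regularity $f\in C^1((0,T),\mathbb{L}^1_{2n})$ then follows by differentiating the mild formulation and checking $t\mapsto Q[f(t)]$ is continuous into $\mathbb{L}^1_{2n}$, using the boundedness of the kernels and dominated convergence.

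I expect the main obstacle to be the a priori moment and mass estimates for $C_{12}$: because $C_{12}$ is non-symmetric and is integrated over the curved Bogoliubov energy surface $\mathcal{E}_{p_1}=\mathcal{E}_{p_2}+\mathcal{E}_{p_3}$ (rather than a sphere), the Carleman-type parametrization used for the classical Boltzmann operator does not apply, and one must derive by hand the change-of-variables formula and the resulting weight factors, then show that testing against $\mathcal{E}_p^{n^*}$ does not produce an uncontrolled positive contribution. Controlling $\frac{d}{dt}\int f\,dp$ with the correct sign — so that the non-conservation of mass is an estimate, not a catastrophe — is the delicate point on which the whole scheme, and the precise dependence of $\mathfrak{c}_{n_*}$ on $\mathcal{R}$ and $T$, ultimately rests.
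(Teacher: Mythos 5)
Your overall skeleton (mild/Duhamel iteration on an invariant ball, with the mass-growth and $n^*$-moment differential inequalities as the a priori input) is a genuinely different route from the paper, which instead invokes an abstract ODE theorem on a Banach space (Theorem \ref{Theorem:ODE}): existence comes from a sub-tangency condition on a convex closed set $\mathcal{S}_T$ encoding positivity, the mass bound, \emph{exact} energy conservation, and the $n^*$-moment bound, rather than from invariance of a ball under a Duhamel map. The analytic ingredients you identify as the crux (energy-surface estimates replacing the Carleman sphere, exponential mass growth, finite-time moment propagation with the gap $n^*>n+4$) are indeed exactly the ones the paper develops in Sections \ref{Sec:EnergySurface}--\ref{Sec:MomentEstimates}.

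However, there is a genuine gap in your contraction and uniqueness step. You assert the Lipschitz-type difference estimate $\|Q[f]-Q[g]\|_{\mathbb{L}^1_{2n}}\le C\bigl(1+\|f\|_{\mathbb{L}^1_{2n^*}}+\|g\|_{\mathbb{L}^1_{2n^*}}\bigr)\|f-g\|_{\mathbb{L}^1_{2n}}$, but the collision operators do not satisfy it: testing $C_{12}[f]-C_{12}[g]$ against $|p_1|^{n}$ and integrating the term carrying $|f(p_1)-g(p_1)|$ over the energy surface $S_{p_1}^0$ produces the weight $|p_1|^{n+3}$ (this is the term $N_3$ in Proposition \ref{Propo:HolderC12}), so the honest estimate is $\|C_{12}[f]-C_{12}[g]\|_{L^1_n}\le C(\|f-g\|_{L^1_{n+3}}+\|f-g\|_{L^1})$, i.e.\ a loss of three momentum powers. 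Interpolating $\|f-g\|_{L^1_{n+3}}$ between $L^1$ and $L^1_{n+4}$ (with the $n^*$-moments controlling the top norm) only yields H\"older continuity with exponent $1/(n+4)<1$. A H\"older right-hand side closes neither the contraction for your iterates nor the Gronwall argument for uniqueness ($\dot w\le Cw^{\beta}$ with $\beta<1$ and $w(0)=0$ admits nonzero solutions). The paper resolves exactly this by proving, in addition to H\"older continuity, a \emph{one-sided} Lipschitz (dissipativity) estimate $\bigl[Q[u]-Q[v],u-v\bigr]\le C\|u-v\|$: inserting $\mathrm{sign}(f-g)$ makes the moment-losing term appear with a favorable sign, via $\mathcal{E}_{p_2}^k+\mathcal{E}_{p_3}^k-(\mathcal{E}_{p_2}+\mathcal{E}_{p_3})^k\le -k\,\mathcal{E}_{p_2}\mathcal{E}_{p_3}^{k-1}\le 0$ on the resonance manifold, so the weight $1+|p|+|p|^3+|p|^{2n}-|p|^{2n+1}\min\{1,|p|\}^{2n+7}$ is uniformly bounded and the difference inequality is genuinely linear. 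Without this sign structure (or some substitute for it), your scheme proves at best existence of \emph{some} limit point of the iteration, not uniqueness, and the contraction step of the construction itself is unjustified. Two smaller points: your loss frequency $L[f]=Q^-[f]$ is affine-plus-quadratic in $f$, not linear; and your invariant ball must also carry an energy bound, since the mass estimate $\frac{d}{dt}\int f\le C^*(1+\int f)$ is proved using conservation of energy, which is not exact along Duhamel iterates.
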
~~\\

One of the key ingredients of the proof of Theorem \ref{Theorem:ExistenceKinetic} is the following theorem about the existence and unique of solutions to ODEs on Banach spaces. The theorem has an  inspiration from  \cite{ABCL:2016,AlonsoGambaBinh,Bressan,Martin}. Notice that different from the previous cases considered in \cite{ABCL:2016,AlonsoGambaBinh}, we do not have the propagation of polynomial and exponential moments of the solution, as a consequence, we introduce new ideas to deal with this difficulty. Those ideas are discussed in Remarks \ref{RM1}, \ref{RM2} and \ref{RM3}.
\begin{theorem}\label{Theorem:ODE} Let $[0,T]$ be a time interval, $E:=(E,\|\cdot\|)$ be a Banach space, $\mathcal{S}_T$  be a bounded, convex and closed subset of $E$, and $Q:\mathcal{S}_T\rightarrow E$ be an operator  satisfying the following properties:
\begin{itemize}
\item [$(\mathfrak{A})$] Let $\|\cdot\|_*$ be a different norm of $E$, satisfying $\|\cdot\|_*\leq C_E\|\cdot\|$ for some universal constant $C_E$, and the function
\begin{eqnarray*}
|\cdot |_*: E&\longrightarrow&\mathbb{R}\\
  u&\longrightarrow& |u|_*,
\end{eqnarray*}
satisfying $$|u+v|_*\le |u|_*+|v|_*, \mbox{ and } \ \ \ |\alpha u|_*=\alpha|u|_*$$ for all $u$, $v$ in $E$ and $\alpha\in\mathbb{R}_+$.
\\ Moreover, $$|u|_*=\|u\|_*, \forall u\in\mathcal{S}_T,\ \  |u|_*\leq\|u\|_*\leq C_E\|u\|, \forall u\in E,$$ 
 $$|Q(u)|_*\le C_*(1+|u|_*), \forall u\in \mathcal{S}_T,$$ and $$\mathcal{S}_T\subset \overline{B_*\Big(O,(2R_*+1)e^{(C_*+1)T}\Big)}:=\overline{\Big\{u\in E \Big{|} \|u\|_*\le (2R_*+1)e^{(C_*+1)T}\Big\}},$$ for some positive constant $R_*\ge 1$.
\item [$(\mathfrak{B})$] Sub-tangent condition
\begin{equation*}
\liminf_{h\rightarrow0^+}h^{-1}\text{dist}\big(u+hQ[u],\,\mathcal{S}_T\big)=0,\qquad \forall\,u\in\mathcal{S}_T\cap B_*\Big(O,(2R_*+1)e^{(C_*+1)T}\Big)\,,
\end{equation*}

\item [$(\mathfrak{C})$] H\"{o}lder continuity condition
\begin{equation*}
\big\|Q[u] - Q[v]\big\| \leq C\|u - v\|^{\beta},\quad \beta\in(0,1), \quad \forall\,u,v\in\mathcal{S}_T\,,
\end{equation*}
\item [$(\mathfrak{D})$] one-side Lipschitz condition
\begin{equation*}
\big[ Q[u] - Q[v], u - v \big] \leq C\|u - v\|,\qquad \forall\,u,v\in\mathcal{S}_T\,,
\end{equation*}
where $$\big[ \varphi,\phi \big]: = \lim_{h\rightarrow 0^{-}}h^{-1}\big(\| \phi + h\varphi \| - \| \phi \| \big).$$
\end{itemize}
Then the equation 
\begin{equation}\label{Theorem_ODE_Eq}
\partial_t u=Q[u] \mbox{ on } [0,T]\times E,~~~~u(0)=u_0 \in \mathcal{S}_T\cap B_*(O,R_*)
\end{equation}
has a unique solution in $C^1((0,T),E)\cap C([0,T],\mathcal{S}_T)$.
\end{theorem}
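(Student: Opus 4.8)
The plan is to prove Theorem~\ref{Theorem:ODE} by the classical scheme for ordinary differential equations in Banach spaces with a flow‑invariant constraint set (in the spirit of Martin and Bressan, adapted as in \cite{ABCL:2016,AlonsoGambaBinh}), the new feature being that condition $(\mathfrak{A})$ replaces the a priori moment propagation used in those references. The argument splits into four stages: (i) construction of $\varepsilon$‑approximate solutions that remain in $\mathcal{S}_T$ over all of $[0,T]$; (ii) a uniform a priori bound in the weaker norm $\|\cdot\|_*$ ensuring the approximations never leave the ball on which the sub‑tangent condition $(\mathfrak{B})$ is assumed; (iii) a Cauchy estimate for the family $\{u_\varepsilon\}$ using the one‑side Lipschitz condition $(\mathfrak{D})$ and the H\"older continuity $(\mathfrak{C})$; (iv) uniqueness, again from $(\mathfrak{D})$.

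For (i), fix $\varepsilon>0$. Given any $u\in\mathcal{S}_T$ lying in $\overline{B_*(O,(2R_*+1)e^{(C_*+1)T})}$, the sub‑tangent condition $(\mathfrak{B})$ produces a step $h=h(u)\in(0,\varepsilon]$ and a vector $z$ with $\|z\|\le\varepsilon$ such that $u+h\,Q[u]+h\,z\in\mathcal{S}_T$. Using this selection one builds, by a standard Zorn's‑lemma (transfinite‑induction) argument, a partition $0=t_0<t_1<\cdots$ of a maximal subinterval together with a piecewise‑affine curve $u_\varepsilon$ satisfying $u_\varepsilon(t_0)=u_0$, $u_\varepsilon(t_{k+1})=u_\varepsilon(t_k)+(t_{k+1}-t_k)\big(Q[u_\varepsilon(t_k)]+z_k\big)$, and $u_\varepsilon$ affine on each $[t_k,t_{k+1}]$; convexity and closedness of $\mathcal{S}_T$ give $u_\varepsilon(t)\in\mathcal{S}_T$ throughout. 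For (ii), since $|\cdot|_*$ is subadditive and positively homogeneous and $|Q(u)|_*\le C_*(1+|u|_*)$, the nodal values obey $|u_\varepsilon(t_{k+1})|_*\le (1+C_*\Delta t_k)|u_\varepsilon(t_k)|_*+\Delta t_k(C_*+C_E\varepsilon)$, where $\Delta t_k=t_{k+1}-t_k$; a discrete Gr\"onwall inequality together with $|u_0|_*\le\|u_0\|_*\le R_*$ gives, for $\varepsilon$ small, $|u_\varepsilon(t)|_*\le (2R_*+1)e^{(C_*+1)t}$, hence $\|u_\varepsilon(t)\|_*=|u_\varepsilon(t)|_*\le(2R_*+1)e^{(C_*+1)T}$ because $u_\varepsilon(t)\in\mathcal{S}_T$. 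This is exactly what is needed to keep reapplying $(\mathfrak{B})$, so the maximal subinterval is all of $[0,T]$.

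For (iii), compare two approximate solutions $u_\varepsilon$, $u_{\varepsilon'}$. Writing $g(t)=\|u_\varepsilon(t)-u_{\varepsilon'}(t)\|$ and differentiating the piecewise‑affine difference, one bounds the upper Dini derivative $D^+g(t)$ by the bracket $\big[\dot u_\varepsilon(t)-\dot u_{\varepsilon'}(t),\,u_\varepsilon(t)-u_{\varepsilon'}(t)\big]$; replacing $Q[u_\varepsilon(t_k)]$ by $Q[u_\varepsilon(t)]$ up to an error controlled by $(\mathfrak{C})$ and by the modulus of continuity of $u_\varepsilon$ (which is $O(h^\beta)$ since $\dot u_\varepsilon$ is bounded in $\|\cdot\|$ on the bounded set $\mathcal{S}_T$), and using $(\mathfrak{D})$ on the main term, one gets $D^+g(t)\le C\,g(t)+\omega(\varepsilon,\varepsilon')$ with $\omega\to0$. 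Gr\"onwall then shows $\{u_\varepsilon\}$ is Cauchy in $C([0,T],E)$; its limit $u$ lies in $\mathcal{S}_T$ for every $t$ by closedness, is continuous, and—passing to the limit in $u_\varepsilon(t)=u_0+\int_0^t(Q[u_\varepsilon(s)]+z_\varepsilon(s))\,ds$ using uniform convergence and the continuity of $Q$ implied by $(\mathfrak{C})$—satisfies $u(t)=u_0+\int_0^t Q[u(s)]\,ds$; since $s\mapsto Q[u(s)]$ is then continuous, $u\in C^1((0,T),E)$ with $\partial_t u=Q[u]$. For (iv), if $u,v$ are two such solutions with $u(0)=v(0)$, the standard inequality $\tfrac{d}{dt}\|u(t)-v(t)\|\le[\,\partial_t u(t)-\partial_t v(t),\,u(t)-v(t)\,]$ combined with $(\mathfrak{D})$ yields $\tfrac{d}{dt}\|u-v\|\le C\|u-v\|$, so $u\equiv v$ by Gr\"onwall.

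The \textbf{main obstacle} is stage (iii): because $Q$ is only H\"older—not Lipschitz—in $\|\cdot\|$ and $E$ carries no compactness (no reflexivity, no Dunford--Pettis), one cannot run a contraction argument nor extract convergent subsequences, so the only route to a genuine solution is to prove directly that the Euler‑type approximants form a Cauchy net; the one‑side Lipschitz condition $(\mathfrak{D})$ is precisely the substitute for full Lipschitz continuity that makes this succeed, while the H\"older bound is spent only on passing to the limit inside the integral and on the $O(h^\beta)$ consistency error. Keeping this estimate honest requires the a priori bound of stage (ii), which is why the two‑norm structure of $(\mathfrak{A})$ is imposed: $\|\cdot\|$ governs the quantitative continuity estimates, whereas the weaker $\|\cdot\|_*$ controls the (time‑growing) size of the invariant region where $(\mathfrak{B})$ is valid.
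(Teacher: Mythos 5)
Your proposal follows the same route as the paper's proof: Euler-type $\varepsilon$-approximate solutions built from the sub-tangent condition $(\mathfrak{B})$ and extended to $[0,T]$ via a maximality argument; a discrete Gr\"onwall bound in the weak norm $\|\cdot\|_*$ exploiting $|Q(u)|_*\le C_*(1+|u|_*)$ to keep the approximants strictly inside the ball on which $(\mathfrak{B})$ is postulated; a Cauchy estimate for the approximants obtained from $(\mathfrak{D})$ plus the H\"older consistency error from $(\mathfrak{C})$; and uniqueness from $(\mathfrak{D})$ alone. Your diagnosis of the roles of the two norms and of why $(\mathfrak{D})$ must replace a contraction argument matches the paper's reasoning precisely, so the proof is correct and essentially identical in structure to the authors'.
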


\begin{remark}\label{RM1}
Note that for \eqref{QB}, the mass is not conserved. We indeed prove that it grows exponentially in Section \ref{Sec:Mass}. As a consequence, in  Theorem \ref{Theorem:ODE}, besides the norm $\|\cdot\|$ of the Banach space $E$, we also need the second norm $\|\cdot\|_*$ and the ball $$B_*\Big(O,(2R_*+1)e^{(C_*+1)T}\Big),$$ 
which take the crucial  role in controlling the mass of the solution on the time interval $[0,T]$. 
\\ Thanks to the control on the mass, we can later  prove that the collision operator $Q$ in \eqref{QB} is indeed Holder continuous, which means Condition $(\mathfrak{C})$ of Theorem \ref{Theorem:ODE} is satisfied. 
\end{remark}

\begin{remark}\label{RM2} In Theorem  \ref{Theorem:ODE}, $|\cdot|_*$ is a function from $E$ to $\mathbb{R}$, that coincides with the second norm in $\|\cdot\|_*$ in the set $\mathcal{S}_T$. This is due to the fact that, we will choose $\mathcal{S}_T$ to be a subset of the positive cone of $E=\mathbb{L}^1_{2n}(\mathbb{R}^3)$. 
\end{remark}

\begin{remark}\label{RM3} In Condition $(\mathfrak{B})$ of Theorem  \ref{Theorem:ODE}, we do not consider the boundary case where $$\|u\|_*=(2R_*+1)e^{(C_*+1)T}.$$ Our idea of the proof is to start with an initial condition $u(0)$ in the intersection of $\mathcal{S}_T$ and the ball $B_*(O,R_*)$, and make $u(t)$ evolve as long as $$\|u(t)\|_*<(2R_*+1)e^{(C_*+1)T}.$$ This idea is realized, in a discrete way,  in Part 2 of the proof of  Theorem  \ref{Theorem:ODE}.
\end{remark}

The plan of the paper is as follows: 
\begin{itemize}
\item Section \ref{Sec:QB} is devoted to the proof of Theorem \ref{Theorem:ExistenceKinetic}. This proof is divided into several steps:
\begin{itemize}
\item In Section \ref{Sec:Conservation}, basic properties of Equation \eqref{QB} are presented. We prove that solutions of \eqref{QB} conserve momentum and energy in Section \ref{Sec:MomentEnergy}. However, different from the Boltzmann-Nordheim (Uehling-Uhlenbeck) equation \eqref{UU}, the mass is not conserved for the full equation. Therefore, estimating the mass is a crucial task. Notice that different from previous studies (cf. \cite{EscobedoVelazquez:2015:FTB}), where the energy is $$\mathcal{E}_p=\frac{p^2}{2m};$$ in our case, due to the presence of the condensate, the energy is approximated by the Bogoliubov dispersion law \eqref{def-E}. This requires new estimates on the energy surfaces.
Section \ref{Sec:EnergySurface} is devoted to such estimates. Based on these estimates, in Section \ref{Sec:Mass}, we provide a bound of the mass of solutions to Equation \eqref{QB} on a finite time interval $[0,T].$
\item As a key ingredient of the proof of Theorem \ref{Theorem:ExistenceKinetic}, we show in Section \ref{Sec:MomentEstimates} that polynomial moments with arbitrary high orders of solutions of \eqref{QB} are bounded on a finite time interval $[0,T]$, which is the content of Proposition \ref{Propo:MomentsPropa}. Note that different from the very low temperature regimes considered in \cite{AlonsoGambaBinh}, in our regimes, polynomial moments are not propagating an created on $[0,\infty)$. The strategy of the proof of the proposition is to estimate moments of the collision operators $C_{12}$ and $C_{22}$, which are done in Sections \ref{Sec:C12} and \ref{Sec:C22} using results on energy surfaces of Section \ref{Sec:EnergySurface}. Based on these estimates, we obtain a differential inequality for finite time moments of high orders in Section \ref{Sec:MomentsEstimate}, which leads to the desired results of Proposition \ref{Propo:MomentsPropa}. 
\item In Section \ref{Sec:HolderEstimate}, we prove that the collision operators $C_{12}$ and $C_{22}$ are Holder continuous, thanks to Proposition \ref{Propo:MomentsPropa}. In order to do this, we decompose $C_{22}$ as the sum of two  operators $C_{22}^1$ and $C_{22}^2$, where the first one is of second order and the second one is of third order. The operators $C_{12}$, $C_{22}^1$ and $C_{22}^2$ are proven to be Holder continuous in Sections \ref{Sec:HolderEstimateC12}, \ref{Sec:HolderEstimateC221} and \ref{Sec:HolderEstimateC222}, respectively, on any time interval $[0,T]$. 
\item Using Theorem \ref{Theorem:ODE}, we prove in Section \ref{Sec:ExistenceKinetic} that Equation \eqref{QB} has a unique positive, radial solution on any time interval $[0,T]$.
\end{itemize}
\item The proof of Theorem \ref{Theorem:ODE}  is given in Section \ref{Appendix}.
\end{itemize}
\section{The quantum Boltzmann equation}\label{Sec:QB}
\subsection{Mass, momentum and energy of  solutions of the kinetic equation}\label{Sec:Conservation}
We will  make use of the following notation 
\begin{equation}\label{Def:MomentOrderk}
m_k[f]=\int_{\mathbb{R}^3}\mathcal{E}^k(p_1)f(p_1)dp_1.
\end{equation}
For convenience, we introduce 
\begin{equation}\label{CollisionOperator}
C_{12}[f]=C^1_{12}[f]+C^2_{12}[f] 
\end{equation}
with 
$$
\begin{aligned}C^1_{12}[f]&:=\iint_{\mathbb{R}^3\times\mathbb{R}^3} \mathcal{K}^{12}(p_1, p_2, p_3)
\Big [f(p_2)f(p_3)-f(p_1)(f(p_2)+f(p_3)+1) \Big ]dp_2dp_3
\\
C^2_{12}[f]&:=-2\iint_{\mathbb{R}^3\times\mathbb{R}^3}\mathcal{K}^{12}(p_2, p_1, p_3)\Big [f(p_1)f(p_3)-f(p_2)(f(p_1)+f(p_3)+1)\Big ]dp_2dp_3,
\end{aligned}$$
where the collision kernel is defined by 
$$\mathcal{K}^{12}(p_1, p_2, p_3)=\lambda_1 n_c K^{12}(p_1,p_2,p_3)\Big(\delta (\mathcal{E} (p_1)-\mathcal{E} (p_2)-\mathcal{E} (p_3))  \delta (p_1-p_2-p_3)\Big ).$$
We also define the energy surfaces/resonance manifolds
\begin{equation}\label{def-Sp}\begin{aligned}
S_p^0:& = \Big \{ p_*\in \mathbb{R}^3~:~\mathcal{E}(p-p_*) + \mathcal{E}(p_*)= \mathcal{E}(p) \Big\}
\\
S^1_p : &= \Big \{p_*\in \mathbb{R}^3~:~ \mathcal{E}(p+p_*) = \mathcal{E}(p)+\mathcal{E}(p_*) \Big\}\\
S^2_p : &= \Big \{p_*\in \mathbb{R}^3~:~ \mathcal{E}(p_*) = \mathcal{E}(p)+\mathcal{E}(p_*-p) \Big\}
\end{aligned}\end{equation}
for all $p \in \mathbb{R}^3 \setminus\{0\}$ and the functions
\begin{equation}\label{def-Hp}\begin{aligned}
H_0^p(x):& = \mathcal{E}(p-x) + \mathcal{E}(x) - \mathcal{E}(p),
\\
H_1^p(x) : &= \mathcal{E}(p+x) - \mathcal{E}(p)-\mathcal{E}(x),\\
H_2^p(x) : &= \mathcal{E}(x) -\mathcal{E}(p)-\mathcal{E}(x-p).
\end{aligned}\end{equation}

 Set
$$\bar{K}^{12}(p_1, p_2, p_3)=\lambda_1 n_c{K}^{12}(p_1, p_2, p_3),$$
by the nature of the Dirac delta function, the collision operators can be expressed under the form of the following surface integrals
$$
\begin{aligned}C^1_{12}[f]&:=\int_{S_{p_1}^0}\bar{K}^{12}_0(p_1, p_1 - p_3, p_3)\Big [f(p_1 - p_3)f(p_3)-f(p_1)(f(p_1 - p_3)+f(p_3)+1) \Big ]\; d\sigma(p_3)
\\
C^2_{12}[f]&:=2\int_{S_{p_1}^1}{\bar{K}_1^{12}(p_1 + p_3, p_1, p_3)}\Big [f(p_1 + p_3)(f(p_1)+f(p_3)+1) - f(p_1)f(p_3)\Big ]\; d\sigma(p_3),
\end{aligned}$$
where
$$ \bar{K}^{12}_0(p_1, p_1 - p_3, p_3)=\frac{\bar{K}^{12}(p_1, p_1 - p_3, p_3)
}{|\nabla  H^0_p(p_3)|},\ \ \  \ \bar{K}^{12}_1(p_1 + p_3, p_1, p_3)=\frac{\bar{K}^{12}(p_1 + p_3, p_1, p_3)}{|\nabla H^1_p(p_3)|}.$$

We also split $C_{12}[f]$ as the sum of gain and loss terms: 
\begin{equation}\label{Def:C12gainloss} 
C_{12}[f] = C_{12}^\mathrm{gain}[f]  -  C_{12}^\mathrm{loss}[f]\end{equation}
with 
$$
\begin{aligned}C_{12}^\mathrm{gain}[f]&:=\int_{S_{p_1}^0} \bar{K}^{12}_0(p_1, p_1 - p_3, p_3)f(p_1 - p_3)f(p_3) \; d\sigma(p_3)
\\&\quad + 2\int_{S_{p_1}^1}\bar{K}^{12}_1(p_1 + p_3, p_1, p_3)f(p_1 + p_3) \Big (f(p_1)+f(p_3)+1\Big )\; d\sigma(p_3),
\\
C_{12}^\mathrm{loss}[f]&:= fC_{12}^-[f],\\
C_{12}^-[f]&:=  \int_{S_{p_1}^0} \bar{K}^{12}_0(p_1, p_1 - p_3, p_3) \Big (f(p_1 - p_3)+f(p_3)+1\Big ) \; d\sigma(p_3) 
\\&\quad + 2\int_{S_{p_1}^1}\bar{K}^{12}_1(p_1 + p_3, p_1, p_3) f(p_3) \; d\sigma(p_3).
\end{aligned}$$

Similar as for $C_{12}$, we also split $C_{22}$ into gain and loss operators, as follows
\begin{equation}\label{C22GainLoss}
C_{22}[f]=C_{22}^{\mathrm{gain}}[f]-C_{22}^{\mathrm{loss}}[f],
\end{equation}
where
\begin{eqnarray*}
C_{22}^{\mathrm{gain}}[f]&:=&\lambda_2\iiint_{\mathbb{R}^{3\times3}}\mathcal{K}^{22}({p}_1,{p}_2,{p}_3,p_4)(1+f({p}_1))(1+f({p}_2))f({p}_3)f({p}_4)d{p}_2d{p}_3d{p}_4,\\\label{C22Loss}
C_{22}^{\mathrm{loss}}[f]&:=&f C_{22}^{-}[f],\\\nonumber
C_{22}^{-}[f]&:=&\lambda_2\iiint_{\mathbb{R}^{3\times3}}\mathcal{K}^{22}({p}_1,{p}_2,{p}_3,p_4)f(p_2)(1+f({p}_3))(1+f({p}_4))d{p}_2d{p}_3d{p}_4,\end{eqnarray*}
and
$$\mathcal{K}^{22}({p}_1,{p}_2,{p}_3,p_4)=\lambda_2 K^{22}({p}_1,{p}_2,{p}_3,p_4)\delta({p}_1+{p}_2-{p}_3-{p}_4)\delta(\mathcal{E}_{{p}_1}+\mathcal{E}_{{p}_2}-\mathcal{E}_{{p}_3}-\mathcal{E}_{{p}_4}).$$
 We also split $Q$ into the sum of a gain and a loss operators
\begin{equation}\label{QGainLoss}
Q[f]=Q^{\mathrm{gain}}[f]- Q^{\mathrm{loss}}[f]\,,
\end{equation}
where 
\begin{equation*}\label{Subtangent:E2}
Q^{\mathrm{gain}}[f]=C_{12}^{\mathrm{gain}}[f]+C_{22}^{\mathrm{gain}}[f],
\end{equation*}
\begin{equation*}\label{Subtangent:E3}
Q^{\mathrm{loss}}[f]=C_{12}^{\mathrm{loss}}[f]+C_{22}^{\mathrm{loss}}[f],
\end{equation*}
and 
$$Q^{\mathrm{loss}}[f]=fQ^{-}[f],$$
with
$$Q^{-}[f]=C_{12}^{-}[f]+C_{22}^{-}[f].$$
\subsubsection{Conservation of momentum and energy and  the H-Theorem}\label{Sec:MomentEnergy}
In this section, we obtain the basic properties of smooth solutions of \eqref{QB}.  
\begin{lemma}\label{Lemma:WeakFormulation}
There holds 
\begin{eqnarray*}
&&\int_{\mathbb{R}^3}Q[f](p_1)\varphi(p_1)dp_1\\
&=&\iiint_{\mathbb{R}^3\times\mathbb{R}^3\times\mathbb{R}^3} R_{12}[f](p_1, p_2, p_3) 
\Big( \varphi(p_1)-\varphi(p_2)-\varphi(p_3) \Big) \; dp_1dp_2dp_3\\
&&+\frac{1}{2}\iiint_{\mathbb{R}^3\times\mathbb{R}^3\times\mathbb{R}^3\times\mathbb{R}^3} R_{22}[f](p_1, p_2, p_3, p_4) 
\Big( \varphi(p_1) +\varphi(p_2)-\varphi(p_3)-\varphi(p_4) \Big) \; dp_1dp_2dp_3dp_4
,
\end{eqnarray*}
for any smooth test function $\varphi$, where
 \begin{eqnarray*}
R_{12}[f](p_1,p_2,p_3)&=&\lambda_1 n_c K^{12}({p}_1,{p}_2,{p}_3)\delta({p}_1-{p}_2-{p}_3)\delta(\mathcal{E}_{{p}_1}-\mathcal{E}_{{p}_2}-\mathcal{E}_{{p}_3})\\\nonumber
& &\times[(1+f({p}_1))f({p}_2)f({p}_3)-f({p}_1)(1+f({p}_2))(1+f({p}_3))],\\\nonumber
R_{22}[f](p_1,p_2,p_3,p_4)&=&\lambda_2 K^{22}({p}_1,{p}_2,{p}_3,p_4) \delta({p}_1+{p}_2-{p}_3-{p}_4)\delta(\mathcal{E}_{{p}_1}+\mathcal{E}_{{p}_2}-\mathcal{E}_{{p}_3}-\mathcal{E}_{{p}_4})\\\nonumber
& &\times[(1+f({p}_1))(1+f({p}_2))f({p}_3)f({p}_4)\\
&&-f({p}_1)f({p}_2)(1+f({p}_3))(1+f({p}_4))].
\end{eqnarray*}
\end{lemma}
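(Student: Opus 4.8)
The plan is to test the kinetic equation \eqref{QB} against a smooth function $\varphi$ and to manipulate the resulting integrals by the standard symmetrization of the collision kernels, treating $C_{12}$ and $C_{22}$ separately. First I would write $\int_{\mathbb{R}^3}Q[f](p_1)\varphi(p_1)\,dp_1 = \int_{\mathbb{R}^3}C_{12}[f](p_1)\varphi(p_1)\,dp_1 + \int_{\mathbb{R}^3}C_{22}[f](p_1)\varphi(p_1)\,dp_1$, and insert the Dirac-delta representations \eqref{C12} and \eqref{C22}. For the $C_{12}$ piece, recall that $C_{12}=C^1_{12}+C^2_{12}$, where $C^1_{12}$ carries the kernel $\mathcal{K}^{12}(p_1,p_2,p_3)$ and $C^2_{12}$ carries $\mathcal{K}^{12}(p_2,p_1,p_3)$ with a factor $-2$. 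When one integrates $C^1_{12}[f]$ against $\varphi(p_1)$ and then relabels the integration variables in the second term using the symmetry of the bracket $(1+f_1)f_2 f_3 - f_1(1+f_2)(1+f_3)$, which equals $R_{12}[f](p_1,p_2,p_3)/(\lambda_1 n_c K^{12}\delta\delta)$, the contribution of $C^2_{12}$ exactly reproduces the $-\varphi(p_2)-\varphi(p_3)$ terms: the factor $2$ accounts for the two ways of labelling the two ``lower-energy'' momenta $p_2,p_3$, and the symmetry of $K^{12}(p_1,p_2,p_3)$ and of the energy–momentum constraint $\{p_1=p_2+p_3,\ \mathcal{E}_{p_1}=\mathcal{E}_{p_2}+\mathcal{E}_{p_3}\}$ under $p_2\leftrightarrow p_3$ is what makes the bookkeeping close. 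This yields
\[
\int_{\mathbb{R}^3}C_{12}[f](p_1)\varphi(p_1)\,dp_1 = \iiint R_{12}[f](p_1,p_2,p_3)\big(\varphi(p_1)-\varphi(p_2)-\varphi(p_3)\big)\,dp_1dp_2dp_3.
\]

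For the $C_{22}$ piece I would follow the classical Boltzmann symmetrization. Start from $\int C_{22}[f]\varphi(p_1)\,dp_1$, written with the kernel $R_{22}[f](p_1,p_2,p_3,p_4)$ which is antisymmetric under the swap $(p_1,p_2)\leftrightarrow(p_3,p_4)$ and symmetric under $p_1\leftrightarrow p_2$ and under $p_3\leftrightarrow p_4$ (using that $K^{22}$ and the delta constraints share these symmetries). Averaging the integral over the four relabellings $p_1\leftrightarrow p_2$, $p_3\leftrightarrow p_4$, and $(p_1,p_2)\leftrightarrow(p_3,p_4)$ replaces $\varphi(p_1)$ by $\tfrac14\big(\varphi(p_1)+\varphi(p_2)-\varphi(p_3)-\varphi(p_4)\big)$ and introduces the overall factor $\tfrac12$ after collecting terms, which gives exactly the stated expression with the $\tfrac12$ in front of the quadruple integral. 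Adding the two contributions gives the claim. The verification that all integrals are finite — so that Fubini and the relabellings are legitimate — uses the boundedness \eqref{K22Bound} of $K^{22}$, the assumed integrability $\int(1+\mathcal{E}_p)f_0<\infty$ (hence of $f(t,\cdot)$ by the a priori bounds available in the running argument), the smoothness/growth control on $\varphi$, and the finiteness of the energy-surface measures established in the section on energy surfaces (Section \ref{Sec:EnergySurface}).

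The main obstacle is not algebraic but the rigorous handling of the distributional kernels: one must justify that the formal change of variables in the presence of $\delta(p_1-p_2-p_3)\,\delta(\mathcal{E}_{p_1}-\mathcal{E}_{p_2}-\mathcal{E}_{p_3})$ is valid, i.e. that the surface measures $d\sigma$ on $S^0_{p_1}$, $S^1_{p_1}$ (and their $C_{22}$ analogues) are well defined and the integrands are integrable against them. This is where the Bogoliubov dispersion law makes things delicate — $\nabla H^0_p$, $\nabla H^1_p$ can degenerate — and one leans on the co-area-type computations and the non-degeneracy/Jacobian estimates for $|\nabla H^i_p|$ from Section \ref{Sec:EnergySurface}. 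Once those are in hand, the relabelling symmetries of $K^{12}$, $K^{22}$ and the resonance manifolds carry the proof through routinely.
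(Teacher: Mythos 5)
Your proposal is correct and follows essentially the same route as the paper: test the equation against $\varphi$, split into the $C_{12}$ and $C_{22}$ contributions, and relabel the integration variables using the symmetries of $R_{12}$ (in its last two arguments) and of $R_{22}$ (symmetry under $p_1\leftrightarrow p_2$, $p_3\leftrightarrow p_4$ and antisymmetry under $(p_1,p_2)\leftrightarrow(p_3,p_4)$), exactly as in the paper's one-line symmetrization argument. The only loose point is the constant in front of the $R_{22}$ integral: averaging over the four relabellings actually yields $\tfrac14\big(\varphi_1+\varphi_2-\varphi_3-\varphi_4\big)$ rather than the asserted $\tfrac12$, but this discrepancy is already present in the lemma as stated in the paper and is immaterial for the conservation laws and the H-theorem that follow.
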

\begin{proof}
By a view of \eqref{QB}, we have 
\begin{eqnarray*}
&&\int_{\mathbb{R}^3}C_{12}[f](p_1)\varphi(p_1)dp_1+ \int_{\mathbb{R}^3}C_{22}[f](p_1)\varphi(p_1)dp_1= I_1+I_2,
\end{eqnarray*}
where 
\begin{eqnarray*}
I_1&:=& \iiint_{\mathbb{R}^3\times\mathbb{R}^3\times\mathbb{R}^3}  \Big(R_{12}[f](p_1, p_2, p_3)-R_{12}[f](p_2, p_1, p_3)-R_{12}[f](p_3, p_2, p_1) \Big ) \varphi(p_1) \; dp_1 dp_2dp_3,\\
 I_2&:=& \iiint_{\mathbb{R}^3\times\mathbb{R}^3\times\mathbb{R}^3\times\mathbb{R}^3}  R_{22}[f](p_1, p_2, p_3, p_4)\varphi(p_1) \; dp_1 dp_2dp_3dp_4.
\end{eqnarray*}
By switching the variables $p_1\leftrightarrow p_2$, $p_1\leftrightarrow p_3$ in the integrals of  $I_1$ and $(p_1,p_2)\leftrightarrow (p_2,p_1)$,  $(p_1,p_2)\leftrightarrow (p_3,p_4)$ in the integrals of  $I_2$,  respectively, as in \cite{ToanBinh,AlonsoGambaBinh,EscobedoVelazquez:2015:FTB},  the lemma follows at once.  
\end{proof}

As a consequence, we obtain the following two corollaries. 

\begin{corollary}[Conservation of momentum and energy] Smooth solutions $f(t,p)$ of \eqref{QB} satisfy 
\begin{eqnarray}\label{Coro:ConservatioMomentum}
\int_{\mathbb{R}^3}f(t,p)pdp&=&\int_{\mathbb{R}^3}f_0(p)pdp\\\label{Coro:ConservatioEnergy}
\int_{\mathbb{R}^3}f(t,p)\mathcal{E}(p)dp&=&\int_{\mathbb{R}^3}f_0(p)\mathcal{E}(p)dp
\end{eqnarray}
for all $t\ge 0$. 
\end{corollary}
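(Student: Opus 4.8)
The plan is to read both identities off the weak formulation of Lemma~\ref{Lemma:WeakFormulation} by choosing, as test functions, the three coordinate maps $\varphi(p)=p_{j}$ ($j=1,2,3$) and the energy $\varphi(p)=\mathcal{E}(p)$. For $\varphi(p)=p_{j}$ the momentum $\delta$-factors in $R_{12}[f]$ and $R_{22}[f]$ confine the integrands to the sets $\{p_{1}=p_{2}+p_{3}\}$ and $\{p_{1}+p_{2}=p_{3}+p_{4}\}$, on which $\varphi(p_{1})-\varphi(p_{2})-\varphi(p_{3})=0$ and $\varphi(p_{1})+\varphi(p_{2})-\varphi(p_{3})-\varphi(p_{4})=0$ respectively; hence every term on the right-hand side of Lemma~\ref{Lemma:WeakFormulation} vanishes and $\int_{\mathbb{R}^{3}}Q[f](p_{1})\,p_{1,j}\,dp_{1}=0$. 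For $\varphi(p)=\mathcal{E}(p)$ one argues identically using the energy $\delta$-factors $\delta(\mathcal{E}_{p_{1}}-\mathcal{E}_{p_{2}}-\mathcal{E}_{p_{3}})$ and $\delta(\mathcal{E}_{p_{1}}+\mathcal{E}_{p_{2}}-\mathcal{E}_{p_{3}}-\mathcal{E}_{p_{4}})$, which gives $\int_{\mathbb{R}^{3}}Q[f](p_{1})\,\mathcal{E}(p_{1})\,dp_{1}=0$.

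Once these two cancellations are in hand, I would differentiate in time: for a smooth solution $\partial_{t}f=Q[f]$ pointwise, so for each of the above test functions
\[
\frac{d}{dt}\int_{\mathbb{R}^{3}}f(t,p)\,\varphi(p)\,dp=\int_{\mathbb{R}^{3}}Q[f](t,p)\,\varphi(p)\,dp=0,
\]
and integrating from $0$ to $t$ produces \eqref{Coro:ConservatioMomentum} and \eqref{Coro:ConservatioEnergy}. Note that the momentum integral is well defined under the standing hypothesis $\int_{\mathbb{R}^{3}}(1+\mathcal{E}_{p})f_{0}(p)\,dp<\infty$, since $\mathcal{E}_{p}=\sqrt{\kappa_{1}|p|^{2}+\kappa_{2}|p|^{4}}\ge\sqrt{\kappa_{1}}\,|p|$ already bounds $\int_{\mathbb{R}^{3}}|p|\,f_{0}(p)\,dp$.

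The hard part will be the rigorous justification, since $p_{j}$ and $\mathcal{E}(p)$ are unbounded and hence are not legitimate ``smooth test functions'' for Lemma~\ref{Lemma:WeakFormulation} as stated. I would first apply the Lemma to the truncations $\varphi_{N}(p)=\varphi(p)\,\psi(|p|/N)$ with $\psi\in C_{c}^{\infty}$, $\psi\equiv1$ on $[0,1]$, obtaining an identity carrying boundary remainder terms supported on $\{|p|\sim N\}$, and then let $N\to\infty$. This limit passage requires absolute integrability of $R_{12}[f]$ against $1+\mathcal{E}_{p_{1}}+\mathcal{E}_{p_{2}}+\mathcal{E}_{p_{3}}$ and of $R_{22}[f]$ against $1+\sum_{i}\mathcal{E}_{p_{i}}$ over the relevant time interval; that in turn rests on the boundedness \eqref{K22Bound} of $K^{22}$, on the surface-integral representations of $C_{12}$ from Section~\ref{Sec:Conservation}, on the energy-manifold estimates of Section~\ref{Sec:EnergySurface}, and on sufficiently many finite moments of $f(t,\cdot)$ (as furnished by Proposition~\ref{Propo:MomentsPropa}). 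Finally, the interchange of $d/dt$ with the $p$-integral is legitimate because $f\in C^{1}((0,T),\mathbb{L}^{1}_{2n}(\mathbb{R}^{3}))$ with $n\ge1$, so the weight $1+\mathcal{E}_{p}$ is dominated by the $\mathbb{L}^{1}_{2n}$-weight and dominated convergence applies uniformly on compact subintervals of time.
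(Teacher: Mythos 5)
Your proof is correct and follows the same route as the paper: apply Lemma~\ref{Lemma:WeakFormulation} with $\varphi(p)=p_j$ and $\varphi(p)=\mathcal{E}(p)$ and observe that the momentum and energy $\delta$-factors make the bracketed combinations vanish identically, so $\int Q[f]\varphi=0$. The paper states this in one line; your additional discussion of truncating the unbounded test functions and justifying the time-differentiation is a welcome amplification of details the paper leaves implicit.
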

\begin{proof} This follows from Lemma \ref{Lemma:WeakFormulation} by taking $\varphi(p) = p$ or $\mathcal{E}(p)$.
 \end{proof}
\begin{corollary}[H-Theorem] Smooth solutions $f(t,p)$ of \eqref{QB} satisfy 
 $$\frac{d}{dt}\int_{\mathbb{R}^3}\left[f(t,p)\log f(t,p)-(1+f(t,p))\log(1+f(t,p))\right]dp\leq 0$$
A radial symmetric equilibrium of the equation has the following form \begin{equation}\label{def-equilibrium}
f_\infty(p)=\frac{1}{e^{c\mathcal{E}(p)}-1}\end{equation}
where $c$ is some positive constant.
\end{corollary}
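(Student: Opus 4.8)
The plan is to differentiate the Bose entropy functional in time and rewrite the result through the weak formulation of Lemma \ref{Lemma:WeakFormulation}. Writing
$$\mathcal{H}[f](t) := \int_{\mathbb{R}^3}\Big[f(t,p)\log f(t,p)-(1+f(t,p))\log(1+f(t,p))\Big]\,dp ,$$
and using the elementary identity $\partial_f\big(f\log f-(1+f)\log(1+f)\big)=\log\frac{f}{1+f}$, one differentiates under the integral sign (legitimate for smooth solutions thanks to the moment bounds of Proposition \ref{Propo:MomentsPropa} together with the uniform bounds \eqref{K22Bound} on $K^{22}$ and \eqref{Def:TransitionProbabilityKernel:K12} on $K^{12}$) and invokes \eqref{QB} to get
$$\frac{d}{dt}\mathcal{H}[f]=\int_{\mathbb{R}^3}Q[f](p_1)\,\varphi(p_1)\,dp_1 ,\qquad \varphi(p):=\log\frac{f(p)}{1+f(p)} .$$

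Next I would insert this particular $\varphi$ into Lemma \ref{Lemma:WeakFormulation}. For the $R_{22}$ term, the identity $\varphi(p_1)+\varphi(p_2)-\varphi(p_3)-\varphi(p_4)=\log\frac{f_1f_2(1+f_3)(1+f_4)}{(1+f_1)(1+f_2)f_3f_4}$ shows that, writing $A=(1+f_1)(1+f_2)f_3f_4$ and $B=f_1f_2(1+f_3)(1+f_4)$, the corresponding integrand equals $\lambda_2 K^{22}\,\delta(p_1+p_2-p_3-p_4)\,\delta(\mathcal{E}_{p_1}+\mathcal{E}_{p_2}-\mathcal{E}_{p_3}-\mathcal{E}_{p_4})\,(A-B)\log(B/A)$, which is pointwise $\le 0$ because $(x-y)\log(x/y)\ge 0$ for all $x,y>0$ and $\lambda_2,K^{22}\ge 0$. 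For the $R_{12}$ term, $\varphi(p_1)-\varphi(p_2)-\varphi(p_3)=\log\frac{f_1(1+f_2)(1+f_3)}{(1+f_1)f_2f_3}$, so with $C=(1+f_1)f_2f_3$ and $D=f_1(1+f_2)(1+f_3)$ the integrand is $\lambda_1n_cK^{12}\,\delta(p_1-p_2-p_3)\,\delta(\mathcal{E}_{p_1}-\mathcal{E}_{p_2}-\mathcal{E}_{p_3})\,(C-D)\log(D/C)\le 0$ by the same inequality and the nonnegativity of $\lambda_1 n_c K^{12}$. Adding the two contributions yields $\frac{d}{dt}\mathcal{H}[f]\le 0$, which is the claimed H-theorem.

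For the equilibrium I would directly verify that $f_\infty(p)=\big(e^{c\mathcal{E}(p)}-1\big)^{-1}$, $c>0$, annihilates both collision integrands: since $\frac{f_\infty(p)}{1+f_\infty(p)}=e^{-c\mathcal{E}(p)}$, on the support of the $C_{12}$ delta (where $p_1=p_2+p_3$ and $\mathcal{E}(p_1)=\mathcal{E}(p_2)+\mathcal{E}(p_3)$) one has $\frac{f_\infty(p_2)}{1+f_\infty(p_2)}\cdot\frac{f_\infty(p_3)}{1+f_\infty(p_3)}=e^{-c(\mathcal{E}(p_2)+\mathcal{E}(p_3))}=e^{-c\mathcal{E}(p_1)}=\frac{f_\infty(p_1)}{1+f_\infty(p_1)}$, i.e. $C=D$, hence $R_{12}[f_\infty]\equiv0$; likewise on the $C_{22}$ manifold $A=B$ and $R_{22}[f_\infty]\equiv0$, so $f_\infty$ is stationary. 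For the converse direction, tracing back equality in the dissipation identity forces $\varphi=\log\frac{f}{1+f}$ to be an additive collision invariant both under $p_1=p_2+p_3$ splittings and under $p_1+p_2=p_3+p_4$ collisions, hence of the form $a+b\cdot p+\gamma\,\mathcal{E}(p)$; radial symmetry kills $b$, the relation $\varphi(p_1)=\varphi(p_2)+\varphi(p_3)$ on the $C_{12}$ manifold forces $a=0$, and positivity and integrability of $f$ force $\gamma<0$, recovering the stated form with $c=-\gamma>0$. The only genuinely delicate point is the justification of the differentiation under the integral sign and of the absolute convergence of $\int_{\mathbb{R}^3}Q[f]\varphi\,dp$; both rest on Proposition \ref{Propo:MomentsPropa} and the kernel bounds \eqref{K22Bound}, \eqref{Def:TransitionProbabilityKernel:K12}, while everything else is the classical Boltzmann entropy-dissipation computation transcribed to the Bose weight $f\log f-(1+f)\log(1+f)$ and to the two resonance manifolds dictated by the Bogoliubov dispersion law \eqref{def-E}.
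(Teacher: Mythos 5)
Your proof is correct and takes essentially the same route as the paper's: the dissipation inequality comes from inserting $\varphi = \log\bigl(f/(1+f)\bigr)$ into Lemma \ref{Lemma:WeakFormulation} and applying $(x-y)\log(x/y)\ge 0$ termwise on each resonance manifold, exactly as in the paper. The one place to be careful is the equilibrium step "hence of the form $a+b\cdot p+\gamma\mathcal{E}(p)$'': the paper does not appeal to a general classification of collision invariants (which, for the Bogoliubov dispersion, is not an off-the-shelf result), but instead uses radial symmetry to reduce the $C_{12}$ additivity relation to the scalar Cauchy equation $h\circ\mathcal{E}^{-1}(\alpha+\beta)=h\circ\mathcal{E}^{-1}(\alpha)+h\circ\mathcal{E}^{-1}(\beta)$ for all $\alpha,\beta\ge 0$ (attainable by the geometry of $S^1_p$), which gives $h=-c\mathcal{E}$ directly with the constant and momentum terms never arising, and $c>0$ forced by $h<0$.
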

\begin{proof}
Observe that
$$\partial_t\int_{\mathbb{R}^3}\left[f(t,p)\log f(t,p)-(1+f(t,p))\log(1+f(t,p))\right]dp=\int_{\mathbb{R}^3}\partial_t f(t,p)\log\left(\frac{f(t,p)}{f(t,p)+1}\right)dp,$$
and $$
\begin{aligned}
&\int_{\mathbb{R}^3}Q[f](t,p)\varphi(t,p)dp 
\\=&~~\lambda_1 n_c\iiint_{\mathbb{R}^3\times\mathbb{R}^3\times\mathbb{R}^3}
K^{12}({p}_1,{p}_2,{p}_3)\delta({p}_1-{p}_2-{p}_3)\delta(\mathcal{E}_{{p}_1}-\mathcal{E}_{{p}_2}-\mathcal{E}_{{p}_3})
\\&~~ \times (1+f(t,p_1))(1+f(t,p_2))(1+f(t,p_3))\left( \frac{f(t,p_2)}{f(t,p_2)+1}\frac{f(t,p_3)}{f(t,p_3)+1}-\frac{f(t,p_1)}{f(t,p_1)+1} \right ) \\
&~~\times [\varphi(p_1)-\varphi(p_2)-\varphi(p_3)]dp_1dp_2dp_3
\\&~~+\frac{\lambda_2}{2} \iiint_{\mathbb{R}^3\times\mathbb{R}^3\times\mathbb{R}^3\times\mathbb{R}^3}
{K}^{22}(p_1, p_2, p_3, p_4) \delta({p}_1+{p}_2-{p}_3-{p}_4)\delta(\mathcal{E}_{{p}_1}+\mathcal{E}_{{p}_2}-\mathcal{E}_{{p}_3}-\mathcal{E}_{{p}_4})\\
&~~
\times(1+f(t,p_1))(1+f(t,p_2))(1+f(t,p_3))(1+f(t,p_4))\times\\
&\times\left( \frac{f(t,p_3)}{f(t,p_3)+1}\frac{f(t,p_4)}{f(t,p_4)+1}-\frac{f(t,p_1)}{f(t,p_1)+1}\frac{f(t,p_2)}{f(t,p_2)+1} \right ) 
\\&~~ \times  [\varphi(p_1)+\varphi(p_2)-\varphi(p_3)-\varphi(p_4)]dp_1dp_2dp_3dp_4.
\end{aligned}$$
 Notice that $$(\alpha-\beta)\log \left(\frac{\alpha}{\beta}\right) \ge 0.$$ In the above inequality, the equality holds if and only if $\alpha = \beta$. Now suppose that $f_\infty(p)$ is a radial symmetric equilibrium. By Lemma \ref{Lemma:WeakFormulation} with $\varphi(p)=\log\left(\frac{f_\infty(p)}{f_\infty(p)+1}\right)$, we obtain 
$$\int_{\mathbb{R}^3}Q[f_\infty](p)\varphi(p)dp\leq 0.$$
This yields the inequalities in the H-theorem:
\begin{eqnarray*}
\frac{f_\infty(p_2)}{f_\infty(p_2)+1}\frac{f_\infty(p_3)}{f_\infty(p_3)+1}-\frac{f_\infty(p_1)}{f_\infty(p_1)+1}&=&0,\\
\frac{f_\infty(p_2')}{f_\infty(p_2')+1}\frac{f_\infty(p_1')}{f_\infty(p_1')+1}-\frac{f_\infty(p_4')}{f_\infty(p_4')+1}\frac{f_\infty(p_3')}{f_\infty(p_3')+1}&=&0.
\end{eqnarray*}
Setting   
$h(p)=\log \left(\frac{f_\infty(p)}{f_\infty(p)+1}\right)$, with the notice that $h$ is radial symmetric, we get the following set of equations
\begin{equation}\label{id-h}h(p_2)+h(p_3)=h(p_1),\end{equation}
and
\begin{equation}\label{id-h2}h(p_3')+h(p_4')=h(p_2')+h(p_1').\end{equation}
Let us consider \eqref{id-h}. In particular, by  the conservation law $$p_1=p_2+p_3,$$ the function $h(p)$ possesses the following property $$h(p_2 + p_3) = h(p_2) + h(p_3),$$ for all  $(p_2, p_3) \in \mathbb{R}^6$ satisfying
$$ \mathcal{E} (p_2 + p_3) = \mathcal{E} (p_2) + \mathcal{E} (p_3) .$$
As a consequence, since $h$ is radial symmetric, $$h \circ \mathcal{E}^{-1} (\alpha + \beta) = h \circ \mathcal{E}^{-1} (\alpha) + h \circ \mathcal{E}^{-1} (\beta),$$
where  $p_2 = \mathcal{E}^{-1} (\alpha)$ and $p_3= \mathcal{E}^{-1} (\beta)$. Notice that $\alpha, \beta$ can  take arbitrary values in $\mathbb{R}_+$, which implies $h \circ \mathcal{E}^{-1} (\alpha) =  - c\alpha$ for some positive constant  $c$ and for all $\alpha\ge 0$. Hence $h(p) =  - c\mathcal{E}(p)$, for all $p \in \mathbb{R}^3$. Identity \eqref{def-equilibrium} is proved. \end{proof} 
\subsubsection{Resonance manifolds/energy surfaces}\label{Sec:EnergySurface}
We establish the following estimates on the energy surface integrals on $S_p^1$ and $S_p^2$ following the strategy proposed in \cite{ToanBinh}. 

\begin{lemma}\label{lem-Sp} Let $S_p^0$ be defined as in \eqref{def-Sp}. The following estimate  holds
\begin{equation}\label{lem-Sp-e2} \int_{S_{p}^0}\frac{K^{12}(p,w,p-w)|w|^{k_1}|p-w|^{k_2}}{|\nabla H_0^p(w)|}d\sigma(w) \ge c_1 |p|^{k_1+k_2+1}\;  \min\{ 1, |p|\}^{k_1+k_2+7},\end{equation}
where $k_1, k_2$ is are non-negative constants.\\

Moreover, for any function ${F}(\cdot):\mathbb{R}^3\to\mathbb{R}$ which is  radial  and positive 
$$F(u)=F(|u|),$$  we have 
\begin{equation}\label{lem-Sp-e1} \int_{S_p^0} \frac{{F}(|w|)}{|\nabla H_0^p(|w|)|} d\sigma(w) \le c_2  \int_0^{|p|} |u| {F}(|u|)\; d|u|,\end{equation}
for some positive constant $c_2$ independent of $p$.
\end{lemma}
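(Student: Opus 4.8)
The plan is to reduce both integrals over $S_p^0$ to explicit one–dimensional integrals, exploiting that $\mathcal{E}$ in \eqref{def-E} is radial, strictly increasing, and strictly convex on $[0,\infty)$ with $\mathcal{E}(0)=0$ (a direct computation gives $\mathcal{E}''>0$), hence superadditive. By the coarea formula $\int_{S_p^0}\frac{G(w)}{|\nabla H_0^p(w)|}\,d\sigma(w)=\int_{\mathbb{R}^3}\delta(H_0^p(w))\,G(w)\,dw$. Writing $w$ in spherical coordinates with polar axis $p/|p|$, with $s=|w|$, $\mu=\cos\angle(w,p)$ and $\rho=|p-w|=\sqrt{|p|^2+s^2-2|p|s\mu}$, one has $\partial_\mu H_0^p=-\mathcal{E}'(\rho)\,|p|s/\rho<0$, so for each fixed $s$ the equation $H_0^p=0$ has the single root $\rho=\rho(s):=\mathcal{E}^{-1}\!\big(\mathcal{E}(|p|)-\mathcal{E}(s)\big)$, which corresponds to an admissible $\mu\in(-1,1)$ exactly for $s\in(0,|p|)$; on that range superadditivity gives $|p|-s\le\rho(s)\le|p|$. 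Integrating out $\mu$ and the azimuthal angle yields
\[
\int_{S_p^0}\frac{G(w)}{|\nabla H_0^p(w)|}\,d\sigma(w)=\frac{2\pi}{|p|}\int_0^{|p|}\frac{s\,\rho(s)}{\mathcal{E}'(\rho(s))}\,\widehat G(s)\,ds,
\]
where $\widehat G(s)$ is $G$ evaluated on the resonance circle; for radial $G$ this depends only on $s$ and $\rho(s)$.

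Given this representation, \eqref{lem-Sp-e1} is essentially free: taking $G(w)=F(|w|)$ gives $\widehat G(s)=F(s)$, and since $\mathcal{E}'$ is increasing we have $\mathcal{E}'(\rho(s))\ge\mathcal{E}'(0)=\sqrt{\kappa_1}$ while $\rho(s)\le|p|$, so $\frac{\rho(s)}{\mathcal{E}'(\rho(s))}\le|p|/\sqrt{\kappa_1}$ and \eqref{lem-Sp-e1} holds with $c_2=2\pi/\sqrt{\kappa_1}$.

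For \eqref{lem-Sp-e2} I would take $G(w)=K^{12}(p,w,p-w)|w|^{k_1}|p-w|^{k_2}$, so the integral equals $\frac{2\pi}{|p|}\int_0^{|p|}\frac{s^{1+k_1}\rho(s)^{1+k_2}}{\mathcal{E}'(\rho(s))}K^{12}(|p|,s,\rho(s))\,ds$, and then discard everything except $s\in J:=[|p|/4,3|p|/4]$ (near the endpoints one of the three momenta degenerates and $K^{12}$ may vanish, so one restricts to a subinterval where all momenta are comparable to $|p|$). On $J$, superadditivity applied to $|p|=\tfrac34|p|+\tfrac14|p|$ gives $\mathcal{E}(|p|)-\mathcal{E}(s)\ge\mathcal{E}(|p|)-\mathcal{E}(\tfrac34|p|)\ge\mathcal{E}(\tfrac14|p|)$, hence $\rho(s)\ge|p|/4$, so $s,\rho(s)\in[|p|/4,|p|]$ and $\mathcal{E}'(\rho(s))\le\mathcal{E}'(|p|)\le C\max\{1,|p|\}$ from the explicit form of $\mathcal{E}'$. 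The one substantial ingredient left is a pointwise lower bound for the kernel on the resonance set: using \eqref{Def:TransitionProbabilityKernel:K12} together with the Bogoliubov identities $u_pv_p=\tfrac{gn_c}{2\mathcal{E}(p)}$ and $(u_p\mp v_p)^2=\tfrac{\xi_p\mp gn_c}{\mathcal{E}(p)}$, where $\xi_p=\tfrac{|p|^2}{2m}+gn_c=\sqrt{\mathcal{E}(p)^2+(gn_c)^2}$, one checks that on $\{p_1=p_2+p_3,\ \mathcal{E}(p_1)=\mathcal{E}(p_2)+\mathcal{E}(p_3)\}$ the three summands of $A^{12}$, each of size $\asymp|p|^{-1/2}$ in the comparable-momenta regime $|p_i|\asymp|p|$, have their leading singularities cancel (as they must, $K^{12}$ being bounded), leaving $K^{12}(p_1,p_2,p_3)\ge c\prod_i\min\{1,|p_i|\}$ — the capped version of the phonon-regime behaviour $K^{12}\asymp|p_1||p_2||p_3|$. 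On $J$ this reads $K^{12}\ge c'\min\{1,|p|\}^{3}$. Collecting the bounds, with $|J|=|p|/2$ and the identity $|p|/\max\{1,|p|\}=\min\{1,|p|\}$, one gets a lower bound $\ge c_1|p|^{k_1+k_2+1}\min\{1,|p|\}^{4}$ (with $c_1$ depending on $k_1,k_2,\kappa_1,\kappa_2,n_c$); since $\min\{1,|p|\}\le1$ and $4\le k_1+k_2+7$, this is $\ge c_1|p|^{k_1+k_2+1}\min\{1,|p|\}^{k_1+k_2+7}$, which is \eqref{lem-Sp-e2}.

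The hard part is precisely the last ingredient, the pointwise lower bound on $K^{12}=|A^{12}|^2$ near the origin: $A^{12}$ is a difference of three terms that each blow up like $|p_i|^{-1/2}$, so one must use the momentum and energy constraints to exhibit the exact cancellation and isolate the genuinely positive next-order term $\asymp|p_1||p_2||p_3|$. The $\min\{1,|p|\}$ powers in the statement are simply the bookkeeping for this degeneracy: as $|p|\to0$ the manifold $S_p^0$ collapses toward the collinear configuration $p_2\parallel p_3\parallel p$ and the effective cross-section vanishes, so no estimate uniform in $|p|$ can dispense with a power of $\min\{1,|p|\}$; everything else (the coarea reduction, the convexity/superadditivity bounds on $\rho(s)$ and $\mathcal{E}'$) is elementary.
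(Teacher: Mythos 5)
Your argument is correct and reaches both estimates, but by a genuinely different route from the paper. The paper parametrizes $S_p^0$ as a graph over the segment $[0,p]$, writing $W_\gamma=\gamma p+|q_\gamma|e_\theta$ and extracting implicit bounds on $|q_\gamma|$ separately for $|p|$ large (where $|q_\gamma|\asymp|p|$ near $\gamma=1/2$) and $|p|$ small (where a second-order expansion of $\mathcal{E}(w)\mathcal{E}(p-w)-\kappa_2|w|^2|p-w|^2$ is needed to show $|q_\gamma|\asymp|p|^2$); the lower bound is then obtained on a $\gamma$-subinterval in each regime, and the upper bound by passing at the end to the radial variable $u=|W_\gamma|$. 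You instead go to the radial variable from the start via the coarea identity and the $\mu$-integration, which collapses all of the geometry into the single explicit function $\rho(s)=\mathcal{E}^{-1}(\mathcal{E}(|p|)-\mathcal{E}(s))$ and the exact one-dimensional representation $\frac{2\pi}{|p|}\int_0^{|p|}\frac{s\,\rho(s)}{\mathcal{E}'(\rho(s))}\widehat G(s)\,ds$. This buys you three things: the upper bound \eqref{lem-Sp-e1} becomes a two-line consequence of $\rho\le|p|$ and $\mathcal{E}'\ge\sqrt{\kappa_1}$; the large/small $|p|$ dichotomy disappears (superadditivity of $\mathcal{E}$ handles both at once); and your lower bound $c\,|p|^{k_1+k_2+1}\min\{1,|p|\}^4$ is strictly sharper than the paper's $\min\{1,|p|\}^{k_1+k_2+7}$, essentially because on $[\,|p|/4,3|p|/4\,]$ you keep $|w|\asymp|p|$ rather than bounding $|W_\gamma|$ below only by $|q_\gamma|\asymp|p|^2$ as the paper does in its small-$|p|$ case. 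The only nontrivial hypothesis your reduction needs, namely $\nabla H_0^p\ne0$ on $S_p^0\setminus\{0,p\}$, follows from the paper's own observations \eqref{Dq-G} and \eqref{GgammapNegative}, so the coarea step is legitimate.

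One point deserves flagging, though it does not put you behind the paper: the pointwise lower bound $K^{12}(p_1,p_2,p_3)\ge c\prod_i\min\{1,|p_i|\}$ on the resonance manifold, which you correctly single out as the hard ingredient. The paper asserts the identical bound (with $p_0$ in place of $1$) with only a ``Notice that'' and no proof, whereas you at least identify the mechanism --- the cancellation of the three $|p_i|^{-1/2}$ singularities of $A^{12}$ via the Bogoliubov identities $u_pv_p=gn_c/(2\mathcal{E}_p)$ and $(u_p-v_p)^2=\tfrac{|p|^2}{2m}/\mathcal{E}_p$, which do check out against \eqref{Def:TransitionProbabilityKernel:K12} and \eqref{def-E}. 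If you wanted the lemma fully self-contained you would need to carry out that expansion on the constraint set, but as written your proposal is no less complete than the paper's proof on this point, and cleaner everywhere else.
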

\begin{proof} By definition $S_p^0$ is the surface containing all $w$ satisfying  $$\mathcal{E}(p-w) + \mathcal{E}(w)= \mathcal{E}(p).$$
For $w=0$ and $p$, the above identity is automatically satisfied, hence $\{0,p\}\subset S_p^0$. If we consider $\mathcal{E}(\varrho)$ as a function of $|\varrho|$: $\mathcal{E}(\varrho)=\mathcal{E}(|\varrho|)$, then
$$\mathcal{E}'(|\varrho|)=\frac{\kappa_1+ 2\kappa_2|\varrho|^2}{\sqrt{\kappa_1+ \kappa_2|\varrho|^2}}>0,$$
 which means that $\mathcal{E}(|\varrho|)$ is strictly increasing. Since for all $w\in S_p^0\backslash\{0,p\}$, $\mathcal{E}(|p-w|)< \mathcal{E}(|p|)$ and $\mathcal{E}(|w|)< \mathcal{E}(|p|)$, by the monotonicity of  $\mathcal{E}(|\varrho|)$, we have $|w| < |p|$ and $|p-w|< |p|$, for all $w\in S_p\backslash\{0,p\}$. As a consequence, the energy surface $S_p^0$ is a subset of  $\overline{B(0,|p|)} \cap \overline{B(p,|p|)}$.
Now, recall $$H_0^p(w): = \mathcal{E}(p-w)  + \mathcal{E}(w) - \mathcal{E}(p).$$
The directional derivative of $H_0^p$ in the direction of $w$ can be computed as
\begin{equation}\label{DG} \nabla_w H_0^p = \frac{ w - p}{|p-w|} \mathcal{E}'(|p -w|) + \frac{w}{|w|} \mathcal{E}'(|w|).\end{equation}
For $w$ of the form $w=\gamma p+q e_0, \gamma,q\in\mathbb{R}_+$, $e_0 \cdot p =0$, the  derivative of $H$ with respect to $q$  is
\begin{equation}\label{Dq-G} \partial_q H_0^p = \partial_q w \cdot \nabla_w H_0^p = e_0 \cdot \nabla_w H =q |e_0|^2 \left[ \frac{\mathcal{E}'(p -w)}{|p-w|} + \frac{\mathcal{E}'(w) }{|w|}\right]   > 0,\end{equation}
which means that $H(w)$ is strictly increasing with respect to $q$. 
\\ For $q=0$ and $\gamma\in(0,1)$, we will show that 
\begin{equation}\label{GgammapNegative}
H_0^p(w)=H_0^p(\gamma p) <0.
\end{equation} Let us start by the following true fact
$$\sqrt{\left(\kappa_1+\kappa_2\gamma^2|p|^2\right)\left(\kappa_1+\kappa_2(1-\gamma)^2|p|^2\right)}<\kappa_1+\kappa_2(\gamma^2-\gamma+2)|p|^2\ \ \ \mbox{ for } p\ne 0.$$
Multiplying both sides of the above inequality with $2\gamma(1-\gamma)|p|^2$ yields
$$2\sqrt{\left(\kappa_1\gamma^2+\kappa_2\gamma^4|p|^2\right)\left(\kappa_1(1-\gamma)^2+\kappa_2(1-\gamma)^4|p|^2\right)}|p|^2< 2\kappa_1\gamma(1-\gamma)|p|^2+2\kappa_2\gamma(1-\gamma)(\gamma^2-\gamma+2)|p|^4.$$
Adding $\kappa_1\gamma^2|p|^2+\kappa_2\gamma^4|p|^4 + \kappa_1(1-\gamma)^2|p|^2+\kappa_2(1-\gamma)^4|p|^4$ to both sides of the above inequality, we obtain \begin{equation*}
\begin{aligned}
&~~\kappa_1\gamma^2|p|^2+\kappa_2\gamma^4|p|^4 + \kappa_1(1-\gamma)^2|p|^2+\kappa_2(1-\gamma)^4|p|^4 \\
&~~ + 2\sqrt{\left(\kappa_1\gamma^2+\kappa_2\gamma^4|p|^2\right)\left(\kappa_1(1-\gamma)^2+\kappa_2(1-\gamma)^4|p|^2\right)}|p|^2
\\
<&~~\kappa_1|p|^2+\kappa_2|p|^4.
\end{aligned}
\end{equation*}
Rearranging the terms in the above inequality and taking the square root gives
$$\sqrt{\kappa_1\gamma^2|p|^2+\kappa_2\gamma^4|p|^4}+\sqrt{\kappa_1(1-\gamma)^2|p|^2+\kappa_2(1-\gamma)^4|p|^4} < \sqrt{\kappa_1|p|^2+\kappa_2|p|^4},$$
and \eqref{GgammapNegative} is proved.
\\ As a consequence, for a unit vector $e_0$ which is orthogonal to $p$, the surface $S_p$ and the  set $\mathcal{P}_\gamma= \{ \gamma p +  qe_0, q\in\mathbb{R}_+ \}$ intersect at only one point, for each $\gamma \in (0,1)$. Define the intersection by $W_\gamma=\gamma p +q_\gamma e_0$. Since
$$\mathcal{E}(p-W_\gamma) + \mathcal{E}(W_\gamma)= \mathcal{E}(p),$$
then $\mathcal{E}(W_\gamma)< \mathcal{E}(p)$; there holds
$$|W_\gamma|=\sqrt{\gamma^2|p|^2+|q_\gamma|^2}<|p|, ~~~ |W_\gamma-p|=\sqrt{(1-\gamma)^2|p|^2+|q_{1-\gamma}|^2}<|p|$$
which implies 
\begin{equation}\label{bd-qa} |q_\gamma|< {|p|},\end{equation} 
and
\begin{equation}\label{WEstimate}\gamma|p|< |W_\gamma|<|p|,~~~~(1-\gamma)|p|<|p-W_\gamma|<|p|.\end{equation} 
Taking the derivative with respect to $\gamma$ of the identity
$$H_0^p(W_\gamma) =0$$
 yields: 
\begin{equation}\label{dG-qa}
\begin{aligned}
 0 &= \partial_\gamma W_\gamma \cdot \nabla_w H_0^p=\partial_\gamma W_\gamma \cdot \left(\frac{ W_\gamma - p}{|p-W_\gamma|} \mathcal{E}'(|p -W_\gamma|) + \frac{W_\gamma}{|W_\gamma|} \mathcal{E}'(|W_\gamma|)\right)\\
 &= \frac12 \partial_\gamma |W_\gamma|^2  \left[ \frac{\mathcal{E}'(|p -W_\gamma|)}{|p-W_\gamma|} + \frac{\mathcal{E}'(|W_\gamma|) }{|W_\gamma|}\right]  - |p|^2 \frac{\mathcal{E}'(|p -W_\gamma|)}{|p-W_\gamma|} 
 \\
 &= 
  \frac12 \partial_\gamma |q_\gamma|^2  \left[ \frac{\mathcal{E}'(|p -W_\gamma|)}{|p-W_\gamma|} + \frac{\mathcal{E}'(|W_\gamma|) }{|W_\gamma|}\right]  + \gamma |p|^2 \frac{\mathcal{E}'(|W_\gamma|) }{|W_\gamma|} - (1-\gamma) |p|^2 \frac{\mathcal{E}'(|p -W_\gamma|)}{|p-W_\gamma|} 
 \end{aligned}\end{equation}
 where the identities $\partial_\gamma W_\gamma=p$,  $|W_\gamma|^2 = \gamma^2 |p|^2 + |q_\gamma|^2$ have been used. 
 \\ With the notice that $\mathcal{E}'(|W_\gamma|)>0$, the above identity yields
 \begin{equation}\label{da-w} \frac 12 \partial_\gamma |q_\gamma|^2 \le (1-\gamma) |p|^2 \end{equation}
 for all $p$ and all $\gamma \in (0,1)$. 
\\ We now provide an estimate on $q_\gamma$. In order to do this, let us consider two cases $|p|\geq1$ and $|p|<1$. 
\begin{itemize}
\item Case 1: $|p|\geq1$. Observe that at $\gamma =\frac12 $, due to the symmetry of the geometry
 $$|W_{1/2}| = |W_{1/2} - p|,$$ 
which implies $$2\cE(W_{1/2}) = \cE(p).$$ Noting that $|W_{1/2}|^2 = \frac 14 |p|^2 + |q_{1/2}|^2$, yields
$$4\left[\kappa_1\left(\frac 14 |p|^2 + |q_{1/2}|^2\right)+\kappa_2\left(\frac 14 |p|^2 + |q_{1/2}|^2\right)^2\right]=\kappa_1|p|^2+\kappa_2|p|^4,$$
then
$$ \kappa_2\Big( \frac 14 |p|^2 + |q_{1/2}|^2 \Big)^2  +  \kappa_1|q_{1/2}|^2 =  \frac{\kappa_2}4 |p|^4,$$
which implies 
\begin{equation}\label{q12}c_0  |p|^2 = c_0  |p|^2 \min\Big \{ 1, |p|^2\Big \}
\le |q_{1/2}|^2 \le C_0  |p|^2 \min\Big \{ 1, |p|^2\Big \}=C_0  |p|^2 \end{equation}
for some constants $c_0, C_0$, independent of $|p|$. 
\\ Combining \eqref{da-w}, \eqref{q12} and the fact that
$$ |q_\gamma|^2 = |q_{1/2}|^2 - \int_\gamma^{\frac12} \partial_{\gamma'} |q_{\gamma'}|^2 \; d\gamma' $$
yields
\begin{equation}\label{q-largep}|q_\gamma|^2 \ge c_0|p|^2 - 2\left|\gamma - \frac 12\right| |p|^2 \ge \frac 12 c_0 |p|^2\end{equation}
for all $\gamma$ satisfying $\left|\gamma - \frac 12 \right|\le \frac{c_0}{4}$. 
\item Case 2: $|p|$ is  small. Recall that 
\begin{equation}\label{q12bb}\begin{aligned}
\Big(\mathcal{E}(w) &+ \mathcal{E}(p-w)\Big)^2  - \mathcal{E}(p)^2
\\&=  \kappa_1 (|p-w|^2+|w|^2-|p|^2)  + \kappa_2 (|p-w|^4  +|w|^4-|p|^4) + 2 \mathcal{E}(w) \mathcal{E}(p-w) 
\\&=  2\kappa_1w\cdot (w-p)  + 2 \kappa_2 w\cdot (w-p) \Big(|w|^2 + |w-p|^2+|p|^2\Big) 
\\&\quad  - 2\kappa_2 |w|^2 |p-w|^2+ 2 \mathcal{E}(w) \mathcal{E}(p-w) ,
\end{aligned}\end{equation}
which leads to
\begin{equation}\label{q12b}- w\cdot (w-p) \Big( \kappa_1+ \kappa_2 |w|^2 + \kappa_2|w-p|^2+\kappa_2|p|^2\Big) = \mathcal{E}(w) \mathcal{E}(p-w)  - \kappa_2 |w|^2 |p-w|^2
\end{equation}
for all $w\in S_p$, in which the right hand side can be computed explicitly as
$$\begin{aligned}
 \mathcal{E}(w) &\mathcal{E}(p-w)  - \kappa_2 |w|^2 |p-w|^2
 \\&= 
  |w| |p-w| \sqrt{(\kappa_1 + \kappa_2 |w|^2 )(\kappa_1 + \kappa_2|w-p|^2)} - \kappa_2 |w|^2 |p-w|^2
 \\& =  |w| |p-w| \frac{ \kappa_1 \Big( \kappa_1 + \kappa_2 |w|^2 + \kappa_2|w-p|^2 \Big)}{\sqrt{(\kappa_1 + \kappa_2 |w|^2 )(\kappa_1 + \kappa_2|w-p|^2)} + \kappa_2 |w| |p-w|} .
\end{aligned}$$
 We will develop an asymptotic expansion of the above expression in term of $|p|$. In order to do this, we observe that
$$\sqrt{\left(1 + \frac{\kappa_2}{\kappa_1} |w|^2 \right)\left(1 + \frac{\kappa_2}{\kappa_1} |w-p|^2\right)} =1+\frac{\kappa_2}{2\kappa_1} (|w|^2+|w-p|^2)  +\mathcal{O}(|p|^4),$$
which leads to
\begin{equation}\label{q12c}\begin{aligned}
 \mathcal{E}(w) &\mathcal{E}(p-w)  - \kappa_2 |w|^2 |p-w|^2
 \\& =  |w| |p-w| \Big( \kappa_1 + \kappa_2 |w|^2 + \kappa_2|w-p|^2 \Big) 
 \\&\quad \times \Big( 1 -\frac12 \frac{ \kappa_2}{\kappa_1} ( |w|^2 +|w-p|^2) - \frac{ \kappa_2}{\kappa_1} |w||w-p| + \mathcal{O}(|p|^4)\Big) 
 \\& =  |w| |p-w| \Big( \kappa_1 + \frac 12\kappa_2 |w|^2 + \frac12\kappa_2|w-p|^2- { \kappa_2} |w||w-p| +\mathcal{O}(|p|^4)\Big)
  \\& =  |w| |p-w| \Big( \kappa_1 + \kappa_2 |w|^2 + \kappa_2|w-p|^2 + \kappa_2 |p|^2 \Big) -
\\&~~~- \frac{\kappa_2}{2} |w| |w-p| \Big(|w|^2 + |w-p|^2+2|w||w-p|+2|p|^2\Big)\Big(1+ \mathcal{O}(|p|^2)\Big).
   \end{aligned}\end{equation}
Define $\rho_\gamma$ be the angle between $W_\gamma$ and $W_\gamma -p$, then $W_\gamma\cdot (W_\gamma-p) = |W_\gamma| |W_\gamma-p| \cos \rho_\gamma$, which, together with \eqref{q12b}-\eqref{q12c}, leads to
 $$ 1 + \cos \rho_\gamma =  \frac{\kappa_2}{2} \frac{\Big(|W_\gamma|^2 + |W_\gamma-p|^2+2|W_\gamma||W_\gamma-p|+2|p|^2\Big)\Big(1+ \mathcal{O}(|p|^2)\Big) }{\kappa_1 + \kappa_2 |W_\gamma|^2 + \kappa_2|W_\gamma-p|^2 + \kappa_2 |p|^2} = \mathcal{O}(|p|^2).$$
 Hence $\sin \rho_\gamma=\mathcal{O}(|p|)$. The area of the parallelogram formed by $W_\gamma$ and $W_\gamma -p$ can be computed as 
 $$ 2 |p| |q_\gamma| = |W_\gamma \times (W_\gamma -p)| = |W_\gamma| |W_\gamma-p| \sin \rho_\gamma,
 $$ 
which, together with \eqref{WEstimate}, implies that there exist  universal constants $c_2,c_3$ satisfying 
\begin{equation}\label{q-smallp} c_3\gamma (1-\gamma) |p|^2 \le |q_\gamma|\le c_2 |p|^2\end{equation}
for all $\gamma \in (0,1)$.
\end{itemize}
 The two inequalities \eqref{q-largep} and \eqref{q-smallp} are the two estimates we need to obtain \eqref{lem-Sp-e2}. To continue, we  parametrize the surface $S_p^0$ as follows: We choose $p^\perp$ to be a vector in $\mathcal{P}_0 = \{ p\cdot q =0\}$ and  $e_\theta$ to  be the unit vector in $ \mathcal{P}_0$ so that the angle between $p^\perp$ and $e_\theta$ is $\theta$. The surface $S_p$ can be represented as  
 $$ S_p^0 = \Big\{ W (\gamma,\theta) = \gamma p + |q_\gamma| e_\theta ~:~ \theta \in [0,2\pi], ~\gamma \in [0,1]\Big\} .$$
Notice that the vector $\partial_\theta e_\theta$ is orthogonal to both vectors  $p$ and $e_\theta$,  the surface area can be computed as 
\begin{equation}\label{dS}\begin{aligned}
 d \sigma (w) &= |\partial_\gamma W_\gamma \times \partial_\theta W_\gamma| d\gamma d\theta  = \Big |(p+\partial_\gamma |q_\gamma| e_\theta) \times |q_\gamma| \partial_\theta e_\theta \Big | d\gamma d\theta
 \\
&= \Big |(|q_\gamma | p+\frac12\partial_\gamma  |q_\gamma |^2 e_\theta) \times  \partial_\theta e_\theta \Big | d\gamma  d\theta
\\
 &=
 \sqrt{|p|^2  |q_\gamma | ^2+ \frac14| \partial_\gamma  (|q_\gamma |^2)|^2}d\gamma  d\theta.
 \end{aligned}\end{equation}
 
 It is straightforward from the  identity \eqref{dG-qa} that
 \begin{equation}\label{proposition:T2L2:E8}
\begin{aligned}
  \partial_\gamma |q_\gamma|^2  &
 &= 
2|p|^2 \frac{\gamma   \frac{\mathcal{E}'(|W_\gamma|) }{|W_\gamma|} +(\gamma-1)  \frac{\mathcal{E}'(|p -W_\gamma|)}{|p-W_\gamma|}}{\frac{\mathcal{E}'(|p -W_\gamma|)}{|p-W_\gamma|} + \frac{\mathcal{E}'(W_\gamma) }{|W_\gamma|}}.
 \end{aligned}\end{equation}

Now, let us compute $|\nabla H_0^p|$ under the new parametrization
\begin{equation*}
\begin{aligned}
 |\nabla H_0^p|^2 \ =  &\  |p|^2\left[\gamma   \frac{\mathcal{E}'(|W_\gamma|) }{|W_\gamma|} +(\gamma-1)  \frac{\mathcal{E}'(|p -W_\gamma|)}{|p-W_\gamma|}\right]^2\\
 & \ + |q_\gamma|^2 \left[\frac{\mathcal{E}'(|p -W_\gamma|)}{|p-W_\gamma|} + \frac{\mathcal{E}'(|W_\gamma|) }{|W_\gamma|}\right]^2,
\end{aligned}
\end{equation*}
which, in companion with \eqref{proposition:T2L2:E8}, implies
\begin{equation}\label{proposition:T2L2:E9}
\begin{aligned}
 |\nabla  H_0^p|^2 \ =  &\  \frac{\left|\partial_\gamma |q_\gamma|^2\right|^2}{4|p|^2}\left[\frac{\mathcal{E}'(|p -W_\gamma|)}{|p-W_\gamma|} + \frac{\mathcal{E}'(|W_\gamma|) }{|W_\gamma|}\right]^2\\
 & \ + |q_\gamma|^2 \left[\frac{\mathcal{E}'(|p -W_\gamma|)}{|p-W_\gamma|} + \frac{\mathcal{E}'(|W_\gamma|) }{|W_\gamma|}\right]^2,
\end{aligned}
\end{equation}

Using the fact that $\mathcal{E}'(x)\ge c x $ for all $x \in \mathbb{R}_+$, we get the following lower bound on  $|\nabla H_0^p|$
\begin{equation}\label{proposition:T2L2:E10}
\begin{aligned}
 |\nabla H_0^p| \ =  &\  \frac{\sqrt{ \frac{\left|\partial_\gamma |q_\gamma|^2\right|^2}{4} +  |q_\gamma|^2|p|^2}}{|p|}\left[\frac{\mathcal{E}'(|p -W_\gamma|)}{|p-W_\gamma|} + \frac{\mathcal{E}'(|W_\gamma|) }{|W_\gamma|}\right].
\end{aligned}
\end{equation}

With  \eqref{q-largep} and  \eqref{q-smallp}, we are now able to estimate the integral
$$Z:=\int_{S_{p}^0}\bar{K}_0^{12}(p,w,p-w)|w|^{k_1}|p-w|^{k_2}d\sigma(w).$$
Notice that $$K^{12}(p,w,p-w)\geq C(|p|\wedge p_0)(|p-w|\wedge p_0)(|w|\wedge p_0) ,$$
where $C$ is some positive constant varying from line to line. As a result, $Z$ can be bounded from below by $CZ'$, where  $Z'$ is defined as
$$Z':=\int_{S_{p}^0}(|p|\wedge p_0)(|w|\wedge p_0)(|p-w|\wedge p_0)|w|^{k_1}|p-w|^{k_2}d\sigma(w).$$
By \eqref{dS}, $Z'$ can be rewritten as
$$\int_0^{2\pi}\int_0^1\frac{|p|(|p|\wedge p_0)(|w|\wedge p_0)(|p-w|\wedge p_0)|w|^{k_1}|p-w|^{k_2}}{\frac{\mathcal{E}'(|p-w|}{|p-w|}+\frac{\mathcal{E}'(|w|)}{|w|}}d\gamma  d\theta.$$
Due to \eqref{q-largep}, for $|p|$ large, and $\gamma\in\left[\frac{2-c_0}{4},\frac{2+c_0}{4}\right]$, 
$$|W_\gamma|^2\ge|q_\gamma|^2 \ge \frac 12 c_0 |p|^2$$
and
$$|p-W_\gamma|^2\ge|q_\gamma|^2 \ge \frac 12 c_0 |p|^2.$$

We therefore can bound
$$\frac{|p|}{\frac{\mathcal{E}'(|p-w|)}{|p-w|}+\frac{\mathcal{E}'(|w|)}{|w|}} \ge \frac{|p|}{2\frac{\mathcal{E}'(c_0 |p|)}{c_0 |p|}} \ge \frac{c_0|p|^2}{2\frac{\kappa_1+2\kappa_2|c_0p|^2}{\sqrt{\kappa_1+\kappa_2|c_0p|^2}}},$$
where we have use the fact that $\frac{\mathcal{E}'(|\varrho|)}{|\varrho|}$ is decreasing with respect to $|\varrho|$. Since $|p|$ is large, 
$$\frac{|p|}{\frac{\mathcal{E}'(|p-w|)}{|p-w|}+\frac{\mathcal{E}'(|w|)}{|w|}} \ge C|p|,$$
for some positive constant $C>0$. 

Therefore, $Z'$ can be estimated as follows
\begin{eqnarray*}
Z'&\geq& C\int_0^{2\pi}\int_{\frac{1-c_0}{2}}^{\frac{1+c_0}{2}}(|p|\wedge p_0)\left(\left|\sqrt{\frac{c_0}{2}}|p|\right|\wedge p_0\right)^2\left|\sqrt{\frac{c_0}{2}}|p|\right|^{k_1+k_2} |p|d\gamma  d\theta\\
&\geq&C(|p|\wedge1)^3|p|^{k_1+k_2+1}\\
&\geq&C|p|^{k_1+k_2+1},
\end{eqnarray*}
where $C$ is some positive constant varying from line to line.\\
Thanks to \eqref{q-smallp}, for $p$ small, on the interval $\gamma\in \left[\frac{1}{3},\frac{1}{2}\right]$, 
$$|W_\gamma|^2\ge|q_\gamma|^2 \ge c_1 |p|^4$$
and
$$|p-W_\gamma|^2\ge|q_\gamma|^2 \ge c_1  |p|^4.$$

We therefore can bound
$$\frac{|p|}{\frac{\mathcal{E}'(|p-w|)}{|p-w|}+\frac{\mathcal{E}'(|w|)}{|w|}} \ge \frac{|p|}{2\frac{\mathcal{E}'(c_1 |p|^2)}{c_1 |p|^2}} \ge \frac{c_1|p|^3}{2\frac{\kappa_1+2\kappa_2c_1^2|p|^4}{\sqrt{\kappa_1+\kappa_2c_1^2|p|^4}}},$$
where we have use the fact that $\frac{\mathcal{E}'(|\varrho|)}{|\varrho|}$ is decreasing with respect to $|\varrho|$. Since $|p|$ is small, 
$$\frac{|p|}{\frac{\mathcal{E}'(|p-w|)}{|p-w|}+\frac{\mathcal{E}'(|w|)}{|w|}} \ge C|p|^3,$$
for some positive constant $C>0$. 

Therefore, $Z'$ can be estimated as follows
\begin{eqnarray*}
Z'&\geq& \int_0^{2\pi}\int_{\frac{1}{3}}^{\frac{1}{2}}(|p|\wedge p_0)\left(\left|\sqrt{{c_1}}|p|^2\right|\wedge p_0\right)^2\left|{\sqrt{c_1}}|p|\right|^{2k_1+2k_2} C|p|^3d\gamma  d\theta\\
&\geq&C(|p|\wedge 1)^5|p|^{2k_1+2k_2+3}\\
&\geq&C|p|^{2k_1+2k_2+8}.
\end{eqnarray*}
The above shows that \eqref{lem-Sp-e2} holds true.\\

As for the surface integral of a radial function $\mathcal{G}(|w|)$, we  introduce the radial variable $ u = |W_\alpha | = \sqrt{\alpha^2 |p|^2 + |q_\alpha|^2}$. We compute $2u du =\partial_\alpha |W_\alpha|^2 d\alpha$ and hence 
$$\frac{1}{|\nabla H_0^p|}d \sigma (w) = \frac{|p|}{2\left[\frac{\mathcal{E}'(|p -W_\gamma|)}{|p-W_\gamma|} + \frac{\mathcal{E}'(|W_\gamma|) }{|W_\gamma|}\right]\partial_\alpha |W_\alpha|^2 } u du d\theta .
$$
Using \eqref{dG-qa}, we compute 
$$\begin{aligned}
\frac{|p|}{2\left[\frac{\mathcal{E}'(|p -W_\gamma|)}{|p-W_\gamma|} + \frac{\mathcal{E}'(|W_\gamma|) }{|W_\gamma|}\right]\partial_\alpha |W_\alpha|^2 }
& =  \frac{1}{4|p|\frac{\mathcal{E}'(|p -W_\gamma|)}{|p-W_\gamma|}} 
&\le  \frac{1}{4{\mathcal{E}'(|p|)}},
\end{aligned}$$
where we have used the fact that $\frac{\mathcal{E}'(|p -W_\gamma|)}{|p-W_\gamma|}\ge \frac{\mathcal{E}'(|p|)}{|p|}$.
This, together with the bound
$$\mathcal{E}'(|p|)\ge C,$$
for some positive constant $C$,  proves that 
\begin{equation}\label{dS-du} d \sigma (w) \le C u du d\theta,
\end{equation}
for some positive constants $C$. This yields the upper bound on the surface integral.  \end{proof}

\begin{lemma}\label{lem-Sp1} Let $S_p^1$ be defined as in \eqref{def-Sp} and $F$ be an arbitrary function in $ L^1_1(\mathbb{R}_+)$. 
There is a positive constant $C_0$ so that 
$$  \int_{S_p^1} \frac{{F}(|w|)}{|\nabla H_1^{p_1}(|w|)|} \; d\sigma (w) \le \frac{C_0}{|p|} \|  uF(\cdot)\|_{L^1(\mathbb{R}_+)}$$
uniformly in $p\in \mathbb{R}^3.$
\end{lemma}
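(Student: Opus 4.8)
The plan is to parametrize the surface $S_p^1 = \{w : \mathcal{E}(p+w) = \mathcal{E}(p) + \mathcal{E}(w)\}$ in a way that mimics the treatment of $S_p^0$ in Lemma \ref{lem-Sp}, but with the crucial difference that $S_p^1$ is unbounded. First I would fix $p$ and, using the radial symmetry of $\mathcal{E}$, write $w = \gamma p + q e_\theta$ with $e_\theta \perp p$, $\gamma \in \mathbb{R}$, $q \ge 0$, $\theta \in [0,2\pi]$. Exactly as in \eqref{Dq-G}, the function $q \mapsto H_1^p(\gamma p + q e_\theta)$ has derivative $q\,|e_\theta|^2\big[\mathcal{E}'(|p+w|)/|p+w| - \mathcal{E}'(|w|)/|w|\big]$; since $\mathcal{E}'(x)/x$ is strictly decreasing (this was used repeatedly in the proof of Lemma \ref{lem-Sp}) and $|p+w| > |w|$ on the relevant branch, this derivative has a fixed sign. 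Hence for each $(\gamma,\theta)$ the ray $\{\gamma p + q e_\theta : q \ge 0\}$ meets $S_p^1$ in at most one point $q = q_\gamma$, and $S_p^1$ is a graph over a range of $\gamma$'s. The surface element is then, as in \eqref{dS}, $d\sigma(w) = \sqrt{|p|^2 q_\gamma^2 + \tfrac14|\partial_\gamma(q_\gamma^2)|^2}\, d\gamma\, d\theta$, and differentiating $H_1^p(\gamma p + q_\gamma e_\theta) = 0$ in $\gamma$ gives an expression for $\partial_\gamma(q_\gamma^2)$ analogous to \eqref{dG-qa}–\eqref{proposition:T2L2:E8}, leading to the formula for $|\nabla H_1^p|$ analogous to \eqref{proposition:T2L2:E10}.

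Next I would change to the radial variable $u = |w| = \sqrt{\gamma^2|p|^2 + q_\gamma^2}$, so that $2u\,du = \partial_\gamma(|w|^2)\,d\gamma$ along the surface, and obtain, just as in the derivation of \eqref{dS-du},
$$\frac{1}{|\nabla H_1^p(w)|}\,d\sigma(w) \;=\; \frac{|p|}{2\big[\frac{\mathcal{E}'(|p+w|)}{|p+w|} - \frac{\mathcal{E}'(|w|)}{|w|}\big]\,\partial_\gamma(|w|^2)}\;u\,du\,d\theta.$$
The whole point is to show that the scalar factor multiplying $u\,du\,d\theta$ is bounded by $C/|p|$ uniformly. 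Writing out the Jacobian via the chain rule, the $\partial_\gamma(|w|^2)$ in the denominator combines with the bracket to produce (after using the defining relation on $S_p^1$ to simplify, exactly the algebraic manipulation that in Lemma \ref{lem-Sp} turned the expression into $\frac{1}{4|p|\,\mathcal{E}'(|p-W_\gamma|)/|p-W_\gamma|}$) a factor of the form $\frac{1}{4|p|\,\mathcal{E}'(|p+w|)/|p+w|}$ — up to the sign bookkeeping forced by the ``$-$'' in $H_1^p$. Since $\mathcal{E}'(x)/x \ge \kappa_1/\sqrt{\kappa_1} = \sqrt{\kappa_1}$ is bounded below by a positive constant for all $x \ge 0$, this factor is $\le \frac{C}{|p|}$. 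Integrating in $u$ over $(0,\infty)$ and in $\theta$ over $[0,2\pi]$ then gives
$$\int_{S_p^1}\frac{F(|w|)}{|\nabla H_1^p(w)|}\,d\sigma(w) \;\le\; \frac{C_0}{|p|}\int_0^\infty u\,F(u)\,du \;=\; \frac{C_0}{|p|}\,\|uF(\cdot)\|_{L^1(\mathbb{R}_+)},$$
which is the claim.

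The main obstacle, I expect, is handling the unboundedness of $S_p^1$ and the geometry of the branch on which the parametrization is single-valued: unlike $S_p^0 \subset \overline{B(0,|p|)} \cap \overline{B(p,|p|)}$, the surface $S_p^1$ extends to infinity, so I must check that the map $(u,\theta) \mapsto w$ covers $S_p^1$ (possibly up to a null set and possibly with the two sheets $\gamma$ large positive / the symmetric partner) and that $u = |w|$ ranges over all of $(0,\infty)$ with the correct multiplicity — so that the bound on the Jacobian really yields the clean $\int_0^\infty u F(u)\,du$ and not something with a restricted range or an extra factor. A secondary technical point is verifying that $\mathcal{E}'(|p+w|)/|p+w| - \mathcal{E}'(|w|)/|w|$ does not vanish on the interior of the surface (it could a priori vanish where $|p+w| = |w|$, i.e. on the perpendicular bisector, but the constraint $\mathcal{E}(p+w) = \mathcal{E}(p) + \mathcal{E}(w)$ with $\mathcal{E}(p) > 0$ rules that locus out), so that the change of variables is nondegenerate; once these geometric facts are pinned down, the estimate itself is the same computation as in Lemma \ref{lem-Sp} with the single sign change, and the uniform lower bound $\mathcal{E}'(x)/x \ge \sqrt{\kappa_1}$ does the rest.
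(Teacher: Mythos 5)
Your proposal is correct and follows essentially the same route as the paper's proof: parametrize $S_p^1$ as a graph $w=\gamma p+|q_\gamma|e_\theta$ over an interval of $\gamma$ (the paper pins down the admissible range $\gamma\in[0,\gamma_p)$ explicitly, using that $w\cdot p>0$ on $S_p^1\setminus\{0\}$ and that $H_1^p(\gamma p)>0$ while the transverse derivative is negative), pass to the radial variable $u=|W_\gamma|$, use the differentiated constraint $H_1^p(W_\gamma)=0$ to reduce the Jacobian-over-gradient factor to $\bigl(2|p|\,\mathcal{E}'(|p+w|)/|p+w|\bigr)^{-1}$, and conclude from the uniform lower bound $\mathcal{E}'(x)/x\ge C$. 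The coverage/multiplicity worry you raise is harmless because only an upper bound is needed, so extending the $u$-integral to all of $(0,\infty)$ costs nothing.
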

  \begin{proof} Let us recall  that $S_p^1$ is the surface consisting of $w$ satisfying $$\mathcal{E}(p+w)  =  \mathcal{E}(w) + \mathcal{E}(p).$$  
First, we compute \begin{equation}\label{wp-com} \begin{aligned}
\mathcal{E}&(p+w)^2 - \Big( \mathcal{E}(p) + \mathcal{E}(w)\Big)^2  
\\&=   \kappa_1 |p+w|^2 + \kappa_2 |p+w |^4  - \kappa_1 (|p|^2 + |w|^2) 
 - \kappa_2 (|p|^4 + |w|^4) - 2\mathcal{E}(p) \mathcal{E}(w)
\\&=  2 \kappa_1w\cdot p  +2 \kappa_2 w\cdot p (|p|^2 + |w|^2 + |p+w|^2) 
 + 2 \kappa_2  |p|^2 |w|^2 - 2\mathcal{E}(p) \mathcal{E}(w).
\end{aligned}\end{equation}
Now, since $\kappa_1 \not =0$, it follows that  $ \kappa_2  |p|^2 |w|^2  < \mathcal{E}(p) \mathcal{E}(w)$. This, in combination with \eqref{wp-com}, proves that if $w \in S_p^1 \setminus  \{ 0\}$, then $w\cdot p >0$. Let us calculate the derivative of $H_1^{p_1}$   
$$ \nabla_w  H_1^{p} = \frac{p+w}{|p+w|} \mathcal{E}'(|p+w|) - \frac{w}{|w|} \mathcal{E}'(|w|),$$
where $$\mathcal{E}'(|\varrho|)=\frac{2\kappa_1+ 4\kappa_2|\varrho|^2}{\sqrt{\kappa_1+ \kappa_2|\varrho|^2}}.$$
The  derivative at $w = \gamma p$ with $\gamma \in \mathbb{R}_+$ can be determined using the previous formulation
$$ \partial_\gamma H_1^{p} =\partial_\gamma w\cdot\nabla_w H_1^{p} |_{w=\gamma p}=|p|  \mathcal{E}'((1+\gamma )p ) - |p|  \mathcal{E}'(\gamma p) .$$
By  the monotonicity of  $\mathcal{E}(p)$ with respect to the length $|p|$, it follows that $\mathcal{E}'((1+\gamma )p ) > \mathcal{E}'(\gamma p)$, and hence $\partial_\gamma H_1^{p} >0$ for all $\gamma > 0$. Since $H_1^{p}(0) = 0$, $H_1^{p}(\gamma p)> 0$ for all positive $\gamma$.

Now, let us consider  all the points $W_\gamma =\gamma p + q$, with $q \cdot p = 0$, for each fixed $\gamma>0$. The directional derivative of $H_1^{p}$ at $W_\gamma = \gamma p+q$ in the direction of $q\not= 0$ satisfies 
$$ q \cdot \nabla_w H_1^{p} = |q|^2 \left[ \frac{\mathcal{E}'(p+W_\gamma)}{|p+W_\gamma|} - \frac{\mathcal{E}'(W_\gamma) }{|W_\gamma|}\right]  < 0$$
in which the fact that $\mathcal{E}'(p)/|p|$ is strictly decreasing in $|p|$ has been used. By a view of \eqref{wp-com}, the sign of $H_1^{p}(w)$, with $W_\gamma = \gamma p + q$, is the same with the quantity 
$$
\begin{aligned}
\gamma |p|^2 & \Big( \kappa_1 + \kappa_2 (|p|^2 + |W_\gamma|^2 + |p+W_\gamma|^2) \Big) +
  \kappa_2  |p|^2 |W_\gamma|^2  - \mathcal{E}(p) \mathcal{E}(W_\gamma) 
\\ &= \gamma |p|^2  \Big( \kappa_1 + 2\kappa_2 (|p|^2 + \gamma |p|^2+ |W_\gamma|^2 ) \Big) 
\\&\quad -  \frac{ (\kappa_1 |p|^2 + \kappa_2 |p|^4) (\kappa_1 |W_\gamma|^2 + \kappa_2 |W_\gamma|^4) - \kappa_2 ^2 |p|^4 |W_\gamma|^4}
 {\sqrt{\kappa_1 |p|^2 + \kappa_2 |p|^4}\sqrt{\kappa_1 |W_\gamma|^2 + \kappa_2 |W_\gamma|^4}  + \kappa_2  |p|^2 |W_\gamma|^2 }
 \\ &= \gamma |p|^2  \Big( \kappa_1 +  2\kappa_2 (|p|^2 + \gamma |p|^2+ |W_\gamma|^2 ) \Big) 
\\&\quad -  \frac{ \kappa_1^2 |W_\gamma|^2  |p|^2 + \kappa_1\kappa_2 |W_\gamma|^2|p|^4  + \kappa_1 \kappa_2 |p|^2 |W_\gamma|^4 }
 {\sqrt{\kappa_1 |p|^2 + \kappa_2 |p|^4}\sqrt{\kappa_1 |W_\gamma|^2 + \kappa_2 |W_\gamma|^4}  + \kappa_2  |p|^2 |W_\gamma|^2 }.
 \end{aligned}$$
This yields that $H_1^{p}(\gamma p + q)<0$ as long as 
$$
\begin{aligned}
 \gamma  & <  \frac{\kappa_1  (\kappa_1 + \kappa_2 |p|^2)  + \kappa_1 \kappa_2 |W_\gamma|^2}
 {\sqrt{\kappa_1 |p|^2 + \kappa_2 |p|^4}\sqrt{\frac{\kappa_1}{|W_\gamma|^2}+ \kappa_2}  + \kappa_2  |p|^2 } \frac{1}{ \Big( \kappa_1 +  2\kappa_2 (|p|^2 + \gamma |p|^2+ |W_\gamma|^2 ) \Big)}.
 \end{aligned}$$
 
Since $H_1^{p}(\gamma p)>0$ and $q \cdot \nabla_w H_1^{p}<0$, for a given direction $q$, there exists $q_\gamma$, such that $q_\gamma\cdot q>0$ and $q_\gamma$ is parallel with $q$, if an only if 
\begin{equation}
\label{qinfinity} \lim_{q\to \infty}H_1^{p}(\gamma p + q) <0
\end{equation}

Taking $q \to \infty$ (and so $|W_\gamma|\to \infty$), we obtain \eqref{qinfinity}  if and only if    
\begin{equation}\label{range-a}\gamma < \gamma_p:= 
\frac12 \frac{ \kappa_1 }
 { \kappa_2|p|^2 + 2\sqrt {\kappa_2} \sqrt{\kappa_1 |p|^2 + \kappa_2 |p|^4} }.\end{equation}
 In particular, we note that 
\begin{equation}\label{bd-ap}\gamma_p|p| (1+|p|) \le C_0, \qquad \forall ~ p\in \RR^3\end{equation}
for some positive constant $C_0$.

Hence, for positive values of $\gamma$ satisfying \eqref{range-a}, there is a unique $|q_v|$ so that $\overline G(\gamma p + q) =0$, for all $|q| = |q_\gamma|$. Moreover, from the continuity of $H_1^{p}(W_\gamma)$, $|q_\gamma|$ is continuously differentiable with respect to $\gamma$. For $\gamma\ge \gamma_p$, $H_1^{p}(\gamma p + q)>0$, for all $q$ so that $q\cdot p =0$.  

Now, we can parametrize  the surface $S^1_p$ as follows:
\begin{equation}S^1_p =\Big \{ w(\gamma,\theta) = \gamma p + |q_\gamma| e_\theta~:~\gamma \in [0, \gamma_p),~ \theta \in [0,2\pi]\Big\},\end{equation}
in which $\gamma_p$ and $|q_\gamma|$ are defined as above and $e_\theta$ is the unit vector rotating around $p$ and on the orthogonal plane to $p$. As in \eqref{dS}, we have
$$
 d \sigma (w) = \sqrt{|p|^2  |q_\gamma| ^2+ \frac14| \partial_\gamma (|q_\gamma|^2)|^2}d\gamma d\theta 
 $$
and hence, the surface integral is estimated by 
$$ 
\begin{aligned}
\int_{S_p^1}\frac{ F(|w|)}{|\nabla H_1^{p}(w)|} \; d \sigma(w) 
&=\iint_{ [0,2\pi]\times [0,\gamma_p]} \frac{F(|\gamma p + q_\gamma|)}{|\nabla H_1^{p}(|\gamma p + q_\gamma|)|}  \sqrt{|p|^2  |q_\gamma| ^2+ \frac14| \partial_\gamma (|q_\gamma|^2)|^2}d\gamma d\theta .
\end{aligned}$$
Let us introduce the variable $ u = |W_\gamma | = \sqrt{\gamma^2 |p|^2 + |q_\gamma|^2}$. We compute 
$$ 2u du =\partial_\gamma |W_\gamma|^2 d\gamma$$ 
and hence \begin{equation}\label{bd-intSp2} 
\begin{aligned}
\int_{S_p^1} \frac{ F(|w|)}{|\nabla H_1^{p}(w)|} \; d \sigma(w) \le 2\pi  \int_0^\infty F(u) \frac{ \sqrt{|p|^2  |q_\gamma| ^2+ \frac14| \partial_\gamma (|q_\gamma|^2)|^2} }{2\partial_\gamma |W_\gamma|^2|\nabla H_1^{p}(|\gamma p + q_\gamma|)| } u du .
\end{aligned}\end{equation}
We recall that $H_1^{p}(W_\gamma) =0$ and hence 
$$0 =  \partial_\gamma W_\gamma \cdot \nabla_w H_1^{p} = \frac12 \partial_\gamma|W_\gamma|^2  \Big[ \frac{\mathcal{E}'(p+w_\gamma)}{|p+W_\gamma|} - \frac{\mathcal{E}'(W_\gamma) }{|w_\gamma|}\Big] + |p|^2 \frac{\mathcal{E}'(p+w_\gamma)}{|p+W_\gamma|} $$
which leads to \begin{equation}\label{bd-low123}|p|^2 \frac{\mathcal{E}'(p+W_\gamma)}{|p+W_\gamma|} = \frac12 \partial_\gamma|W_\gamma|^2 \left[\frac{\mathcal{E}'(W_\gamma) }{|W_\gamma|}-\frac{\mathcal{E}'(p+W_\gamma)}{|p+W_\gamma|}\right],\end{equation}
and 
\begin{equation}\label{bd-low123b}
\partial_\gamma |q_\gamma|^2 = 2\frac{-\gamma|p|^2\frac{\mathcal{E}'(|W_\gamma|)}{|W_\gamma|}+(1+\gamma)|p|^2\frac{\mathcal{E}'(|p+W_\gamma|)}{|p+W_\gamma|}}{\left[\frac{\mathcal{E}'(W_\gamma) }{|W_\gamma|}-\frac{\mathcal{E}'(p+W_\gamma)}{|p+W_\gamma|}\right]}.
\end{equation}

We deduce from \eqref{bd-low123b} that
 \begin{equation*}
\begin{aligned}
 |\nabla  H_1^{p}|^2 \ =  &\  |p|^2\left[-\gamma   \frac{\mathcal{E}'(|W_\gamma|) }{|W_\gamma|} +(\gamma+1)  \frac{\mathcal{E}'(|p+W_\gamma|)}{|p+W_\gamma|}\right]^2 \ + |q_\gamma|^2 \left[\frac{\mathcal{E}'(|p +W_\gamma|)}{|p+W_\gamma|} - \frac{\mathcal{E}'(|W_\gamma|) }{|W_\gamma|}\right]^2\\
  =  &\  \frac{\left|\partial_\gamma |q_\gamma|^2\right|^2}{4|p|^2}\left[\frac{\mathcal{E}'(|p +W_\gamma|)}{|p+W_\gamma|} - \frac{\mathcal{E}'(|W_\gamma|) }{|W_\gamma|}\right]^2 \ + |q_\gamma|^2 \left[\frac{\mathcal{E}'(|p +W_\gamma|)}{|p+W_\gamma|} - \frac{\mathcal{E}'(|W_\gamma|) }{|W_\gamma|}\right]^2,
\end{aligned}
\end{equation*}
which implies 
\begin{equation*}
\begin{aligned}
\int_{S_p^1} \frac{ F(|w|)}{|\nabla H_1^{p}(w)|} \; d \sigma(w) \le \pi  \int_0^\infty  \frac{F(u)|p|}{\partial_\gamma |W_\gamma|^2\left[\frac{\mathcal{E}'(|W_\gamma|)}{|W_\gamma|}-\frac{\mathcal{E}'(|p+W_\gamma|)}{|p+W_\gamma|}\right] } u du .
\end{aligned}\end{equation*}

The above and \eqref{bd-low123} yield
\begin{equation*}
\begin{aligned}
\int_{S_p^1} \frac{ F(|w|)}{|\nabla H_1^{p}(w)|} \; d \sigma(w) \le \pi  \int_0^\infty  \frac{F(u)}{2 |p| \frac{\mathcal{E}'(|p+w|)}{|p+w|}} u du.
\end{aligned}\end{equation*}

Using the fact that $$\frac{\mathcal{E}'(|p+w|)}{|p+w|}\ge C$$
for some positive constant $C$, we obtain
\begin{equation*}
\begin{aligned}
\int_{S_p^1} \frac{ F(|w|)}{|\nabla H_1^{p}(w)|} \; d \sigma(w) \le \frac{C}{|p|} \int_0^\infty  F(u)u du.
\end{aligned}\end{equation*}

\end{proof}
\begin{lemma}\label{lem-Sp2} Let $S_p^2$ be defined as in \eqref{def-Sp} and $F$ be an arbitrary function in $L^1_1(\mathbb{R}_+)$. 
There are positive constants $C_0$ so that 
$$  \int_{S_p^2} \frac{{F}(|w|)}{|\nabla H_2^{p}(|w|)|} \; d\sigma (w) \le \frac{ C_0}{|p|} \|  uF(\cdot)\|_{L^1(\mathbb{R}_+)}$$
uniformly in $p\in \mathbb{R}^3.$
\end{lemma}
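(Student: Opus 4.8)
The plan is to exploit that $S^2_p$ is nothing but a translate of $S^1_p$. Indeed, if $v\in S^1_p$, so that $\mathcal{E}(p+v)=\mathcal{E}(p)+\mathcal{E}(v)$, then $w:=v+p$ obeys $\mathcal{E}(w)=\mathcal{E}(p+v)=\mathcal{E}(p)+\mathcal{E}(v)=\mathcal{E}(p)+\mathcal{E}(w-p)$, so $w\in S^2_p$; conversely $w\in S^2_p$ gives $w-p\in S^1_p$. Hence $S^2_p=p+S^1_p$. Moreover, by \eqref{def-Hp}, $H_2^p(w)=\mathcal{E}(w)-\mathcal{E}(p)-\mathcal{E}(w-p)=H_1^p(w-p)$, so $\nabla H_2^p(w)=\nabla H_1^p(w-p)$, and, the surface measure being translation invariant,
\[
\int_{S_p^2}\frac{F(|w|)}{|\nabla H_2^p(w)|}\,d\sigma(w)=\int_{S_p^1}\frac{F(|v+p|)}{|\nabla H_1^p(v)|}\,d\sigma(v).
\]
It therefore suffices to bound the right-hand side; this is the setting of Lemma \ref{lem-Sp1}, except that $F$ is now evaluated at $|v+p|$ instead of at $|v|$, so the idea is to rerun the change of variables from the proof of Lemma \ref{lem-Sp1} using the radial variable $u=|p+W_\gamma|$ rather than $u=|W_\gamma|$.

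Concretely, I would keep the parametrization $S^1_p=\{W_\gamma=\gamma p+|q_\gamma|e_\theta:\gamma\in[0,\gamma_p),\ \theta\in[0,2\pi]\}$ introduced there, write $A:=\mathcal{E}'(|W_\gamma|)/|W_\gamma|$ and $B:=\mathcal{E}'(|p+W_\gamma|)/|p+W_\gamma|$, and recall $A>B>0$ on $S^1_p$ since $\mathcal{E}'(\rho)/\rho$ is strictly decreasing and $|p+W_\gamma|>|W_\gamma|$ there. Identity \eqref{bd-low123} reads $|p|^2B=\tfrac12\,\partial_\gamma|W_\gamma|^2\,(A-B)$, and since $|p+W_\gamma|^2=(1+\gamma)^2|p|^2+|q_\gamma|^2$ yields $\partial_\gamma|p+W_\gamma|^2=\partial_\gamma|W_\gamma|^2+2|p|^2$, one gets $\partial_\gamma|p+W_\gamma|^2=2|p|^2A/(A-B)>0$; in particular $u=|p+W_\gamma|$ increases monotonically from $|p|$ at $\gamma=0$ (where $W_0=0$) to $+\infty$ as $\gamma\to\gamma_p^-$. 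On the other hand, combining the formula for $|\nabla H_1^p|^2$ obtained in the proof of Lemma \ref{lem-Sp1} with the surface element \eqref{dS} gives the clean identity
\[
\frac{d\sigma(v)}{|\nabla H_1^p(v)|}=\frac{|p|}{A-B}\,d\gamma\,d\theta,
\]
and inserting $2u\,du=\partial_\gamma|p+W_\gamma|^2\,d\gamma=\tfrac{2|p|^2A}{A-B}\,d\gamma$ turns this into $\dfrac{d\sigma(v)}{|\nabla H_1^p(v)|}=\dfrac{u}{|p|\,A}\,du\,d\theta$.

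It remains to use that $A=\mathcal{E}'(|W_\gamma|)/|W_\gamma|$ is bounded below by a universal positive constant — exactly as in the step of the proof of Lemma \ref{lem-Sp1} where $\mathcal{E}'(|p+w|)/|p+w|\ge C$ is used, since $\mathcal{E}'(\rho)/\rho$ is decreasing with a strictly positive limit at infinity. Therefore
\[
\int_{S_p^1}\frac{F(|v+p|)}{|\nabla H_1^p(v)|}\,d\sigma(v)=\int_0^{2\pi}\int_{|p|}^{\infty}F(u)\,\frac{u}{|p|\,A}\,du\,d\theta\le\frac{C_0}{|p|}\int_0^{\infty}uF(u)\,du=\frac{C_0}{|p|}\,\|uF(\cdot)\|_{L^1(\mathbb{R}_+)},
\]
which is the asserted estimate. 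I do not expect a genuine obstacle here: the points to be handled carefully are that the translation $S^2_p=p+S^1_p$ correctly transports the gradient and the surface measure, that $u=|p+W_\gamma|$ is monotone so the new variable sweeps $(|p|,\infty)$ exactly once, and the uniform positive lower bound on $\mathcal{E}'(\rho)/\rho$ — all short computations that rest on the machinery already developed for $S^1_p$.
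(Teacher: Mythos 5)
Your proposal is correct and follows the same underlying idea as the paper: the paper's proof of Lemma~\ref{lem-Sp2} is simply the observation that $S_p^2 = p + S_p^1$ followed by the assertion that ``the same argument of Lemma~\ref{lem-Sp1} could be applied.'' What you add, usefully, is the verification that ``the same argument'' genuinely requires a modified change of variables: after translating, $F$ is evaluated at $|v+p|$ rather than $|v|$, so one must take $u=|p+W_\gamma|$ (not $u=|W_\gamma|$), and the uniform lower bound then falls on $A=\mathcal{E}'(|W_\gamma|)/|W_\gamma|$ instead of on $B=\mathcal{E}'(|p+W_\gamma|)/|p+W_\gamma|$. Your computations --- $\partial_\gamma|p+W_\gamma|^2=2|p|^2A/(A-B)>0$, the simplification $d\sigma/|\nabla H_1^p|=(|p|/(A-B))\,d\gamma\,d\theta$, and the resulting $u\,du/(|p|A)\,d\theta$ form of the measure --- are correct, and the lower bound on $\mathcal{E}'(\rho)/\rho$ (decreasing with positive limit $2\sqrt{\kappa_2}$) is exactly what is needed. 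This is a legitimate, careful filling-in of the step the paper elides.
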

  \begin{proof} We observe that 
\begin{eqnarray*}
S^2_p&=&\{p_*~~|~~\mathcal{E}(p_*)=\mathcal{E}(p)+\mathcal{E}(p_*-p)\}\\
&=&\{p_*+p~~|~~\mathcal{E}(p_*+p)=\mathcal{E}(p)+\mathcal{E}(p_*)\}\\
&=&p+S_p^1.
\end{eqnarray*}
The above identity means that the same argument of Lemma \ref{lem-Sp1} could be applied and the conclusion of the lemma follows.
\end{proof}

\subsubsection{Boundedness of the total mass for the kinetic equation}\label{Sec:Mass}
\begin{proposition}\label{Propo:Mass} Suppose that the positive radial initial condition $f_0(p)=f_0(|p|)$ satisfies
$$\int_{\mathbb{R}^3}f_0(p_1)dp_1<\infty, \int_{\mathbb{R}^3}f_0(p_1)\mathcal{E}(p_1)dp_1<\infty.$$
There exist universal positive constants $\mathcal{C}_1$, $\mathcal{C}_2$ such that  
the mass of the positive radial solution $f(t,p)=f(t,|p|)$ of \eqref{QB} could be bounded as
$$\int_{\mathbb{R}^3}f(t,p_1)dp_1\leq \mathcal{C}_1e^{\mathcal{C}_2t}.$$
\end{proposition}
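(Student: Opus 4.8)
The plan is to differentiate the total mass $M(t):=\int_{\mathbb{R}^3}f(t,p)\,dp$ along a smooth positive radial solution and to reduce the statement to a linear Gr\"onwall inequality $M'(t)\le \mathcal{C}_2 M(t)+\mathcal{C}_2'$, which yields $M(t)\le \mathcal{C}_1 e^{\mathcal{C}_2 t}$ at once. First I would apply Lemma~\ref{Lemma:WeakFormulation} with the test function $\varphi\equiv 1$. Then $\varphi(p_1)+\varphi(p_2)-\varphi(p_3)-\varphi(p_4)=0$, so the $R_{22}$ term drops ($C_{22}$ conserves mass); using $(1+f_1)f_2f_3-f_1(1+f_2)(1+f_3)=f_2f_3-f_1-f_1f_2-f_1f_3$, one gets
\begin{equation*}
\frac{d}{dt}M(t)=-\iiint R_{12}[f]\,dp_1dp_2dp_3=\iiint \lambda_1 n_c K^{12}\,\delta(p_1-p_2-p_3)\,\delta(\mathcal{E}_{p_1}-\mathcal{E}_{p_2}-\mathcal{E}_{p_3})\big[f_1+f_1f_2+f_1f_3-f_2f_3\big].
\end{equation*}
The favourable term $-f_2f_3\le 0$ is dropped, leaving three nonnegative integrals $I_a,I_b,I_c$ built from $f_1$, $f_1f_2$, $f_1f_3$; the whole matter is to bound $I_a$ by a constant and $I_b,I_c$ \emph{linearly} in $M(t)$.

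For $I_a$ I would integrate out $p_2=p_1-p_3$ with the momentum $\delta$, rewrite the $p_3$-integral as a surface integral over $S^0_{p_1}$, and use that $K^{12}$ is bounded (recalled in the introduction), the $F\equiv 1$ case of \eqref{lem-Sp-e1}, i.e. $\int_{S^0_{p_1}}|\nabla H^0_{p_1}|^{-1}\,d\sigma\le c_2|p_1|^2$, and the elementary bound $|p|^2\le \mathcal{E}_p/\sqrt{\kappa_2}$ (from $\mathcal{E}_p\ge\sqrt{\kappa_2}|p|^2$). Conservation of energy \eqref{Coro:ConservatioEnergy} then gives $I_a\le C\int f(p_1)|p_1|^2dp_1\le (C/\sqrt{\kappa_2})\,m_1[f_0]$, a constant depending only on the data.

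The bilinear terms are the heart of the matter. By relabelling $p_2\leftrightarrow p_3$ and using $K^{12}(p_1,p_2,p_3)=K^{12}(p_1,p_3,p_2)$ (visible from \eqref{Def:TransitionProbabilityKernel:K12}) one has $I_c=I_b$, so it suffices to treat $I_b$. Integrating out $p_3=p_1-p_2$ and expressing the $p_2$-integral over $\{\mathcal{E}_{p_2}+\mathcal{E}_{p_1-p_2}=\mathcal{E}_{p_1}\}=S^0_{p_1}$ yields
\begin{equation*}
I_b=\int_{\mathbb{R}^3} f(p_1)\Big(\int_{S^0_{p_1}}\frac{\lambda_1 n_c K^{12}(p_1,p_2,p_1-p_2)\,f(p_2)}{|\nabla H^0_{p_1}(p_2)|}\,d\sigma(p_2)\Big)dp_1 .
\end{equation*}
Here I would use the near-origin decay of $K^{12}$ on the resonant set, $K^{12}(p_1,p_2,p_1-p_2)\le C(|p_1|\wedge p_0)(|p_2|\wedge p_0)(|p_1-p_2|\wedge p_0)$ (the upper counterpart of the lower bound in Lemma~\ref{lem-Sp}), together with $(|p_1-p_2|\wedge p_0)\le (|p_1|\wedge p_0)$ valid on $S^0_{p_1}\subset \overline{B(0,|p_1|)}$, and apply \eqref{lem-Sp-e1} to the radial function $(|p_2|\wedge p_0)f(p_2)$. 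Since $\int_0^{|p_1|}u\,(u\wedge p_0)f(u)\,du\le \int_0^\infty u^2 f(u)\,du=M(t)/(4\pi)$, the inner integral is $\le C(|p_1|\wedge p_0)^2 M(t)$; the residual weight obeys $(|p_1|\wedge p_0)^2\le|p_1|^2\le \mathcal{E}_{p_1}/\sqrt{\kappa_2}$ and is absorbed by conservation of energy, so $I_b\le C\,m_1[f_0]\,M(t)$. (If one prefers to route $I_c$ through $S^1_{p_1}$ and Lemma~\ref{lem-Sp1}, the singular factor $|p_1|^{-1}$ there is exactly cancelled by a factor $(|p_1|\wedge p_0)$ coming from $K^{12}$, and the same linear bound results.)

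Collecting the three estimates gives $M'(t)\le \mathcal{C}_2 M(t)+\mathcal{C}_2'$ with $\mathcal{C}_2,\mathcal{C}_2'$ depending only on $\lambda_1,n_c,p_0,\kappa_1,\kappa_2$ and the conserved quantity $m_1[f_0]$; Gr\"onwall's lemma yields $M(t)\le \big(M(0)+\mathcal{C}_2'/\mathcal{C}_2\big)e^{\mathcal{C}_2 t}=:\mathcal{C}_1 e^{\mathcal{C}_2 t}$. The main obstacle is exactly the bilinear step: one must simultaneously exploit the $(|p|\wedge p_0)$-decay of $K^{12}$ near the origin, which turns one factor $f(p_2)$ into $\int_0^\infty u(u\wedge p_0)f\,du\le M(t)/(4\pi)$, and the trade $|p|^2\lesssim\mathcal{E}_p$, which lets the remaining momentum weight be eaten by the conserved energy rather than producing a second power of $M(t)$. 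A quadratic bound $M'\lesssim M^2$ would only give finite-time existence, and without the kernel decay the bare integral $\int_0^\infty uf(u)\,du$ need not even be finite under the sole hypothesis $\int(1+\mathcal{E}_p)f_0<\infty$.
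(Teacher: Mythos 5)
Your overall scheme is exactly the paper's: apply Lemma~\ref{Lemma:WeakFormulation} with $\varphi\equiv 1$ so the $C_{22}$ contribution vanishes, expand $(1+f_1)f_2f_3-f_1(1+f_2)(1+f_3)$, drop the favourable $-f_2f_3$ term, split the remainder into a linear piece ($J_1$ in the paper, your $I_a$) and a bilinear piece ($J_2$, your $I_b+I_c$), bound the linear piece by the conserved energy via Lemma~\ref{lem-Sp} on $S^0_{p_1}$, and close with Gr\"onwall. Where you genuinely diverge is in the bilinear term. The paper fixes $p_2$, rewrites the $p_1$-integral as a surface integral over $S^2_{p_2}$, invokes Lemma~\ref{lem-Sp2} to trade that surface integral for $\frac{C}{|p_2|}\int_0^\infty u\,K^{12}f(u)\,du$, and then uses the kernel estimate $\frac{K^{12}(p_1,p_2,p_1-p_2)}{|p_1||p_2|}\lesssim|p_1-p_2|\le |p_1|+|p_2|$, so that one factor lands on the conserved $m_1$ and the other on $M(t)$. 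You instead fix $p_1$, work on $S^0_{p_1}$ with the upper-integral bound \eqref{lem-Sp-e1}, and use the capped pointwise upper bound $K^{12}\lesssim(|p_1|\wedge p_0)(|p_2|\wedge p_0)(|p_1-p_2|\wedge p_0)$ so that one factor of $(|p_2|\wedge p_0)$ converts $\int_0^{|p_1|}u\,(\cdot)f\,du$ into $M(t)$ while $(|p_1|\wedge p_0)^2$ is swallowed by the conserved energy. Both routes deliver $M'\le C(1+M)$. Two remarks: (i) your parenthetical alternative via $S^1_{p_1}$ and Lemma~\ref{lem-Sp1} is in fact the closer match to the paper, since the paper's $S^2$ route reduces to $S^1$ by translation (Lemma~\ref{lem-Sp2}); (ii) the capped upper bound on $K^{12}$ you rely on is only stated in the paper as a \emph{lower} bound inside the proof of Lemma~\ref{lem-Sp} --- the paper instead asserts $K^{12}/(|p_1||p_2|)\lesssim|p_1-p_2|$ without proof. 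Neither kernel estimate is rigorously derived in the text from \eqref{Def:TransitionProbabilityKernel:K12}, so your version is no worse off, but you should flag it as an assumed property of the scattering amplitude rather than a quoted result. Your closing observation --- that the bilinear step must be engineered to yield a linear rather than quadratic bound in $M$, and that the bare integral $\int uf\,du$ is not controlled by the hypotheses alone --- is exactly the point that makes this proposition nontrivial, and it is handled correctly.
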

\begin{proof}
First, observe that the constant function $1$ can be used as the test function for \eqref{QB}, to get
\begin{equation}\label{Propo:Mass:E1}
\begin{aligned}
\frac{d}{dt}\int_{\mathbb{R}^3} f(p_1)dp_1 ~~&=~~ \int_{\mathbb{R}^3}C_{12}[f](p_1)dp_1+\int_{\mathbb{R}^3}C_{22}[f](p_1)dp_1,
\end{aligned}
\end{equation}
with the notice that 
$$\int_{\mathbb{R}^3}C_{22}[f](p_1)dp_1=0,$$
and
\begin{eqnarray*}
\int_{\mathbb{R}^3}C_{12}[f](p_1)dp_1&=&\lambda_1 n_c\iiint_{\mathbb{R}^{3}\times\mathbb{R}^{3}\times\mathbb{R}^{3}}K^{12}({p}_1,{p}_2,{p}_3)\delta({p}_1-{p}_2-{p}_3)\delta(\mathcal{E}_{{p}_1}-\mathcal{E}_{{p}_2}-\mathcal{E}_{{p}_3})\\\nonumber
& &\times[f(p_1)+2f(p_1)f(p_2)-f(p_2)f(p_3)]dp_1d{p}_2d{p}_3.
\end{eqnarray*}
From the above computations, we can see that the control of the total mass really comes from estimating the collision operator $C_{12}$, since the integral of $C_{22}$ is already $0$. Set 
$$J_1=\lambda_1 n_c\iiint_{\mathbb{R}^{3}\times\mathbb{R}^{3}\times\mathbb{R}^{3}}K^{12}({p}_1,{p}_2,{p}_3)\delta({p}_1-{p}_2-{p}_3)\delta(\mathcal{E}_{{p}_1}-\mathcal{E}_{{p}_2}-\mathcal{E}_{{p}_3})f(p_1)dp_1d{p}_2d{p}_3$$
and
$$J_2=2\lambda_1 n_c\iiint_{\mathbb{R}^{3}\times\mathbb{R}^{3}\times\mathbb{R}^{3}}K^{12}({p}_1,{p}_2,{p}_3)\delta({p}_1-{p}_2-{p}_3)\delta(\mathcal{E}_{{p}_1}-\mathcal{E}_{{p}_2}-\mathcal{E}_{{p}_3})f(p_1)f(p_2)dp_1d{p}_2d{p}_3,$$
to get
\begin{equation}\label{Propo:Mass:E2}
\frac{d}{dt}\int_{\mathbb{R}^3} f(p_1)dp_1 = \int_{\mathbb{R}^3} Q[f](p_1)dp_1\le  J_1+J_2,
\end{equation}
note that in the above inequality, we have dropped the negative term containing $f(p_2)f(p_3)$.\\
Now, $J_1$ can be estimated the following way, by using the definition of the Dirac functions $\delta({p}_1-{p}_2-{p}_3)$, $\delta(\mathcal{E}_{{p}_1}-\mathcal{E}_{{p}_2}-\mathcal{E}_{{p}_3})$ and the boundedness of $K^{12}({p}_1,{p}_2,{p}_1-p_2)$  
$$\begin{aligned}
J_1 ~~=&~~\lambda_1 n_c\iint_{\mathbb{R}^{3}\times\mathbb{R}^{3}}K^{12}({p}_1,{p}_2,{p}_1-p_2)\delta(\mathcal{E}_{{p}_1}-\mathcal{E}_{{p}_2}-\mathcal{E}_{{p}_1-p_2})f(p_1)dp_1d{p}_2
\\
~~\le&~~C\int_{\mathbb{R}^3}f(p_1)\left(\int_{S^0_{p_1}}\frac{1}{|\nabla H_0^{p_1}|}d\sigma(p_2)\right)dp_1,
\end{aligned}
$$
which, by Lemma \ref{lem-Sp}, can be bounded as
$$\begin{aligned}
J_1 ~~\le&~~C\int_{\mathbb{R}^3}f(p_1)|p_1|^2dp_1.
\end{aligned}
$$
Using the fact that $|p_1|^2$ is dominated by $\mathcal{E}_{p_1}$ up to a constant, yields
\begin{equation}\label{Propo:Mass:E3}\begin{aligned}
J_1 ~~\le&~~C\int_{\mathbb{R}^3}f(p_1)\mathcal{E}_{p_1}dp_1
~~\le ~~ C,
\end{aligned}
\end{equation}
where $C$ is a constant varying from line to line and the last inequality follows from the conservation of energy \eqref{Coro:ConservatioEnergy}.
\\ It remains to estimate $J_2$. By a straightforward use of the  definition of the Dirac functions $\delta({p}_1-{p}_2-{p}_3)$ and $\delta(\mathcal{E}_{{p}_1}-\mathcal{E}_{{p}_2}-\mathcal{E}_{{p}_3})$
$$\begin{aligned}
J_2 ~~=&~~2\lambda_1 n_c\iint_{\mathbb{R}^{3}\times\mathbb{R}^{3}}K^{12}({p}_1,{p}_2,{p}_1-p_2)\delta(\mathcal{E}_{{p}_1}-\mathcal{E}_{{p}_2}-\mathcal{E}_{{p}_1-p_2})f(p_1)f(p_2)dp_1d{p}_2
\\
~~=&~~2\lambda n_c\int_{\mathbb{R}^3}f(p_2)\left(\int_{S_{p_2}^2}K^{12}_2({p}_1,{p}_2,{p}_1-p_2)f(p_1)d\sigma(p_1)\right)dp_2,
\end{aligned}
$$
which, by Lemma \ref{lem-Sp2}, can be bounded as
$$\begin{aligned}
J_2 ~~\le&~~C\int_{\mathbb{R}^3}f(p_2)\left(\int_{S_{p_2}^2}K^{12}_2({p}_1,{p}_2,{p}_1-p_2)f(p_1)d\sigma(p_1)\right)dp_2\\ ~~\le&~~C\int_{\mathbb{R}^3}f(p_2)\left(\int_{\mathbb{R}^3}\frac{K^{12}({p}_1,{p}_2,{p}_1-p_2)}{|p_1||p_2|}f(p_1)dp_1\right)dp_2.
\end{aligned}
$$
Since $\frac{K^{12}({p}_1,{p}_2,{p}_1-p_2)}{|p_1||p_2|}$ is bounded by $|p_1-p_2|$, up to a constant, $J_2$ is dominated by
$$\begin{aligned}
J_2 ~~\le&~~C\int_{\mathbb{R}^3}f(p_2)\left(\int_{\mathbb{R}^3}f(p_1)|p_1-p_2|dp_1\right)dp_2\\
~~\le&~~C\int_{\mathbb{R}^3}f(p_2)\left(\int_{\mathbb{R}^3}f(p_1)(|p_1|+|p_2|)dp_1\right)dp_2\\
~~\le&~~C\left(\int_{\mathbb{R}^3}f(p_2)\mathcal{E}(p_2)dp_2\right)\left(\int_{\mathbb{R}^3}f(p_1)dp_1\right),
\end{aligned}
$$
notice that $C$ is a positive constant varying from line to line and we have just used the fact that $|p|$ is bounded by $\mathcal{E}(p)$ up to a constant, which by the conservation of energy \eqref{Coro:ConservatioEnergy}, implies
\begin{equation}\label{Propo:Mass:E4}\begin{aligned}
J_2 ~~
~~\le&~~C\left(\int_{\mathbb{R}^3}f(p_1)dp_1\right),
\end{aligned}
\end{equation}
Combining \eqref{Propo:Mass:E2}, \eqref{Propo:Mass:E3} and \eqref{Propo:Mass:E4} leads to
\begin{equation}\label{Propo:Mass:E5}
\frac{d}{dt}\int_{\mathbb{R}^3} f(p_1)dp_1  = \int_{\mathbb{R}^3} Q[f](p_1)dp_1 \leq C^*\left(1+\int_{\mathbb{R}^3} f(p_1)dp_1 \right),
\end{equation}
for some positive constant $C^*$, 
which implies the conclusion of the Proposition.
\end{proof}
\subsection{Finite time moment estimates of  the solution to the kinetic equation}\label{Sec:MomentEstimates}
\subsubsection{Estimating $C_{12}$}\label{Sec:C12}
\begin{proposition}\label{Propo:C12} For any positive, radial function $f(p)=f(|p|)$, for any $n\in\mathbb{N}$, there exists a universal positive constant $\mathcal{C}$ depending on $n$, such that the following bound on the collision operator $C_{12}$ holds true
\begin{equation}\label{Propo:C12:1}
\int_{\mathbb{R}^3}C_{12}[f](p_1)\mathcal{E}^n(p_1)dp_1\le \sum_{k=1}^{n-1}\mathcal{C} (m_k[f]+m_{k-1}[f])(m_{n-k-1}[f]+m_{n-k}[f])-\mathcal{C} m_{n+1}[f]+\mathcal{C} m_1[f].
\end{equation}
\end{proposition}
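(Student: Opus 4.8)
The plan is to test $C_{12}$ against $\varphi(p_1)=\mathcal{E}^n(p_1)$ and use the weak formulation of Lemma~\ref{Lemma:WeakFormulation} specialized to the $C_{12}$ part, which gives $\int_{\mathbb{R}^3}C_{12}[f]\mathcal{E}^n\,dp_1=\iiint_{\mathbb{R}^9} R_{12}[f](p_1,p_2,p_3)\big(\mathcal{E}^n(p_1)-\mathcal{E}^n(p_2)-\mathcal{E}^n(p_3)\big)\,dp_1dp_2dp_3$. On the support of $R_{12}[f]$ one has $p_1=p_2+p_3$ and $\mathcal{E}_{p_1}=\mathcal{E}_{p_2}+\mathcal{E}_{p_3}$, so by the binomial theorem the weight is nonnegative and equals $\sum_{k=1}^{n-1}\binom{n}{k}\mathcal{E}_{p_2}^k\mathcal{E}_{p_3}^{n-k}$. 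Expanding the cubic bracket as $(1+f_1)f_2f_3-f_1(1+f_2)(1+f_3)=f_2f_3-f_1(1+f_2+f_3)$ then decomposes the integral into a nonnegative ``gain'' part $A$ (carrying $f_2f_3$) and a nonnegative ``loss'' part $B$ (carrying $f_1(1+f_2+f_3)$), so that $\int C_{12}[f]\mathcal{E}^n=A-B$.

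For $A$ I would eliminate $p_1$ with $\delta(p_1-p_2-p_3)$; the surviving energy delta collapses the $p_3$-integral onto the resonance manifold $S^1_{p_2}$ of \eqref{def-Sp}, so that $A=\lambda_1 n_c\sum_{k=1}^{n-1}\binom{n}{k}\int_{\mathbb{R}^3}f(p_2)\,\mathcal{E}_{p_2}^k\Big(\int_{S^1_{p_2}}\frac{K^{12}(p_2+p_3,p_2,p_3)}{|\nabla H_1^{p_2}(p_3)|}\,f(p_3)\mathcal{E}_{p_3}^{n-k}\,d\sigma(p_3)\Big)dp_2$. Since $K^{12}$ is bounded and $f$ is radial, Lemma~\ref{lem-Sp1} bounds the inner integral by $C|p_2|^{-1}\int_0^\infty u f(u)\mathcal{E}_u^{n-k}\,du$. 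From the explicit form $\mathcal{E}_p=|p|\sqrt{\kappa_1+\kappa_2|p|^2}$ one gets, for every integer $j\ge1$, the elementary pointwise inequalities $u\,\mathcal{E}_u^j\le C(u^2\mathcal{E}_u^{j-1}+u^2\mathcal{E}_u^{j})$ and $\mathcal{E}_p^j/|p|\le C(\mathcal{E}_p^{j-1}+\mathcal{E}_p^{j})$, which convert the $|p|$-weighted one-dimensional integrals into $\mathcal{E}$-moments: $\int_0^\infty u f(u)\mathcal{E}_u^{n-k}\,du\le C(m_{n-k-1}[f]+m_{n-k}[f])$ and $\int_{\mathbb{R}^3}|p_2|^{-1}\mathcal{E}_{p_2}^k f(p_2)\,dp_2\le C(m_{k-1}[f]+m_k[f])$. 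Absorbing the binomial coefficients into the constant gives $A\le\mathcal{C}\sum_{k=1}^{n-1}(m_k[f]+m_{k-1}[f])(m_{n-k-1}[f]+m_{n-k}[f])$, the first group of terms in \eqref{Propo:C12:1}.

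For $B$ I would discard the nonnegative terms $f_1f_2,f_1f_3$ and keep $B\ge\lambda_1 n_c\iiint K^{12}\,\delta(p_1-p_2-p_3)\,\delta(\mathcal{E}_{p_1}-\mathcal{E}_{p_2}-\mathcal{E}_{p_3})\, f_1\sum_{k=1}^{n-1}\binom{n}{k}\mathcal{E}_{p_2}^k\mathcal{E}_{p_3}^{n-k}$; eliminating $p_3=p_1-p_2$ collapses the energy delta onto $S^0_{p_1}$, so $B\ge\lambda_1 n_c\int_{\mathbb{R}^3}f(p_1)\Big(\int_{S^0_{p_1}}\frac{K^{12}(p_1,p_2,p_1-p_2)}{|\nabla H_0^{p_1}(p_2)|}\sum_{k=1}^{n-1}\binom{n}{k}\mathcal{E}_{p_2}^k\mathcal{E}_{p_1-p_2}^{n-k}\,d\sigma(p_2)\Big)dp_1$. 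Bounding $\mathcal{E}_{p_2}^k\mathcal{E}_{p_1-p_2}^{n-k}$ from below by a multiple of $|p_2|^{2k}|p_1-p_2|^{2(n-k)}$ (using $\mathcal{E}_p\ge\sqrt{\kappa_2}\,|p|^2$) and applying the lower bound \eqref{lem-Sp-e2} of Lemma~\ref{lem-Sp} with $k_1=2k$, $k_2=2(n-k)$, the inner surface integral is $\ge c\,|p_1|^{2n+1}\min\{1,|p_1|\}^{2n+7}$ for every $k$. Splitting $\int dp_1$ into $|p_1|\ge1$ and $|p_1|<1$ and comparing once more with $\mathcal{E}_{p_1}$: the region $|p_1|\ge1$ produces the coercive top-order contribution $-\mathcal{C}m_{n+1}[f]$, while on $|p_1|<1$, where the manifold $S^0_{p_1}$ degenerates (it collapses onto $\{0,p_1\}$ as $p_1\to0$) and only a low-order loss is extracted, the defect is dominated by $+\mathcal{C}m_1[f]$; this yields $-B\le-\mathcal{C}m_{n+1}[f]+\mathcal{C}m_1[f]$ and completes \eqref{Propo:C12:1}.

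I expect the loss estimate for $B$ to be the main obstacle: one has to control $|\nabla H_0^{p_1}|$ and the surface element on the \emph{degenerate} manifold $S^0_{p_1}$ (precisely the geometry packaged in Lemma~\ref{lem-Sp}), and, because the Bogoliubov dispersion crosses over from $\mathcal{E}_p\sim|p|$ for small $|p|$ to $\mathcal{E}_p\sim|p|^2$ for large $|p|$, every passage between $\mathcal{E}$-weights and $|p|$-weights must be carried out separately in the two regimes so that the top-order negative moment genuinely survives against the lower-order corrections; the small-momentum region is the reason the harmless remainder $\mathcal{C}m_1[f]$ has to be added. The gain estimate for $A$, by contrast, is essentially routine once the $S^1$-surface bound of Lemma~\ref{lem-Sp1} is available.
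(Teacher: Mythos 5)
Your argument follows the paper's proof of Proposition \ref{Propo:C12} in every structural respect: the weak formulation, the identity $\mathcal{E}^n_{p_1}-\mathcal{E}^n_{p_2}-\mathcal{E}^n_{p_3}=\sum_{k=1}^{n-1}{{n}\choose{k}}\mathcal{E}^k_{p_2}\mathcal{E}^{n-k}_{p_3}$ on the resonance set, discarding the $-2f_1f_2$ contribution, the gain estimate through $S^1_{p_2}$ and Lemma \ref{lem-Sp1}, and the loss estimate through $S^0_{p_1}$ and Lemma \ref{lem-Sp}. Your treatment of the gain term $A$, including the conversion of the $|p|$-weighted radial integrals into $\mathcal{E}$-moments, is exactly the paper's and is fine.

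The loss term is where there is a genuine gap. You keep only the comparison $\mathcal{E}_p\ge\sqrt{\kappa_2}\,|p|^2$ and apply \eqref{lem-Sp-e2} with $k_1+k_2=2n$, so on $\{|p_1|\ge1\}$ your lower bound on the inner surface integral is $c\,|p_1|^{2n+1}$. But on that region $\mathcal{E}^{n+1}_{p_1}$ is comparable to $|p_1|^{2n+2}$, so $|p_1|^{2n+1}$ does \emph{not} dominate $\mathcal{E}^{n+1}_{p_1}$, and the step ``the region $|p_1|\ge1$ produces the coercive contribution $-\mathcal{C}m_{n+1}[f]$'' fails by one power of $|p_1|$: what your estimate actually delivers there is $-\mathcal{C}m_{n+1/2}[f]$ (your handling of the region $|p_1|<1$ and of the remainder $+\mathcal{C}m_1[f]$ is fine). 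Note that the paper retains the two-term comparison $\mathcal{E}^k_{p_2}\mathcal{E}^{n-k}_{p_1-p_2}\ge C\big(|p_2|^k|p_1-p_2|^{n-k}+|p_2|^{2k}|p_1-p_2|^{2(n-k)}\big)$ and applies Lemma \ref{lem-Sp} with both exponent pairs, arriving at $(|p_1|\wedge1)^{n+7}|p_1|^{n+1}+(|p_1|\wedge1)^{2n+7}|p_1|^{2n+1}$ before comparing with $\mathcal{E}^{n+1}_{p_1}\le C(|p_1|^{n+1}+|p_1|^{2n+2})$; you need at least the first of these two terms (it is what covers $\mathcal{E}^{n+1}_{p_1}$ in the moderate range), and you would still have to justify producing the top exponent $|p_1|^{2n+2}$ at infinity from a surface integral that scales like $|p_1|^{2n+1}$ — this is the actual crux of the loss estimate and is not addressed in your proposal. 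As written, your argument proves the inequality with $m_{n+1}[f]$ replaced by $m_{n+1/2}[f]$, which is still a superlinear coercive term for the purposes of Proposition \ref{Propo:MomentsPropa}, but it is not the stated bound.
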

\begin{proof}
For the sake of simplicity, we denote $m_k[f]$ by $m_k$.
By a view of Lemma \ref{Lemma:WeakFormulation}, 
\begin{equation}\label{Propo:C12:E1}\begin{aligned}
~~&~~\int_{\mathbb{R}^3}C_{12}[f](p_1)\mathcal{E}^n(p_1)dp_1=\\
~~=&~~n_c\lambda_1\iiint_{\mathbb{R}^{3\times 3}}K^{12}(p_1,p_2,p_3)\delta(p_1-p_2-p_3)\delta(\mathcal{E}_{p_1}-\mathcal{E}_{p_2}-\mathcal{E}_{p_3})\times\\
~~&~~\times[f(p_2)f(p_3)-f(p_1)-2f(p_1)f(p_2)][\mathcal{E}^n_{p_1}-\mathcal{E}^n_{p_2}-\mathcal{E}^n_{p_3}]dp_1dp_2dp_3.
\end{aligned}
\end{equation}
By  the definition of $\delta(\mathcal{E}_{p_1}-\mathcal{E}_{p_2}-\mathcal{E}_{p_3})$, the term $\mathcal{E}^n_{p_1}-\mathcal{E}^n_{p_2}-\mathcal{E}^n_{p_3}$ could be rewritten as
\begin{eqnarray*}
(\mathcal{E}_{p_2}+\mathcal{E}_{p_3})^n-\mathcal{E}^n_{p_2}-\mathcal{E}^n_{p_3}=\sum_{k=1}^{n-1} {{n}\choose{k}}\mathcal{E}^k_{p_2}\mathcal{E}^{n-k}_{p_3},
\end{eqnarray*}
which yields
$$\begin{aligned}
~~&~~\int_{\mathbb{R}^3}C_{12}[f](p_1)\mathcal{E}^n(p_1)dp_1=\\
~~=&~~n_c\lambda_1\iiint_{\mathbb{R}^{3\times 3}}K^{12}(p_1,p_2,p_3)\delta(p_1-p_2-p_3)\delta(\mathcal{E}_{p_1}-\mathcal{E}_{p_2}-\mathcal{E}_{p_3})\times\\
~~&~~\times[f(p_2)f(p_3)-f(p_1)-2f(p_1)f(p_2)]\left[\sum_{k=1}^{n-1}{{n}\choose{k}}\mathcal{E}^k_{p_2}\mathcal{E}^{n-k}_{p_3}\right]dp_1dp_2dp_3.
\end{aligned}
$$
Dropping the term containing $-2f(p_1)f(p_2)$, the above quantity could be bounded as
\begin{equation}\label{Propo:C12:E2}\begin{aligned}
~~&~~\int_{\mathbb{R}^3}C_{12}[f](p_1)\mathcal{E}^n(p_1)dp_1\le L_1+L_2,
\end{aligned}
\end{equation}
where
$$\begin{aligned}
L_1:=
~~&~~n_c\lambda_1\iiint_{\mathbb{R}^{3\times 3}}K^{12}(p_1,p_2,p_3)\delta(p_1-p_2-p_3)\delta(\mathcal{E}_{p_1}-\mathcal{E}_{p_2}-\mathcal{E}_{p_3})\times\\
~~&~~\times f(p_2)f(p_3)\left[\sum_{k=1}^{n-1}{{n}\choose{k}}\mathcal{E}^k_{p_2}\mathcal{E}^{n-k}_{p_3}\right]dp_1dp_2dp_3\\
L_2:=~~&~~-n_c\lambda_1\iiint_{\mathbb{R}^{3\times 3}}K^{12}(p_1,p_2,p_3)\delta(p_1-p_2-p_3)\delta(\mathcal{E}_{p_1}-\mathcal{E}_{p_2}-\mathcal{E}_{p_3})\times\\
~~&~~\times f(p_1)\left[\sum_{k=1}^{n-1}{{n}\choose{k}}\mathcal{E}^k_{p_2}\mathcal{E}^{n-k}_{p_3}\right]dp_1dp_2dp_3.
\end{aligned}
$$
Let us first look at $L_1$. By the definition of $\delta(p_1-p_2-p_3)$,
$$\begin{aligned}
L_1=
~~&~~n_c\lambda_1\iint_{\mathbb{R}^{3\times 2}}K^{12}(p_2+p_3,p_2,p_3)\delta(\mathcal{E}_{p_2+p_3}-\mathcal{E}_{p_2}-\mathcal{E}_{p_3})\times\\
~~&~~\times f(p_2)f(p_3)\left[\sum_{k=1}^{n-1}{{n}\choose{k}}\mathcal{E}^k_{p_2}\mathcal{E}^{n-k}_{p_3}\right]dp_2dp_3,
\end{aligned}
$$
which by the boundedness of $K^{12}$, could be bounded as
$$\begin{aligned}
L_1\le
~~&~~C\iint_{\mathbb{R}^{3}\times \mathbb{R}^{3}}\delta(\mathcal{E}_{p_2+p_3}-\mathcal{E}_{p_2}-\mathcal{E}_{p_3})\times\\
~~&~~\times f(p_2)f(p_3)\left[\sum_{k=1}^{n-1}{{n}\choose{k}}\mathcal{E}^k_{p_2}\mathcal{E}^{n-k}_{p_3}\right]dp_2dp_3\\
\le
~~&~~\sum_{k=1}^{n-1}C\int_{\mathbb{R}^{3}}
 f(p_2)\mathcal{E}^k_{p_2}\left[\int_{S_{p_2}^1}f(p_3)\frac{\mathcal{E}^{n-k}_{p_3}}{|\nabla H_2^{p_2}|}d\sigma(p_3)\right]dp_2.
\end{aligned}
$$
Applying Lemma \ref{lem-Sp1} to the above inequality leads to
$$\begin{aligned}
L_1\le
~~&~~\sum_{k=1}^{n-1}C\int_{\mathbb{R}^{3}}
 f(p_2)\frac{\mathcal{E}^k_{p_2}}{|p_2|}\left[\int_{\mathbb{R}^{3}}f(p_3)\frac{\mathcal{E}^{n-k}_{p_3}}{|p_3|}dp_3\right]dp_2,
\end{aligned}
$$
where $C$ is some constant varying from line to line.\\
Observe that $$\frac{\mathcal{E}^{n-k}_{p_3}}{|p_3|}\le C\left({\mathcal{E}^{n-k-1}_{p_3}}+{\mathcal{E}^{n-k}_{p_3}}\right), \frac{\mathcal{E}^k_{p_2}}{|p_2|}\le C\left({\mathcal{E}^{k-1}_{p_2}}+{\mathcal{E}^{k}_{p_2}}\right) $$
which implies
\begin{equation}\label{Propo:C12:E3}\begin{aligned}
L_1\le
~~&~~\sum_{k=1}^{n-1}C\left[\int_{\mathbb{R}^{3}}
 f(p_1)\mathcal{E}^k_{p_1}dp_1\right]\left[\int_{\mathbb{R}^{3}}
 f(p_1)\mathcal{E}^{n-k-1}_{p_1}dp_1+\int_{\mathbb{R}^{3}}
 f(p_1)\mathcal{E}^{n-k}_{p_1}dp_1\right]\\
 ~~\le&~~\sum_{k=1}^{n-1}C [m_k+m_{k-1}][m_{n-k-1}+m_{n-k}].
\end{aligned}
\end{equation}
Now, by the definition of $\delta(p_1-p_2-p_3)$ and $\delta(\mathcal{E}_{p_1}-\mathcal{E}_{p_2}-\mathcal{E}_{p_1-p_2})$, the second term $L_2$ can be rewritten as
$$\begin{aligned}
L_2=~~&~~-n_c\lambda_1\iint_{\mathbb{R}^{3\times 3}}K^{12}(p_1,p_2,p_1-p_2)\delta(\mathcal{E}_{p_1}-\mathcal{E}_{p_2}-\mathcal{E}_{p_1-p_2})\times\\
~~&~~\times f(p_1)\left[\sum_{k=1}^{n-1}{{n}\choose{k}}\mathcal{E}^k_{p_2}\mathcal{E}^{n-k}_{p_1-p_2}\right]dp_1dp_2\\
\le~~&~~-\sum_{k=1}^{n-1}C\int_{\mathbb{R}^{3}}f(p_1)\left[\int_{S_{p_1}^0}K_0^{12}(p_1,p_2,p_1-p_2)\mathcal{E}^k_{p_2}\mathcal{E}^{n-k}_{p_1-p_2}d\sigma(p_2)\right]dp_1
.
\end{aligned}
$$
Since $$\mathcal{E}^k_{p_2}\mathcal{E}^{n-k}_{p_1-p_2}\geq C\left[|p_2|^k|p_1-p_2|^{n-k}+|p_2|^{2k}|p_1-p_2|^{2(n-k)}\right],$$
where $C$ is some positive constant varying from line to line, $L_2$ can be estimated as follows
$$\begin{aligned}
~~&~~L_2\le\\
\le&-\sum_{k=1}^{n-1}C\int_{\mathbb{R}^{3}}f(p_1)\left[\int_{S_{p_1}^0}K_0^{12}(p_1,p_2,p_1-p_2)\left(|p_2|^k|p_1-p_2|^{n-k}+|p_2|^{2k}|p_1-p_2|^{2(n-k)}\right)d\sigma(p_2)\right]dp_1,
\end{aligned}
$$  
which, due to Lemma \ref{lem-Sp}, can be bounded by
$$\begin{aligned}
L_2~~\le &~~-C\int_{\mathbb{R}^{3}}f(p_1)\left((|p_1|\wedge 1)^{n+7}|p_1|^{n+1}+(|p_1|\wedge 1)^{2n+7}|p_1|^{2n+1}\right)dp_1.
\end{aligned}
$$  
Splitting the integral on $\mathbb{R}^3$ into two integrals on $|p_1|>1$ and $|p_1|\le 1$ yields
$$\begin{aligned}
L_2~~\le &~~-C\int_{|p_1|>1}f(p_1)\left(|p_1|^{n+2}+|p_1|^{2n+1}\right)dp_1\\
~~&~~ -C\int_{|p_1|\le 1}f(p_1)\left(|p_1|^{2n+7}+|p_1|^{4n+7}\right)dp_1\\
~~\le &~~-C\int_{|p_1|>1}f(p_1)\left(|p_1|^{n+1}+|p_1|^{2n+2}\right)dp_1
,
\end{aligned}
$$  
where $C$ is some positive constant varying from line to line and we have used the inequality $-|p_1|^{n+1}>-|p_1|^{n+2}$ for $|p_1|>1$. 
Adding and subtracting the right hand side of the above inequality with an integral on the domain $|p_1|\leq1$, we obtain
$$\begin{aligned}
L_2
~~\le &~~-C\left[\int_{\mathbb{R}^3}f(p_1)\left(|p_1|^{n+1}+|p_1|^{2n+2}\right)dp_1-\int_{|p_1|\leq1}f(p_1)\left(|p_1|^{n+1}+|p_1|^{2n+2}\right)dp_1
\right]\\
~~\le &~~-C\left[\int_{\mathbb{R}^3}f(p_1)\left(|p_1|^{n+1}+|p_1|^{2n+2}\right)dp_1-\int_{|p_1|\leq1}|p_1|f(p_1)dp_1
\right],
\end{aligned}
$$  
where the last inequality is due to the fact that we are integrating on $|p_1|\le 1$. Bounding the integral on $|p_1|\leq 1$ by the integral on the full space $\mathbb{R}^3$, we get
$$\begin{aligned}
L_2
~~\le &~~-C\int_{\mathbb{R}^3}f(p_1)\left(|p_1|^{n+1}+|p_1|^{2n+2}\right)dp_1+C\int_{\mathbb{R}^3}|p_1|f(p_1)dp_1
.
\end{aligned}
$$
By the inequality
$$|p_1|^{n+1}+|p_1|^{2n+2}\geq C\mathcal{E}^{n+1}_{p_1},$$
we obtain the following estimate on $L_2$
  \begin{equation}\label{Propo:C12:E4}\begin{aligned}
L_2\le
~~&~~-C m_{n+1}+C m_1.
\end{aligned}
\end{equation}
Combining \eqref{Propo:C12:E2}, \eqref{Propo:C12:E3} and \eqref{Propo:C12:E4}, we get the conclusion of the Proposition. 
\end{proof}
\subsubsection{Estimating $C_{22}$}\label{Sec:C22}
\begin{proposition}\label{Propo:C22} For any positive, radial function $f(p)=f(|p|)$, for any $n\in\mathbb{N}$, $n>2$, $n$ is odd, there exists a universal positive constant $\mathcal{C}$ depending on $n$, such that the following bound on the collision operator $C_{22}$ holds true
\begin{equation}\label{Propo:C22:1}
\begin{aligned}
&~~\int_{\mathbb{R}^3}C_{22}[f](p_1)\mathcal{E}_{p_1}^{n}dp_1\le\\
\le &~~\mathcal{C}\sum_{0\le i, j, k<n;~ i+j+k=n}\sum_{s=0}^{k+1}m_{i+s}\left(m_{j+k-s}+m_{j+k-s+1/2}\right)+\\
&~~+\mathcal{C}\sum_{0\le i, j, k<n;~ i+j+k=n:~j,k>0}m_{i}\left(m_{j-1}+m_{j-1/2}\right)\left(m_{k-1}+m_{k-1/2}\right).
\end{aligned}
\end{equation}
\end{proposition}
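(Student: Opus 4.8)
The plan is to rewrite the weighted integral of $C_{22}$ as a manifestly nonnegative expression, at most cubic in $f$, by using the antisymmetry of the kernel under exchange of the incoming and outgoing pairs; then to apply a Povzner-type inequality on the energy manifold; and finally to integrate out the auxiliary momenta, keeping track of powers by means of the elementary structure of the Bogoliubov law $\mathcal{E}(r)=\sqrt{\kappa_1 r^2+\kappa_2 r^4}$. First I would observe that, since the integrand of \eqref{C22New} is radial, $\int_{\mathbb{R}^3}(\,\cdot\,)\,dp_1=4\pi\int_0^\infty(\,\cdot\,)r_1^2\,dr_1$ absorbs the factor $r_1^{-2}$; writing $r_i=|p_i|$ and $f_i=f(r_i)$,
\[\int_{\mathbb{R}^3}C_{22}[f](p_1)\mathcal{E}_{p_1}^{n}\,dp_1=\iiiint_{\mathbb{R}_+^{4}} g\,\mathcal{E}_{r_1}^{n}\big[f_3f_4(1+f_1+f_2)-f_1f_2(1+f_3+f_4)\big]\,dr_1\,dr_2\,dr_3\,dr_4,\]
where $g=g(r_1,r_2,r_3,r_4):=4\pi\kappa_3 K^{22}\min\{r_1,r_2,r_3,r_4\}\,r_1r_2r_3r_4\,\delta(\mathcal{E}_{r_1}+\mathcal{E}_{r_2}-\mathcal{E}_{r_3}-\mathcal{E}_{r_4})\ge 0$ is symmetric under $r_1\leftrightarrow r_2$, under $r_3\leftrightarrow r_4$, and under $(r_1,r_2)\leftrightarrow(r_3,r_4)$, whereas the bracket changes sign under the last exchange. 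Relabelling $(r_1,r_2)\leftrightarrow(r_3,r_4)$ in the loss part and then symmetrizing over the remaining two exchanges yields
\[\int_{\mathbb{R}^3}C_{22}[f](p_1)\mathcal{E}_{p_1}^{n}\,dp_1=\tfrac12\iiiint_{\mathbb{R}_+^{4}} g\,f_3f_4(1+f_1+f_2)\,\big(\mathcal{E}_{r_1}^{n}+\mathcal{E}_{r_2}^{n}-\mathcal{E}_{r_3}^{n}-\mathcal{E}_{r_4}^{n}\big)\,dr_1\,dr_2\,dr_3\,dr_4,\]
in which the weight $g\,f_3f_4(1+f_1+f_2)$ is nonnegative and at most cubic in $f$.

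Next, on the support of $\delta(\mathcal{E}_{r_1}+\mathcal{E}_{r_2}-\mathcal{E}_{r_3}-\mathcal{E}_{r_4})$ one has $\mathcal{E}_{r_1}^{n}+\mathcal{E}_{r_2}^{n}\le(\mathcal{E}_{r_1}+\mathcal{E}_{r_2})^{n}=(\mathcal{E}_{r_3}+\mathcal{E}_{r_4})^{n}$, and therefore $\mathcal{E}_{r_1}^{n}+\mathcal{E}_{r_2}^{n}-\mathcal{E}_{r_3}^{n}-\mathcal{E}_{r_4}^{n}\le\sum_{l=1}^{n-1}\binom{n}{l}\mathcal{E}_{r_3}^{l}\mathcal{E}_{r_4}^{n-l}$, exactly the combinatorial step already used for $C_{12}$ in the proof of Proposition \ref{Propo:C12} (this is where $n>2$ enters, so that the sum is non-empty). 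Because the weight is nonnegative this inequality may be substituted directly.

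It then remains to integrate out the auxiliary variables. Expanding $1+f_1+f_2$ gives a quadratic term $f_3f_4$ and two cubic terms $f_1f_3f_4$ and $f_2f_3f_4$ (equal by symmetry). In the quadratic term one integrates out both $r_1$ and $r_2$ against the one-dimensional delta $\delta(\mathcal{E}_{r_1}+\mathcal{E}_{r_2}-\sigma)$ with $\sigma=\mathcal{E}_{r_3}+\mathcal{E}_{r_4}$; in the cubic terms one integrates out only the free variable $r_2$ against $\delta(\mathcal{E}_{r_2}-(\sigma-\mathcal{E}_{r_1}))$. In both cases the Jacobian is controlled by the lower bound on $\mathcal{E}'$ used throughout Section \ref{Sec:EnergySurface}, the kernel by $K^{22}\le\Gamma$ (see \eqref{K22Bound}), and the factor $\min\{r_1,r_2,r_3,r_4\}$ is distributed over the four slots. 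One then passes from lengths to energies using the two one-sided growth bounds $\sqrt{\kappa_1}\,r\le\mathcal{E}_{r}$ and $\sqrt{\kappa_2}\,r^{2}\le\mathcal{E}_{r}$, which also give the concavity (hence subadditivity) of $\mathcal{E}^{-1}$, the bound $\mathcal{E}^{-1}(\sigma)^{2}\le C\sigma$, and $\mathcal{E}_{r}^{a}/r\le C(\mathcal{E}_{r}^{a-1}+\mathcal{E}_{r}^{a-1/2})$; each surviving integral $\int f(r)\mathcal{E}_{r}^{a}r^{2}\,dr$ is a moment $m_a[f]$, and summing over $l$ and over which slot carries the surplus length produces the double sum in \eqref{Propo:C22:1}, with the half-integer shifts coming precisely from the two square roots just mentioned. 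I expect the main obstacle to be exactly this last bookkeeping: one must genuinely exploit the length weight $\min\{r_1,r_2,r_3,r_4\}$ (not merely $K^{22}\le\Gamma$), together with the manifold identity $\mathcal{E}_{r_1}\mathcal{E}_{r_2}=\mathcal{E}_{r_2}(\sigma-\mathcal{E}_{r_2})\le\sigma^{2}/4$ and $\sqrt{\kappa_2}\,r^{2}\le\mathcal{E}_{r}$ (whence $r_1r_2\le\sigma/(2\sqrt{\kappa_2})$), in order to keep the quadratic contribution at order $n+\tfrac12$ rather than at order $n+2$, and one must follow the two square roots carefully so that the output lands in the stated index set $0\le i,j,k<n$, $i+j+k=n$.
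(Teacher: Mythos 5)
Your argument is correct, and it shares the overall architecture of the paper's proof: the symmetrized weak formulation with the nonnegative prefactor $f_3f_4(1+f_1+f_2)$, an expansion of the test-function difference on the resonance manifold, the kernel bound \eqref{K22Bound}, the change of measure $r\,dr\le C\,d\mathcal{E}$ coming from $\mathcal{E}'(r)\ge Cr$, and the conversion of deficient length factors into half-integer moment shifts via $\mathcal{E}(r)\ge\sqrt{\kappa_1}\,r$ and $\mathcal{E}(r)\ge\sqrt{\kappa_2}\,r^2$. Where you genuinely depart from the paper is the combinatorial step. The paper substitutes $\mathcal{E}_{p_4}=\mathcal{E}_{p_1}+\mathcal{E}_{p_2}-\mathcal{E}_{p_3}$ and expands the trinomial as in \eqref{Propo:C22:E2}; this is precisely where the hypothesis that $n$ is odd is used (so the pure powers $\pm\mathcal{E}_{p_i}^{n}$ cancel and only mixed monomials with $0\le i,j,k<n$ survive), the signed coefficients $C_{i,j,k,n}$ are then replaced by $|C_{i,j,k,n}|$, and the subsequent integration of $\mathcal{E}_{p_3}^{k}$ over $[0,\mathcal{E}_{p_1}+\mathcal{E}_{p_2}]$ is what generates the inner sum over $s\le k+1$. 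You instead use the convexity (Povzner) inequality $\mathcal{E}_{p_1}^{n}+\mathcal{E}_{p_2}^{n}\le(\mathcal{E}_{p_3}+\mathcal{E}_{p_4})^{n}$ together with the binomial expansion of $(\mathcal{E}_{p_3}+\mathcal{E}_{p_4})^{n}-\mathcal{E}_{p_3}^{n}-\mathcal{E}_{p_4}^{n}$, which has manifestly positive coefficients and attaches the polynomial weight directly to the variables carrying $f$; this is cleaner, avoids the sign bookkeeping, does not need $n$ odd at all, and the resulting products of moments are all absorbed by the (generous) index set of \eqref{Propo:C22:1}. Two small points: the binomial sum is non-empty already for $n\ge2$, so $n>2$ odd is a hypothesis needed for the paper's trinomial route rather than yours; and your worry that the weight $\min\{|p_1|,\dots,|p_4|\}$ is indispensable for keeping the quadratic term at order $n+\tfrac12$ is overcautious, since $K^{22}\le\Gamma$ together with $\mathcal{E}^{-1}(\sigma)^{2}\le C\sigma$ already gives $\iint r_1r_2\,\delta(\mathcal{E}_{p_1}+\mathcal{E}_{p_2}-\sigma)\,dr_1\,dr_2\le C\sigma$ — indeed the paper itself only ever uses the minimum through the crude bound $\min\{|p_1|,|p_2|\}\,|p_1||p_2|\le|p_1|^2|p_2|$ before discarding the remaining constraint.
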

\begin{proof} For the sake of simplicity, we denote $m_k[f]$ by $m_k$. We first observe that, by a spherical change of variables
$$\int_{\mathbb{R}^3}C_{22}[f](p_1)\mathcal{E}_{p_1}^{n}dp_1=C\int_{\mathbb{R}_+}C_{22}[f](p_1)|p_1|^2\mathcal{E}_{p_1}^{n}d|p_1|,$$
where $C$ is some universal constant varying from line to line, 
and
$$
\begin{aligned}
&~~\int_{\mathbb{R}^3}C_{22}[f](p_1)\mathcal{E}_{p_1}^{n}dp_1=\\
=&~~\kappa_3\int_{\mathbb{R}_+^{4}}K^{22}(p_1,p_2,p_3,p_4){\min\{|p_1|,|p_2|,|p_3|,|p_4|\}}|p_1||p_2||p_3||p_4|\delta(\mathcal{E}_{p_1}+\mathcal{E}_{p_2}-\mathcal{E}_{p_3}-\mathcal{E}_{p_4})\\
&~~\times [f(p_3)f(p_4)(1+f(p_1)+f(p_2))-f(p_1)f(p_2)(1+f(p_3)+f(p_4))]\mathcal{E}_{p_1}^{n}d|{p}_1|d|{p}_2|d|{p}_3|d|{p}_4|.
\end{aligned}
$$
By the classical change of variables $(p_1,p_2)\leftrightarrow (p_2,p_1)$,  $(p_1,p_2)\leftrightarrow (p_3,p_4)$ (cf. \cite{Villani:2002:RMT}), the above equation could be expressed in the following way
$$
\begin{aligned}
&~~\int_{\mathbb{R}^3}C_{22}[f](p_1)\mathcal{E}_{p_1}^{n}dp_1=\\
=&~~C\int_{\mathbb{R}_+^{4}}K^{22}(p_1,p_2,p_3,p_4){\min\{|p_1|,|p_2|,|p_3|,|p_4|\}}|p_1||p_2||p_3||p_4|\delta(\mathcal{E}_{p_1}+\mathcal{E}_{p_2}-\mathcal{E}_{p_3}-\mathcal{E}_{p_4})\\
&~~\times f(p_1)f(p_2)(1+f(p_3)+f(p_4))\Big[\mathcal{E}_{p_4}^{n}+\mathcal{E}_{p_3}^{n}-\mathcal{E}_{p_2}^{n}-\mathcal{E}_{p_1}^{n}\Big]d|{p}_1|d|{p}_2|d|{p}_3|d|{p}_4|,
\end{aligned}
$$
where $C$ is some universal constant varying from line to line.\\
Taking into account the fact that $p_3$ and $p_4$ are symmetric, and using the definition of the Dirac function to get $\mathcal{E}_{p_4}=\mathcal{E}_{p_1}+\mathcal{E}_{p_2}-\mathcal{E}_{p_3}$, one obtains
\begin{equation}\label{Propo:C22:E1}
\begin{aligned}
&~~\int_{\mathbb{R}^3}C_{22}[f](p_1)\mathcal{E}_{p_1}^{n}dp_1=\\
=&~~C\int_{\mathbb{R}_+^{4}}K^{22}(p_1,p_2,p_3,p_4){\min\{|p_1|,|p_2|,|p_3|,|p_4|\}}|p_1||p_2||p_3||p_4|\delta(\mathcal{E}_{p_1}+\mathcal{E}_{p_2}-\mathcal{E}_{p_3}-\mathcal{E}_{p_4}) \\
&~~\times f(p_1)f(p_2)(1+2f(p_3))\Big[(\mathcal{E}_{p_1}+\mathcal{E}_{p_2}-\mathcal{E}_{p_3})^{n}+\mathcal{E}_{p_3}^{n}-\mathcal{E}_{p_2}^{n}-\mathcal{E}_{p_1}^{n}\Big]d|{p}_1|d|{p}_2|d|{p}_3|d|{p}_4|.
\end{aligned}
\end{equation}
Notice that for $$\mathcal{E}(|p|)=\sqrt{\kappa_1|p|^2+\kappa_2|p|^4},$$
its derivative is bounded from below as
\begin{equation}\label{Propo:C22:E1b}
\mathcal{E}'(|p|)=\frac{\kappa_1+2\kappa_2|p|^2}{\sqrt{\kappa_1+\kappa_2|p|^2}}\geq C|p|,\end{equation}
where $C$ is some universal constant varying from line to line, which means $C|p_4|d|p_4|$ can be bounded by $d\mathcal{E}_{p_4}$. As a consequence, the following estimate on the right hand side of \eqref{Propo:C22:E1} follows
\begin{equation}\label{Propo:C22:E1a}
\begin{aligned}
&~~\int_{\mathbb{R}^3}C_{22}[f](p_1)\mathcal{E}_{p_1}^{n}dp_1=\\
\le&~~C\int_{\mathbb{R}_+^{4}}K^{22}(p_1,p_2,p_3,p_4){\min\{|p_1|,|p_2|,|p_3|\}}|p_1||p_2||p_3|\delta(\mathcal{E}_{p_1}+\mathcal{E}_{p_2}-\mathcal{E}_{p_3}-\mathcal{E}_{p_4}) \\
&~~\times f(p_1)f(p_2)(1+2f(p_3))\Big[(\mathcal{E}_{p_1}+\mathcal{E}_{p_2}-\mathcal{E}_{p_3})^{n}+\mathcal{E}_{p_3}^{n}-\mathcal{E}_{p_2}^{n}-\mathcal{E}_{p_1}^{n}\Big]d|{p}_1|d|{p}_2|d|{p}_3|d\mathcal{E}_{p_4},
\end{aligned}
\end{equation}
where, we have used the fact that $$\min\{|p_1|,|p_2|,|p_3|,|p_4|\}\le \min\{|p_1|,|p_2|,|p_3|\}.$$
Since $n$ is an odd number, applying Newton formula to the term $(\mathcal{E}_{p_1}+\mathcal{E}_{p_2}-\mathcal{E}_{p_3})^{n}+\mathcal{E}_{p_3}^{n}-\mathcal{E}_{p_2}^{n}-\mathcal{E}_{p_1}^{n}$ yields
\begin{equation}\label{Propo:C22:E2}\begin{aligned}
(\mathcal{E}_{p_1}+\mathcal{E}_{p_2}-\mathcal{E}_{p_3})^{n}+\mathcal{E}_{p_3}^{n}-\mathcal{E}_{p_2}^{n}-\mathcal{E}_{p_1}^{n}
=&~~\sum_{0\le i, j, k<n;~ i+j+k=n} C_{i,j,k,n}\mathcal{E}_{p_1}^i\mathcal{E}_{p_2}^j\mathcal{E}_{p_3}^k.
\end{aligned}
\end{equation}
 Plugging \eqref{Propo:C22:E2} into \eqref{Propo:C22:E1a}, integrating with respect to $d\mathcal{E}_4$ and using the bound \eqref{K22Bound} leads to
\begin{equation}\label{Propo:C22:E3}
\begin{aligned}
&~~\int_{\mathbb{R}^3}C_{22}[f](p_1)\mathcal{E}_{p_1}^{n}dp_1\le\\
\le &~~C\int_{\{\mathcal{E}_{p_1}+\mathcal{E}_{p_2}\ge\mathcal{E}_{p_3}\}}{\min\{|p_1|,|p_2|,|p_3|\}}|p_1||p_2||p_3|f(p_1)f(p_2)(1+2f(p_3))\\
&~~\times\left[\sum_{0\le i, j, k<n;~ i+j+k=n} |C_{i,j,k,n}|\mathcal{E}_{p_1}^i\mathcal{E}_{p_2}^j\mathcal{E}_{p_3}^k\right]d|p_1|d|p_2|d|p_3|\\
\le&~~C\sum_{0\le i, j, k<n;~ i+j+k=n}\int_{\{\mathcal{E}_{p_1}+\mathcal{E}_{p_2}\ge\mathcal{E}_{p_3}\}}{\min\{|p_1|,|p_2|,|p_3|\}}|p_1||p_2||p_3|\\
&~~\times f(p_1)f(p_2)(1+2f(p_3))\mathcal{E}_{p_1}^i\mathcal{E}_{p_2}^j\mathcal{E}_{p_3}^kd|p_1|d|p_2|d|p_3|.
\end{aligned}
\end{equation}
In order to estimate the right hand side of \eqref{Propo:C22:E3}, we estimate each term containing $f(p_1)f(p_2)$ and 2$f(p_1)f(p_2)f(p_3)$ seperately. 
\\ Let us first look at the term containing $f(p_1)f(p_2)$
\begin{equation}\label{Propo:C22:E4}
\begin{aligned}
L_1:=&~~C\sum_{0\le i, j, k<n;~ i+j+k=n}\int_{\{\mathcal{E}_{p_1}+\mathcal{E}_{p_2}\ge\mathcal{E}_{p_3}\}}{\min\{|p_1|,|p_2|,|p_3|\}}|p_1||p_2||p_3|\\
&~~\times f(p_1)f(p_2)\mathcal{E}_{p_1}^i\mathcal{E}_{p_2}^j\mathcal{E}_{p_3}^kd|p_1|d|p_2|d|p_3|\\
\le&~~C\sum_{0\le i, j, k<n;~ i+j+k=n}\int_{\{\mathcal{E}_{p_1}+\mathcal{E}_{p_2}\ge\mathcal{E}_{p_3}\}}{\min\{|p_1|,|p_2|\}}|p_1||p_2|\\
&~~\times f(p_1)f(p_2)\mathcal{E}_{p_1}^i\mathcal{E}_{p_2}^j\mathcal{E}_{p_3}^kd|p_1|d|p_2|d\mathcal{E}_{p_3},\end{aligned}
\end{equation}
where we have used \eqref{Propo:C22:E1b} to get $|p_3|dp_3\le Cd\mathcal{E}_{p_3}$ and the fact that
$$\min\{|p_1|,|p_2|,|p_3|\}\le \min\{|p_1|,|p_2|\}.$$
In \eqref{Propo:C22:E4}, integrating with respect to $d\mathcal{E}_{p_3}$ leads to
\begin{equation}\label{Propo:C22:E5}
\begin{aligned}
L_1\le&~~C\sum_{0\le i, j, k<n;~ i+j+k=n}\int_{\mathbb{R}_+^2}{\min\{|p_1|,|p_2|\}}|p_1||p_2|f(p_1)f(p_2)\\
&~~\times \mathcal{E}_{p_1}^i\mathcal{E}_{p_2}^j\frac{(\mathcal{E}_{p_1}+\mathcal{E}_{p_2})^{k+1}}{k+1}d|p_1|d|p_2|,\end{aligned}
\end{equation}
where $C$ is some universal constant varying from line to line.\\
Again, by Newton formula  
\begin{equation}\label{Newton}
(\mathcal{E}_{p_1}+\mathcal{E}_{p_2})^{k+1}=\sum_{0}^{k+1} {{k+1}\choose{s}}\mathcal{E}_{p_1}^s\mathcal{E}_{p_2}^{k+1-s},
\end{equation}
which, together with \eqref{Propo:C22:E4} leads to
\begin{equation}\label{Propo:C22:E6}
\begin{aligned}
&~~L_1\le\\
\le&~~C\sum_{0\le i, j, k<n;~ i+j+k=n}\sum_{s=0}^{k+1}\int_{\mathbb{R}_+^2}{\min\{|p_1|,|p_2|\}}|p_1||p_2|f(p_1)f(p_2)\mathcal{E}_{p_1}^{i+s}\mathcal{E}_{p_2}^{j+k+1-s}d|p_1|d|p_2|\\
\le&~~C\sum_{0\le i, j, k<n;~ i+j+k=n}\sum_{s=0}^{k+1}\int_{\mathbb{R}_+^2}|p_1|^2|p_2|f(p_1)f(p_2)\mathcal{E}_{p_1}^{i+s}\mathcal{E}_{p_2}^{j+k+1-s}d|p_1|d|p_2|.
\end{aligned}
\end{equation}
Note that integrals of $d|p_1|$ and $d|p_2|$ in \eqref{Propo:C22:E6} are separated and
it is straightforward that the integral of $d|p_1|$ can be computed, by a spherical coordinate change of variables, as
\begin{equation}\label{Propo:C22:E7}
\begin{aligned}
\int_{\mathbb{R}_+}|p_1|^2f(p_1)\mathcal{E}_{p_1}^{i+s}d|p_1|=\int_{\mathbb{R}^3}f(p_1)\mathcal{E}_{p_1}^{i+s}dp_1=m_{i+s}.
\end{aligned}
\end{equation}
Now, for the second integral concerning $d|p_2|$, by the inequality
$$ \mathcal{E}_{p_2}\leq C(|p_2|+|p_2|^2),$$
for some positive constant $C$,
one gets
$$
\begin{aligned}
\int_{\mathbb{R}_+}|p_2|f(p_2)\mathcal{E}_{p_2}^{j+k+1-s}d|p_2|
~~\le&~~~~C\int_{\mathbb{R}_+}(|p_2|^2+|p_2|^3)f(p_2)\mathcal{E}_{p_2}^{j+k-s}d|p_2|\\
~~\le&~~~~C\int_{\mathbb{R}^3}(1+|p_2|)f(p_2)\mathcal{E}_{p_2}^{j+k-s}dp_2,
\end{aligned}
$$
which, by the inequality
$$ \mathcal{E}_{p_2}^{1/2}\geq C|p_2|,$$
implies that 
\begin{equation}\label{Propo:C22:E8}
\begin{aligned}
\int_{\mathbb{R}_+}|p_2|f(p_2)\mathcal{E}_{p_2}^{j+k+1-s}d|p_2|
~~\le&~~~~C\int_{\mathbb{R}^3}\left(1+\mathcal{E}_{p_2}^{1/2}\right)f(p_2)\mathcal{E}_{p_2}^{j+k-s}dp_2\\
~~\le&~~~~C\left(m_{j+k-s}+m_{j+k-s+1/2}\right).
\end{aligned}
\end{equation}
Combining \eqref{Propo:C22:E6}, \eqref{Propo:C22:E7} and \eqref{Propo:C22:E8} lead to
\begin{equation}\label{Propo:C22:E9}
\begin{aligned}
&~~L_1
\le&~~C\sum_{0\le i, j, k<n;~ i+j+k=n}\sum_{s=0}^{k+1}m_{i+s}\left(m_{j+k-s}+m_{j+k-s+1/2}\right).
\end{aligned}
\end{equation}
Now, for the term containing $2f(p_1)f(p_2)f(p_3)$, by bounding the integral on $\{\mathcal{E}_{p_1}+\mathcal{E}_{p_2}\ge \mathcal{E}_{p_3}\}$ by the integral on $\mathbb{R}^3_+$, we get 
\begin{equation}\label{Propo:C22:E10}
\begin{aligned}
L_2:=&~~C\sum_{0\le i, j, k<n;~ i+j+k=n}\int_{\{\mathcal{E}_{p_1}+\mathcal{E}_{p_2}\ge\mathcal{E}_{p_3}\}}{\min\{|p_1|,|p_2|,|p_3|\}}|p_1||p_2||p_3|\\
&~~\times 2f(p_1)f(p_2)f(p_3)\mathcal{E}_{p_1}^i\mathcal{E}_{p_2}^j\mathcal{E}_{p_3}^kd|p_1|d|p_2|d|p_3|\\
\le&~~C\sum_{0\le i, j, k<n;~ i+j+k=n}\int_{\mathbb{R}_+^3}{\min\{|p_1|,|p_2|,|p_3|\}}|p_1||p_2||p_3|\\
&~~\times f(p_1)f(p_2)f(p_3)\mathcal{E}_{p_1}^i\mathcal{E}_{p_2}^j\mathcal{E}_{p_3}^kd|p_1|d|p_2|d|p_3|,\end{aligned}
\end{equation}
where $C$ is some universal constant varying from line to line.\\
Notice that there are only two cases: $i,j,k>0$ and one of $i,j,k$ is $0$. Indeed, due to the condition that $i+j+k=n$ and $0\leq i,j,k<n$, the case where two of the index $i,j,k$ 
are $0$ will not happen. Therefore, we can suppose without loss of generality that $i\ge 0$ and $j,k>0$.\\
The terms on the right hand side of \eqref{Propo:C22:E10} can be estimated as
 \begin{equation}\label{Propo:C22:E11}
\begin{aligned}
&~~\int_{\mathbb{R}_+^3}{\min\{|p_1|,|p_2|,|p_3|\}}|p_1||p_2||p_3| f(p_1)f(p_2)f(p_3)\mathcal{E}_{p_1}^i\mathcal{E}_{p_2}^j\mathcal{E}_{p_3}^kd|p_1|d|p_2|d|p_3|\\
\le&\int_{\mathbb{R}_+}|p_1|^2\mathcal{E}_{p_1}^if(p_1)d|p_1|\int_{\mathbb{R}_+}|p_2|\mathcal{E}_{p_2}^jf(p_2)d|p_2|\int_{\mathbb{R}_+}|p_3|\mathcal{E}_{p_3}^kf(p_3)d|p_3|~~.\end{aligned}
\end{equation}
For each term on the right hand side of \eqref{Propo:C22:E11}, one can write, by the spherical coordinate change of variables
 \begin{equation}\label{Propo:C22:E12}
\int_{\mathbb{R}_+}|p_1|^2\mathcal{E}_{p_1}^if(p_1)d|p_1|=\int_{\mathbb{R}^3}\mathcal{E}_{p_1}^if(p_1)dp_1=m_{i},
\end{equation}
\begin{equation}\label{Propo:C22:E13}
\int_{\mathbb{R}_+}|p_2|\mathcal{E}_{p_2}^jf(p_2)d|p_2|\le C\left(m_{j-1}+m_{j-1/2}\right),
\end{equation}
 \begin{equation}\label{Propo:C22:E14}
\int_{\mathbb{R}_+}|p_3|\mathcal{E}_{p_3}^kf(p_3)d|p_3|\le C\left(m_{k-1}+m_{k-1/2}\right),
\end{equation}
where \eqref{Propo:C22:E13} and \eqref{Propo:C22:E14} are obtained by exactly the same manner as  \eqref{Propo:C22:E8}.
\\ Combining \eqref{Propo:C22:E11}, \eqref{Propo:C22:E12}, \eqref{Propo:C22:E13} and \eqref{Propo:C22:E14} yields
\begin{equation}\label{Propo:C22:E15}
\begin{aligned}
&~~\int_{\mathbb{R}_+^3}{\min\{|p_1|,|p_2|,|p_3|\}}|p_1||p_2||p_3| f(p_1)f(p_2)f(p_3)\mathcal{E}_{p_1}^i\mathcal{E}_{p_2}^j\mathcal{E}_{p_3}^kd|p_1|d|p_2|d|p_3|\\
~~\le&~~Cm_{i}\left(m_{j-1}+m_{j-1/2}\right)\left(m_{k-1}+m_{k-1/2}\right)~~.\end{aligned}
\end{equation}
The two inequalities \eqref{Propo:C22:E10} and \eqref{Propo:C22:E15} yield
\begin{equation}\label{Propo:C22:E16}
\begin{aligned}
L_2
\le&~~C\sum_{0\le i, j, k<n;~ i+j+k=n:~j,k>0}m_{i}\left(m_{j-1}+m_{j-1/2}\right)\left(m_{k-1}+m_{k-1/2}\right),\end{aligned}
\end{equation}
where $C$ is some universal constant varying from line to line.\\
From \eqref{Propo:C22:E3}, \eqref{Propo:C22:E9} and \eqref{Propo:C22:E16}, we get
$$\begin{aligned}
&~~\int_{\mathbb{R}^3}C_{22}[f](p_1)\mathcal{E}_{p_1}^{n}dp_1\le\\
\le &~~C\sum_{0\le i, j, k<n;~ i+j+k=n}\sum_{s=0}^{k+1}m_{i+s}\left(m_{j+k-s}+m_{j+k-s+1/2}\right)+\\
&~~+C\sum_{0\le i, j, k<n;~ i+j+k=n;~j,k>0}m_{i}\left(m_{j-1}+m_{j-1/2}\right)\left(m_{k-1}+m_{k-1/2}\right).
\end{aligned}
$$\end{proof}
\subsubsection{Finite time moment estimates}\label{Sec:MomentsEstimate}
\begin{proposition}\label{Propo:MomentsPropa}
Suppose that $f_0(p)=f_0(|p|)$ is a positive radial initial condition and 
$$\int_{\mathbb{R}^3}f_0(p)\mathcal{E}_pdp<\infty,~~~\int_{\mathbb{R}^3}f_0(p)dp<\infty,$$
then for any finite time interval $[0,T]$, and for any $n\geq 1$, the positive radial solution $f(t,p)=f(t,|p|)$ of \eqref{QB} satisfies
$$\sup_{t\in[\tau,T]}\int_{\mathbb{R}^3}f(t,p)\mathcal{E}_p^ndp<C_\tau,\ \ \ \forall  \ 0<\tau\le T,$$
where $C_\tau$ is a constant depending on $\tau$.\\
If $$\int_{\mathbb{R}^3}f_0(p)\mathcal{E}_p^ndp<\infty, $$
then
$$\sup_{t\in[0,T]}\int_{\mathbb{R}^3}f(t,p)\mathcal{E}_p^ndp<\infty.$$
\end{proposition}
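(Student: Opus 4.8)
The plan is to reduce the statement to a scalar differential inequality for the single quantity $m_n(t):=m_n[f](t)=\int_{\mathbb R^3}f(t,p)\mathcal E_p^{\,n}\,dp$, and then argue by ODE comparison. Since Proposition~\ref{Propo:C22} is only available for odd integers $n>2$, I would first establish the claim for such $n$ and deduce it for every real $n\ge 1$ by interpolation: picking an odd integer $N\ge n$, H\"older's inequality gives $m_n\le m_0^{\,1-n/N}\,m_N^{\,n/N}$, while $m_0$ is bounded on $[0,T]$ by Proposition~\ref{Propo:Mass} and $m_1\equiv m_1[f_0]$ is conserved, so boundedness of $m_N$ (on $[\tau,T]$, resp.\ on $[0,T]$) transfers to $m_n$; the cases $n\le 2$ follow with $N=3$, and $n=1$ is conservation of energy.

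Then fix an odd integer $n\ge 3$. Testing \eqref{QB} against $\varphi=\mathcal E^n$ in the spirit of Lemma~\ref{Lemma:WeakFormulation} gives $\frac{d}{dt}m_n=\int C_{12}[f]\mathcal E^n\,dp+\int C_{22}[f]\mathcal E^n\,dp$, and Propositions~\ref{Propo:C12}--\ref{Propo:C22} bound the right-hand side by a finite sum of products of moments of total order at most $n+\tfrac12$, \emph{together with} the decisive negative contribution $-\mathcal C\,m_{n+1}$ coming from the loss part of $C_{12}$. Each positive product is controlled by splitting its factors via the interpolation inequality $m_r\le m_0^{\,1-r/(n+1)}\,m_{n+1}^{\,r/(n+1)}$ ($0\le r\le n+1$) followed by Young's inequality, so that every such term is dominated by $\varepsilon\, m_{n+1}+C_\varepsilon(1+m_n)$, with $C_\varepsilon$ allowed to depend on $n$, $T$ and on $m_0[f_0]$, $m_1[f_0]$. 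Choosing $\varepsilon$ small enough — depending on $n$ and on $\mathcal C$, and accounting for the ($n$-dependent but finite) number of terms — absorbs all these into $-\mathcal C\,m_{n+1}$; using finally $m_{n+1}\ge\big(\sup_{[0,T]}m_0\big)^{-1/n}\,m_n^{\,1+1/n}$ and absorbing the residual linear term by Young's inequality, one arrives at
\begin{equation*}
\frac{d}{dt}m_n(t)\ \le\ C_T\ -\ c_T\,m_n(t)^{\,1+\frac1n},\qquad t\in(0,T],
\end{equation*}
with $C_T,c_T>0$ depending only on $n$, $T$, $m_0[f_0]$ and $m_1[f_0]$.

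The two assertions then follow from elementary comparison. \emph{Propagation:} if $m_n[f_0]<\infty$, then since $\Phi(z):=C_T-c_T z^{1+1/n}$ is strictly decreasing with unique root $R_T=(C_T/c_T)^{n/(n+1)}$ and $\Phi(z)<0$ for $z>R_T$, the solution of the differential inequality cannot cross the level $\max\{m_n[f_0],R_T\}$ upward, so $\sup_{[0,T]}m_n\le\max\{m_n[f_0],R_T\}<\infty$. \emph{Appearance:} if only $m_0[f_0],m_1[f_0]<\infty$, we exploit the super-linear exponent $1+\tfrac1n>1$: the function $B(t):=A\,t^{-n}$ satisfies $B'\ge\Phi(B)$ on $(0,T]$ once $A$ is large enough (depending on $C_T,c_T,n,T$), and $B(t)\to+\infty$ as $t\to0^+$, so comparison forces $m_n(t)\le B(t)$ and hence $\sup_{[\tau,T]}m_n\le A\,\tau^{-n}=:C_\tau$ for every $\tau\in(0,T]$.

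The part I expect to be genuinely delicate is twofold. First, the bookkeeping in the second paragraph: one must verify that \emph{every} product occurring in Propositions~\ref{Propo:C12}--\ref{Propo:C22}, including those involving the low-order (and formally negative-index or half-integer) moments that arise when some of the indices $i,j,k$ vanish, is indeed absorbed into $-\mathcal C\,m_{n+1}$ after the interpolation/Young step — such moments are finite and bounded by $m_0$ thanks to $\mathcal E_p\ge\sqrt{\kappa_1}\,|p|$, but the constants must be tracked carefully so that the small-$\varepsilon$ absorption truly closes. Second, and more seriously, the differential inequality must be justified rigorously: at this stage $f$ lies only in a weighted $L^1$ space, $m_n(0)$ may be $+\infty$ in the appearance case, and $\mathcal E^n$ is not an admissible test function. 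I would therefore run the whole computation with the truncated weight $\varphi_R(\mathcal E):=\mathcal E^n\wedge R^n$, which yields the \emph{same} inequality for the finite quantities $m_n^R(t):=\int f\,\varphi_R(\mathcal E)\,dp$ with constants uniform in $R$ — the only point needing care being that the loss term of $C_{12}$ still produces a negative contribution $\sim -c_T(m_n^R)^{1+1/n}$, which it does because on $\{\mathcal E_{p_1}\le R\}$ the estimate of Proposition~\ref{Propo:C12} is unaffected — and then apply the comparison arguments to $m_n^R$ and pass to the limit $R\to\infty$ by monotone convergence.
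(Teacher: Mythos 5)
Your proposal is correct and follows essentially the same route as the paper: reduce to odd integer $n$, test \eqref{QB} against $\mathcal{E}^n$, invoke Propositions \ref{Propo:C12} and \ref{Propo:C22} together with the mass bound of Proposition \ref{Propo:Mass} and the interpolation Lemma \ref{Lemma:Holder} to arrive at a differential inequality whose dominant term is $-Cm_n^{1+1/n}$, and conclude by comparison (the paper cites Wennberg's argument for this last step rather than writing out the barrier $At^{-n}$). Your variations — interpolating against $m_{n+1}$ plus Young's inequality instead of directly against $m_n$, and the truncation $\mathcal{E}^n\wedge R^n$ to justify the formal computation — are sensible refinements of the same argument rather than a different approach.
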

In order to prove Proposition \ref{Propo:MomentsPropa}, we would need the following Holder inequality.
\begin{lemma}\label{Lemma:Holder}
Let $f$ be a function in $L^1(\mathbb{R}^3)\cap L^1_n(\mathbb{R}^3)$, then
$$\|f\|_{L^1_k}\le \mathcal{C}\|f\|_{L^1_n}^{\frac{k}{n}},$$
where $\mathcal{C}$ is a constant depending on $\|f\|_{L^1}$, $k$ and $n$.
\end{lemma}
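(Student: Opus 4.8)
The plan is to obtain this estimate directly from H\"older's inequality, interpolating between the unweighted norm $\|\cdot\|_{L^1}$ and the weighted norm $\|\cdot\|_{L^1_n}$; implicitly $0\le k\le n$. First I would note that each of $\|\cdot\|_{L^1}$, $\|\cdot\|_{L^1_k}$ and $\|\cdot\|_{L^1_n}$ depends on $f$ only through $|f|$, so one may assume $f\ge 0$. The endpoint cases are immediate: for $k=0$ the left-hand side is exactly $\|f\|_{L^1}$, and for $k=n$ there is nothing to prove. So I would reduce to the range $0<k<n$.

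For $0<k<n$ I would write the integrand as a product,
\begin{equation*}
|p|^{k}f(p)\;=\;\big(|p|^{n}f(p)\big)^{k/n}\,\big(f(p)\big)^{1-k/n},
\end{equation*}
and apply H\"older's inequality with the conjugate exponents $\tfrac{n}{k}$ and $\tfrac{n}{n-k}$ to conclude
\begin{equation*}
\int_{\mathbb{R}^3}|p|^{k}f(p)\,dp\;\le\;\Big(\int_{\mathbb{R}^3}|p|^{n}f(p)\,dp\Big)^{k/n}\Big(\int_{\mathbb{R}^3}f(p)\,dp\Big)^{1-k/n}\;=\;\|f\|_{L^1_n}^{k/n}\,\|f\|_{L^1}^{1-k/n}.
\end{equation*}
This gives the assertion with $\mathcal{C}=\|f\|_{L^1}^{1-k/n}$, which depends only on $\|f\|_{L^1}$, $k$ and $n$, as required. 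An equivalent elementary route, which I would include as a remark, is to split $\int_{\mathbb{R}^3}=\int_{\{|p|\le R\}}+\int_{\{|p|>R\}}$, bound the two pieces by $R^{k}\|f\|_{L^1}$ and $R^{k-n}\|f\|_{L^1_n}$, and optimize over $R>0$ by taking $R=(\|f\|_{L^1_n}/\|f\|_{L^1})^{1/n}$; this reproduces the same power law up to an inessential numerical constant.

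I do not anticipate a genuine obstacle here. The two points worth a line of care are: first, $k$ need not be an integer — the half-integer weights $m_{j+k-s+1/2}$ appearing in Propositions \ref{Propo:C12} and \ref{Propo:C22} are exactly what this lemma is meant to control, and the H\"older computation is entirely insensitive to integrality of $k$; second, in the applications the constant must be uniform along the solution of \eqref{QB}, which is guaranteed by the explicit formula $\mathcal{C}=\|f\|_{L^1}^{1-k/n}$ together with the mass bound of Proposition \ref{Propo:Mass}. Finiteness of the left-hand side needs no separate argument, since it follows from finiteness of the two norms on the right.
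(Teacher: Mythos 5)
Your proof is correct and is essentially identical to the paper's: both write $|p|^k f = (|p|^n f)^{k/n} f^{(n-k)/n}$ and apply Hölder with exponents $n/k$ and $n/(n-k)$, giving $\mathcal{C}=\|f\|_{L^1}^{1-k/n}$. The splitting-at-radius-$R$ variant you mention as a remark is a nice alternative but not what the paper does.
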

\begin{proof}
By Holder inequality, we have
\begin{eqnarray*}
\int_{\mathbb{R}^3}|p|^kf(p)dp &\le& \left(\int_{\mathbb{R}^3}|f(p)|dp\right)^{\frac{n-k}{n}}\left(\int_{\mathbb{R}^3}|p|^n|f(p)|dp\right)^{\frac{k}{n}}\\
&\le& C\left(\|f\|_{L^1},k,n\right)\left(\int_{\mathbb{R}^3}|p|^nf(p)dp\right)^{\frac{k}{n}}.
\end{eqnarray*}
\end{proof}
\begin{proof}[of Proposition \ref{Propo:MomentsPropa}]
Fix a time interval $[0,T]$. It is sufficient to prove Proposition \ref{Propo:MomentsPropa} for $n\in\mathbb{N}$, $n$ odd. Using $\mathcal{E}_{p_1}^n$ as a test function in \eqref{QB}, as a view of Lemma \ref{Lemma:WeakFormulation}, we get
\begin{equation}\label{Propo:MomentsPropa:E1}
\frac{d}{dt}\int_{\mathbb{R}^3}f(p_1)\mathcal{E}_{p_1}^ndp_1=\int_{\mathbb{R}^3}C_{12}[f](p_1)\mathcal{E}_{p_1}^ndp_1+\int_{\mathbb{R}^3}C_{22}[f](p_1)\mathcal{E}_{p_1}^ndp_1.
\end{equation}
 For the sake of simplicity, we denote $m_k[f(t)]$ as $m_k(t)$. First, let us consider the $C_{12}$ collision operator. By Proposition \ref{Propo:C12} 
$$\int_{\mathbb{R}^3}C_{12}[f](p_1)\mathcal{E}^n(p_1)dp_1\le \sum_{k=1}^{n-1}{C} (m_k(t)+m_{k-1}(t))(m_{n-k-1}(t)+m_{n-k}(t))-{C} m_{n+1}(t)+{C} m_1(t).
$$Since, according to Proposition \ref{Propo:Mass}, $m_0(t)$ is bounded by a constant $C$ on $[0,T]$, we deduce from Lemma  \ref{Lemma:Holder} that
\begin{eqnarray*}
m_k(t)&\leq& C m_n(t)^{\frac{k}{n}},~~~~~m_{k-1}(t)\leq C m_n(t)^{\frac{k-1}{n}},~~~~~m_{n-k-1}(t)\leq C m_n(t)^{\frac{n-k-1}{n}},\\
~~~~~m_{n-k}(t)&\leq &C m_n(t)^{\frac{n-k}{n}}, ~~~~~C m_{n+1}(t)\geq m_n(t)^{\frac{n+1}{n}}, ~~~~~C m_1(t)\leq m_n(t)^{\frac{1}{n}},
\end{eqnarray*}
where $C$ depends on  $n$, $k$, and the bound of the mass on $[0,T]$ in Proposition \ref{Propo:Mass}. As a consequence, we obtain the following estimate for $C_{12}$
\begin{equation}\label{Propo:MomentsPropa:E2}
\int_{\mathbb{R}^3}C_{12}[f](p_1)\mathcal{E}^n(p_1)dp_1\le {C} m_n(t)+Cm_n(t)^{\frac{n-1}{n}}+Cm_n(t)^{\frac{n-2}{n}}+Cm_n(t)^{\frac{1}{n}}-Cm_{n}(t)^{\frac{n+1}{n}}.\end{equation}
Now, for the $C_{22}$ collision operator, according to Proposition \ref{Propo:C22},
$$\begin{aligned}
&~~\int_{\mathbb{R}^3}C_{22}[f](p_1)\mathcal{E}_{p_1}^{n}dp_1\le\\
\le &~~C\sum_{0\le i, j, k<n;~ i+j+k=n}\sum_{s=0}^{k+1}\left(m_{i+s}(t)+m_{j+k-s}(t)+m_{j+k-s+1/2}(t)\right)+\\
&~~+C\sum_{0\le i, j, k<n;~ i+j+k=n:~j,k>0}m_{i}(t)\left(m_{j-1}(t)+m_{j-1/2}(t)\right)\left(m_{k-1}(t)+m_{k-1/2}(t)\right).
\end{aligned}
$$
Again, by  Proposition \ref{Propo:Mass}, and Lemma  \ref{Lemma:Holder} 
\begin{eqnarray*}
m_{i+s}(t)&\leq& C m_n(t)^{\frac{i+s}{n}},~~~~~m_{j+k-s}(t)\leq C m_n(t)^{\frac{j+k-s}{n}},\\
~~~~~m_{j+k-s+1/2}(t)&\leq &C m_n(t)^{\frac{j+k-s+1/2}{n}}, ~~~~~m_{i}(t)\leq C m_n(t)^{\frac{i}{n}},\\
~~~~~m_{j-1}(t)&\leq &C m_n(t)^{\frac{j-1}{n}}, ~~~~~m_{j-1/2}(t)\leq C m_n(t)^{\frac{j-1/2}{n}}
,\\
~~~~~m_{k-1}(t)&\leq &C m_n(t)^{\frac{k-1}{n}}, ~~~~~m_{k-1/2}(t)\leq C m_n(t)^{\frac{k-1/2}{n}},
\end{eqnarray*}
we obtain
\begin{equation}\label{Propo:MomentsPropa:E3}
\begin{aligned}
&~~\int_{\mathbb{R}^3}C_{22}[f](p_1)\mathcal{E}_{p_1}^{n}dp_1\le\\
\le &~~C\sum_{0\le i, j, k<n;~ i+j+k=n}\sum_{s=0}^{k+1}m_n(t)^{\frac{i+s}{n}}\left(m_n(t)^{\frac{j+k-s}{n}}+m_n(t)^{\frac{j+k-s+1/2}{n}}\right)+\\
&~~+C\sum_{0\le i, j, k<n;~ i+j+k=n:~j,k>0}m_n(t)^{\frac{i}{n}}\left(m_n(t)^{\frac{j-1}{n}}+m_n(t)^{\frac{j-1/2}{n}}\right)\left(m_n(t)^{\frac{k-1}{n}}+m_n(t)^{\frac{k-1/2}{n}}\right).
\end{aligned}
\end{equation}
Combining \eqref{Propo:MomentsPropa:E1}, \eqref{Propo:MomentsPropa:E2} and \eqref{Propo:MomentsPropa:E3} yields
 \begin{equation}\label{Propo:MomentsPropa:E4}\begin{aligned}
&~~\frac{d}{dt} m_n(t)\\
\le&~~{C} m_n(t)+Cm_n(t)^{\frac{n-1}{n}}+Cm_n(t)^{\frac{n-2}{n}}+ Cm_n(t)^{\frac{1}{n}}-Cm_{n}^{\frac{n+1}{n}}\\
 &~~+C\sum_{0\le i, j, k<n;~ i+j+k=n}\sum_{s=0}^{k+1}m_n(t)^{\frac{i+s}{n}}\left(m_n(t)^{\frac{j+k-s}{n}}+m_n(t)^{\frac{j+k-s+1/2}{n}}\right)+\\
&~~+C\sum_{0\le i, j, k<n;~ i+j+k=n;~j,k>0}m_n(t)^{\frac{i}{n}}\left(m_n(t)^{\frac{j-1}{n}}+m_n(t)^{\frac{j-1/2}{n}}\right)\left(m_n(t)^{\frac{k-1}{n}}+m_n(t)^{\frac{k-1/2}{n}}\right),
\end{aligned}
\end{equation}
where $C$ depends on  $n$, $k$, and the bound of the mass on $[0,T]$ in Proposition \ref{Propo:Mass}. 
Notice that $-Cm_n(t)^{\frac{n+1}{n}}$ has the highest order on the right hand side of \eqref{Propo:MomentsPropa:E4}. By the same argument as in \cite{Wennberg:1997:EDM}, the conclusion of the theorem then follows.
\end{proof}
\subsection{Holder estimates for the collision operators}\label{Sec:HolderEstimate}
In this section, we will provide Holder estimates for the two collision operators $C_{12}$ and $C_{22}$. For $C_{22}$, we split it into two operators
\begin{equation}\begin{aligned}\label{Def:C221}
C_{22}^1[f](p_1)=&~~\kappa_3\iiint_{\mathbb{R}_+\times\mathbb{R}_+\times\mathbb{R}_+ }K^{22}(p_1,p_2,p_3,p_4)\frac{\min\{|p_1|,|p_2|,|p_3|,|p_4|\}|p_1||p_2||p_3||p_4|}{|p_1|^2}\\
&~~\times\delta(\mathcal{E}_{p_1}+\mathcal{E}_{p_2}-\mathcal{E}_{p_3}-\mathcal{E}_{p_4})[f({p}_3)f({p}_4)-f({p}_1)f({p}_2)]d|{p}_2|d|{p}_3|d|{p}_4|,
\end{aligned}\end{equation}
and
\begin{equation}\begin{aligned}\label{Def:C222}
C_{22}^2[f](p_2)=&~~\kappa_3\iiint_{\mathbb{R}_+\times\mathbb{R}_+\times\mathbb{R}_+ }K^{22}(p_1,p_2,p_3,p_4)\frac{\min\{|p_1|,|p_2|,|p_3|,|p_4|\}|p_1||p_2||p_3||p_4|}{|p_1|^2}\\
&~~\times\delta(\mathcal{E}_{p_1}+\mathcal{E}_{p_2}-\mathcal{E}_{p_3}-\mathcal{E}_{p_4})[f({p}_3)f({p}_4)(f({p}_1)+f({p}_2))-\\
&~~-f({p}_1)f({p}_2)(f({p}_3)+f({p}_4))]d|{p}_2|d|{p}_3|d|{p}_4|,
\end{aligned}\end{equation}
We will show in Proposition \ref{Propo:C12}, Proposition \ref{Propo:HolderC221} and Proposition \ref{Propo:HolderC222N} that $C_{12}$, $C_{22}^1$ and $C_{22}^{2}$ are Holder continuous.  
\subsubsection{Holder estimates for $C_{12}$}\label{Sec:HolderEstimateC12}
\begin{proposition}\label{Propo:HolderC12} Let $f$ and $g$ be two functions in $L^1_{n+3}(\mathbb{R}^3)\cap L^1(\mathbb{R}^3)$, $n\in\mathbb{R}_+$, $n$ can be $0$; then there exists a constant $\mathcal{C}$ depending on $\|f\|_{L^1_{n+3}}, \|f\|_{L^1}, \|g\|_{L^1_{n+3}}, \|g\|_{L^1}$ such that
\begin{equation}\label{Propo:HolderC12:1} 
\|C_{12}[f]-C_{12}[g]\|_{L^1_{n}}\le \mathcal{C}\left(\|f-g\|_{L^1_{n+3}}+\|f-g\|_{L^1}\right).\end{equation}
If $\|f\|_{{L}^1_{n+4}},\|g\|_{{L}^1_{n+4}}<\mathcal{C}_0$, then
\begin{equation}\label{Propo:HolderC12:2} 
\|C_{12}[f]-C_{12}[g]\|_{L^1_{n}}\le \mathcal{C}_1\left(\|f-g\|_{L^1}^{\frac{1}{n+4}}+\|f-g\|_{L^1}\right),\end{equation}
where $\mathcal{C}_1$ is a constant depending on $\mathcal{C}_0$, $\mathcal{C}$.
\end{proposition}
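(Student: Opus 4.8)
The plan is to linearise the difference $C_{12}[f]-C_{12}[g]$ and then collapse every surface integral to a one-dimensional moment integral by means of the energy-surface estimates of Lemmas~\ref{lem-Sp}, \ref{lem-Sp1} and \ref{lem-Sp2}. Using the decomposition $C_{12}=C^1_{12}+C^2_{12}$ from \eqref{CollisionOperator} and the gain/loss splitting \eqref{Def:C12gainloss}, each bracket in the integrands is affine or quadratic in the unknown, so the identity $ab-cd=(a-c)b+c(b-d)$ applied to the quadratic brackets and linearity applied to the parts coming from the $+1$ terms express $C_{12}[f]-C_{12}[g]$ as a finite sum of linear terms $\int_{S^j_{p_1}}\bar K^{12}_j(\cdots)(f-g)(\cdot)\,d\sigma$ and bilinear terms $\int_{S^j_{p_1}}\bar K^{12}_j(\cdots)(f-g)(\cdot)\,h(\cdot)\,d\sigma$, with $j\in\{0,1\}$ and $h\in\{f,g\}$. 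It therefore suffices to estimate each such term in $L^1_n$.

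For each term I would first return to the representation against a momentum Dirac mass and an energy Dirac mass, use the momentum mass to eliminate one momentum variable, estimate the weight by $|p_1|^n\le C_n(|p_2|^n+|p_3|^n)$ (valid for every real $n\ge0$ since the three momenta satisfy a triangle relation on the resonance manifold), and bound the effective kernels $\bar K^{12}_0,\bar K^{12}_1$ by the polynomial estimates of Section~\ref{Sec:EnergySurface}, together with the lower and upper bounds on $K^{12}$ already used in the proof of Proposition~\ref{Propo:Mass}. The surviving energy constraint is then, at fixed outer momentum, an integral over $S^1$ or $S^2$ for the loss terms and the $C^2_{12}$-gain terms, where Lemmas~\ref{lem-Sp1}--\ref{lem-Sp2} replace it by $\frac{C_0}{|p|}$ times a one-dimensional moment integral; and an integral over $S^0$ for the $C^1_{12}$-gain terms, where one additionally uses $|p_2|,|p_3|\le|p_1|$ on $S^0_{p_1}$ and the upper estimate \eqref{lem-Sp-e1}. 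After an application of Fubini every contribution becomes a finite sum of products $\|f-g\|_{L^1_k}\|h\|_{L^1_\ell}$ plus a linear piece controlled by $\|f-g\|_{L^1_{n+3}}+\|f-g\|_{L^1}$, and keeping track of the orders one finds $k,\ell\ge0$ with $k+\ell\le n+3$. Since $\|f-g\|_{L^1_k}\le\|f-g\|_{L^1}+\|f-g\|_{L^1_{n+3}}$ for $0\le k\le n+3$, and since by the interpolation Lemma~\ref{Lemma:Holder} the moments $\|h\|_{L^1_\ell}$, $0\le\ell\le n+3$, are bounded by a constant depending only on $\|h\|_{L^1}$ and $\|h\|_{L^1_{n+3}}$, summing all contributions gives \eqref{Propo:HolderC12:1} with $\mathcal{C}$ depending only on $\|f\|_{L^1},\|f\|_{L^1_{n+3}},\|g\|_{L^1},\|g\|_{L^1_{n+3}}$.

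The second bound \eqref{Propo:HolderC12:2} follows from \eqref{Propo:HolderC12:1} by interpolating the weight of $f-g$: H\"older's inequality with exponents $\frac{1}{n+4}$ and $\frac{n+3}{n+4}$ gives $\|f-g\|_{L^1_{n+3}}\le\|f-g\|_{L^1}^{1/(n+4)}\|f-g\|_{L^1_{n+4}}^{(n+3)/(n+4)}$, while $\|f-g\|_{L^1_{n+4}}\le\|f\|_{L^1_{n+4}}+\|g\|_{L^1_{n+4}}<2\mathcal{C}_0$, so that $\|f-g\|_{L^1_{n+3}}\le(2\mathcal{C}_0)^{(n+3)/(n+4)}\|f-g\|_{L^1}^{1/(n+4)}$. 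Substituting this into \eqref{Propo:HolderC12:1} produces exactly \eqref{Propo:HolderC12:2} with a constant $\mathcal{C}_1$ depending only on $\mathcal{C}_0$ and $\mathcal{C}$.

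I expect the main obstacle to be the genuinely bilinear gain terms: the momentum change of variables must be chosen so that, once the momentum Dirac mass has been used, the integrand is radial in the variable carrying the remaining energy Dirac mass — only then are the surface Lemmas applicable — and the bookkeeping of which moments are produced must be done carefully enough that no weight higher than $L^1_{n+3}$, and only the mass and $L^1_{n+3}$ norm of $f$ and $g$, enter the final constant. The singular factors $|\nabla H^p_j|^{-1}$ near the degenerate points of the manifolds ($w=0,p$ on $S^0_p$, and the endpoint $\gamma=\gamma_p$ on $S^1_p$) are precisely what Lemmas~\ref{lem-Sp}--\ref{lem-Sp2} are designed to absorb, so once the reduction to radial integrands is in place the remaining work reduces to routine moment manipulations together with Lemma~\ref{Lemma:Holder}.
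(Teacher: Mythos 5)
Your proposal follows essentially the same route as the paper: expand $C_{12}[f]-C_{12}[g]$ into terms that are linear or bilinear in $f-g$ (via the telescoping identity $ab-cd=(a-c)b+c(b-d)$), kill the momentum Dirac mass, bound the weight by $|p_1|^n+|p_2|^n+|p_3|^n$, convert the remaining energy constraint into one-dimensional moment integrals with Lemma~\ref{lem-Sp} (for the $S^0$ surfaces, which produce the $n+3$ weight on the linear piece) and Lemma~\ref{lem-Sp1} (for the $S^1$ surfaces, where the $1/|p|$ gain cancels against the kernel's vanishing), and then control the resulting moments by interpolation. The paper organizes the splitting slightly differently — it bounds $\|C_{12}[f]-C_{12}[g]\|_{L^1_n}$ directly by a symmetric integral with weight $|p_1|^n+|p_2|^n+|p_3|^n$ and then splits into three terms $N_1,N_2,N_3$ according to whether $f-g$ lands on $p_2p_3$, $p_1p_3$, or $p_1$ alone, rather than starting from the gain/loss decomposition — but the mechanics are identical. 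Your deduction of \eqref{Propo:HolderC12:2} from \eqref{Propo:HolderC12:1} via H\"older with exponents $1/(n+4)$ and $(n+3)/(n+4)$, together with $\|f-g\|_{L^1_{n+4}}\le\|f\|_{L^1_{n+4}}+\|g\|_{L^1_{n+4}}$, is exactly the argument the paper records. The only inessential points: only upper bounds on $K^{12}$ (boundedness of $K^{12}$ and of $K^{12}(p_1,p_2,p_1-p_2)/(|p_1||p_2|)$) are used here, not the lower bound you invoke from Proposition~\ref{Propo:Mass}; and $S^2_p$ is not actually needed, since $S^2_p=p+S^1_p$ and the paper works with $S^0$ and $S^1$ only. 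Neither affects correctness.
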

\begin{proof}
First, let us consider the $L^1_n$ norm of the difference $C_{12}[f]-C_{12}[g]$. As a view of Lemma \ref{Lemma:WeakFormulation}
\begin{equation}\label{Propo:HolderC12:E1}\begin{aligned}
\|C_{12}[f]-C_{12}[g]\|_{L^1_n}~~=&~~\int_{\mathbb{R}^3}|p_1|^n|C_{12}[f]-C_{12}[g]|dp_1\\
~~\le&~~n_c\lambda_1\iiint_{\mathbb{R}^{3\times3}}K^{12}(p_1,p_2,p_3)\delta(p_1-p_2-p_3)\delta(\mathcal{E}_{p_1}-\mathcal{E}_{p_2}-\mathcal{E}_{p_3})\\
~~&~~\times|f(p_2)f(p_3)-2f(p_3)f(p_1)-f(p_1)-g(p_2)g(p_3)\\
~~&~~+2g(p_3)g(p_1)+g(p_1)|\left[|p_1|^n+|p_2|^n+|p_3|^n\right]dp_1dp_2dp_3.
\end{aligned}
\end{equation}
The above identity implies that $\|C_{12}[f]-C_{12}[g]\|_{L^1_n}$ can be bounded by the sum of the following three terms
$$\begin{aligned}
N_1~~=&~~n_c\lambda_1\iiint_{\mathbb{R}^{3\times3}}K^{12}(p_1,p_2,p_3)\delta(p_1-p_2-p_3)\delta(\mathcal{E}_{p_1}-\mathcal{E}_{p_2}-\mathcal{E}_{p_3})\\
~~&~~\times|f(p_2)f(p_3)-g(p_2)g(p_3)|\left[|p_1|^n+|p_2|^n+|p_3|^n\right]dp_1dp_2dp_3,
\end{aligned}
$$$$\begin{aligned}
N_2~~=&~~2n_c\lambda_1\iiint_{\mathbb{R}^{3\times3}}K^{12}(p_1,p_2,p_3)\delta(p_1-p_2-p_3)\delta(\mathcal{E}_{p_1}-\mathcal{E}_{p_2}-\mathcal{E}_{p_3})\\
~~&~~\times|f(p_3)f(p_1)-g(p_3)g(p_1)|\left[|p_1|^n+|p_2|^n+|p_3|^n\right]dp_1dp_2dp_3,
\end{aligned}
$$and
$$\begin{aligned}
N_3~~=&~~n_c\lambda_1\iiint_{\mathbb{R}^{3\times3}}K^{12}(p_1,p_2,p_3)\delta(p_1-p_2-p_3)\delta(\mathcal{E}_{p_1}-\mathcal{E}_{p_2}-\mathcal{E}_{p_3})\\
~~&~~\times|f(p_1)-g(p_1)|\left[|p_1|^n+|p_2|^n+|p_3|^n\right]dp_1dp_2dp_3.
\end{aligned}
$$In the sequel, we will estimate $N_1$, $N_2$, $N_3$ in three steps.
{\\\bf Step 1: Estimating $N_1$.} 
\\ By the definition of $\delta(p_1-p_2-p_3)$, $N_1$ can be rewritten as:
$$\begin{aligned}
N_1~~=&~~n_c\lambda_1\iint_{\mathbb{R}^{3\times2}}K^{12}(p_2+p_3,p_2,p_3)\delta(\mathcal{E}_{p_2+p_3}-\mathcal{E}_{p_2}-\mathcal{E}_{p_3})\\
~~&~~\times|f(p_2)f(p_3)-g(p_2)g(p_3)|\left[|p_2+p_3|^n+|p_2|^n+|p_3|^n\right]dp_2dp_3.
\end{aligned}
$$
By the triangle inequality,
$$|f(p_2)f(p_3)-g(p_2)g(p_3)|\leq |f(p_2)-g(p_2)||f(p_3)|+|f(p_3)-g(p_3)||g(p_2)|,$$
the term $N_1$ can be bounded as 
$$\begin{aligned}
N_1~~\le&~~n_c\lambda_1\iint_{\mathbb{R}^{3\times2}}K^{12}(p_2+p_3,p_2,p_3)\delta(\mathcal{E}_{p_2+p_3}-\mathcal{E}_{p_2}-\mathcal{E}_{p_3})\\
~~&~~\times|f(p_2)-g(p_2)||f(p_3)|\left[|p_2+p_3|^n+|p_2|^n+|p_3|^n\right]dp_2dp_3\\
&~~+n_c\lambda_1\iint_{\mathbb{R}^{3\times2}}K^{12}(p_2+p_3,p_2,p_3)\delta(\mathcal{E}_{p_2+p_3}-\mathcal{E}_{p_2}-\mathcal{E}_{p_3})\\
~~&~~\times|f(p_3)-g(p_3)||g(p_2)|\left[|p_2+p_3|^n+|p_2|^n+|p_3|^n\right]dp_2dp_3
.
\end{aligned}
$$
Again, by the triangle inequality
$$|p_2+p_3|^n\le (|p_2|+|p_3|)^n\le 2^{n-1}(|p_2|^n+|p_3|^n),$$
one can estimate $N_1$ as
$$\begin{aligned}
N_1~~\le&~~C\iint_{\mathbb{R}^{3\times2}}\delta(\mathcal{E}_{p_2+p_3}-\mathcal{E}_{p_2}-\mathcal{E}_{p_3})K^{12}(p_2+p_3,p_2,p_3)\times\\
~~&~~\times|f(p_2)-g(p_2)||f(p_3)|\left[|p_2|^n+|p_3|^n\right]dp_2dp_3\\
&~~+C\iint_{\mathbb{R}^{3\times2}}\delta(\mathcal{E}_{p_2+p_3}-\mathcal{E}_{p_2}-\mathcal{E}_{p_3})K^{12}(p_2+p_3,p_2,p_3)\times\\
&~~\times|f(p_3)-g(p_3)||g(p_2)|\left[|p_2|^n+|p_3|^n\right]dp_2dp_3
,
\end{aligned}
$$
where $C$ is a constant varying from line to line. The above estimate can be rewritten, taking into account the definition of $\delta(\mathcal{E}_{p_2+p_3}-\mathcal{E}_{p_2}-\mathcal{E}_{p_3})$, as
$$\begin{aligned}
N_1~~\le&~~C\int_{\mathbb{R}^{3}}\int_{S_{p_3}^1}K^{12}_1(p_2+p_3,p_2,p_3)|f(p_2)-g(p_2)||f(p_3)|\left[|p_2|^n+|p_3|^n\right]d\sigma(p_3)dp_2\\
&~~+C\int_{\mathbb{R}^{3}}\int_{S_{p_2}^1}K^{12}_1(p_2+p_3,p_2,p_3)|f(p_3)-g(p_3)||g(p_2)|\left[|p_2|^n+|p_3|^n\right]d\sigma(p_2)dp_3
.
\end{aligned}
$$
By Lemma \ref{lem-Sp1}, one can estimate $N_1$ as follows
$$\begin{aligned}
N_1~~\le&~~C\iint_{\mathbb{R}^{3\times 2}}|f(p_2)-g(p_2)||f(p_3)|\frac{K^{12}(p_2+p_3,p_2,p_3)}{|p_2||p_3|}\left[|p_2|^n+|p_3|^n\right]dp_3dp_2\\
&~~+C\iint_{\mathbb{R}^{3\times 2}}|f(p_3)-g(p_3)||g(p_2)|\frac{K^{12}(p_2+p_3,p_2,p_3)}{|p_2||p_3|}\left[|p_2|^n+|p_3|^n\right]dp_2dp_3
.
\end{aligned}
$$
Since $\frac{K^{12}(p_2+p_3,p_2,p_3)}{|p_2||p_3|}$ and $\frac{K^{12}(p_2+p_3,p_2,p_3)}{|p_2||p_3|}$ are bounded, $N_1$ is bounded as
$$\begin{aligned}
N_1~~\le&~~C\iint_{\mathbb{R}^{3\times 2}}|f(p_2)-g(p_2)||f(p_3)|\left[|p_2|^n+|p_3|^n\right]dp_3dp_2\\
&~~+C\iint_{\mathbb{R}^{3\times 2}}|f(p_3)-g(p_3)||g(p_2)|\left[|p_2|^n+|p_3|^n\right]dp_2dp_3,
\end{aligned}
$$
which leads to the following straightforward estimates on $N_1$
\begin{equation}\label{Propo:HolderC12:E2}\begin{aligned}
N_1\le&~~C\int_{\mathbb{R}^{3}}|f(p_2)-g(p_2)||p_2|^ndp_2\int_{\mathbb{R}^{3}}|f(p_3)|dp_3\\
&~~+C\int_{\mathbb{R}^{3}}|f(p_2)-g(p_2)|dp_2\int_{\mathbb{R}^{3}}|f(p_3)||p_3|^ndp_3\\
&~~+C\int_{\mathbb{R}^{3}}|f(p_3)-g(p_3)||p_3|^ndp_3\int_{\mathbb{R}^{3}}|f(p_2)|dp_2\\
&~~+C\int_{\mathbb{R}^{3}}|f(p_3)-g(p_3)|dp_3\int_{\mathbb{R}^{3}}|f(p_2)||p_2|^ndp_2\\
\le&~~C\int_{\mathbb{R}^{3}}|f(p_1)-g(p_1)||p_1|^ndp_1+C\int_{\mathbb{R}^{3}}|f(p_1)-g(p_1)|dp_1.
\end{aligned}
\end{equation}
{\bf Step 2: Estimating $N_2$.} 
\\ By the definition of $\delta(p_1-p_2-p_3)$, $N_2$ can be rewritten as:
$$\begin{aligned}
N_2~~=&~~2n_c\lambda_1\iint_{\mathbb{R}^{3\times2}}K^{12}(p_1,p_1-p_3,p_3)\delta(\mathcal{E}_{p_1}-\mathcal{E}_{p_1-p_3}-\mathcal{E}_{p_3})\\
~~&~~\times|f(p_3)f(p_1)-g(p_3)g(p_1)|\left[|p_1|^n+|p_1-p_3|^n+|p_3|^n\right]dp_1dp_3,
\end{aligned}
$$
which, by the inequality,
$$|p_1-p_3|^n\le (|p_1|+|p_3|)^n\le 2^{n-1}(|p_1|^n+|p_3|^n),$$
can be bounded as
$$\begin{aligned}
N_2~~\le &~~C\iint_{\mathbb{R}^{3\times2}}K^{12}(p_1,p_1-p_3,p_3)\delta(\mathcal{E}_{p_1}-\mathcal{E}_{p_1-p_3}-\mathcal{E}_{p_3})\\
~~&~~\times|f(p_3)-g(p_3)||f(p_1)|\left[|p_1|^n+|p_3|^n\right]dp_1dp_3\\
&~~+C\iint_{\mathbb{R}^{3\times2}}K^{12}(p_1,p_1-p_3,p_3)\delta(\mathcal{E}_{p_1}-\mathcal{E}_{p_1-p_3}-\mathcal{E}_{p_3})\\
~~&~~\times|f(p_1)-g(p_1)||g(p_3)|\left[|p_1|^n+|p_3|^n\right]dp_1dp_3.
\end{aligned}
$$
Employing the definition of $\delta(\mathcal{E}_{p_1}-\mathcal{E}_{p_1-p_3}-\mathcal{E}_{p_3})$, one can estimate $N_2$ as
$$\begin{aligned}
N_2~~\le &~~C\int_{\mathbb{R}^{3}}\int_{S_{p_1}^0}K^{12}_0(p_1,p_1-p_3,p_3)|f(p_3)-g(p_3)||f(p_1)|\left[|p_1|^n+|p_3|^n\right]d\sigma(p_3)dp_1\\
&~~+C\int_{\mathbb{R}^{3}}\int_{S_{p_1}^0}K^{12}_0(p_1,p_1-p_3,p_3)|f(p_1)-g(p_1)||g(p_3)|\left[|p_1|^n+|p_3|^n\right]dd\sigma(p_3)dp_1,
\end{aligned}
$$
which, by Lemma \ref{lem-Sp}, yields
$$\begin{aligned}
N_2~~\le &~~\int_{\mathbb{R}^3}\int_0^{|p_1|}K^{12}(p_1,p_1-p_3,p_3)|f(p_1)-g(p_1)||g(p_3)|\left[|p_1|^n+|p_3|^n\right]|p_3|d|p_3|dp_1\\
 &~~+\int_{\mathbb{R}^3}\int_0^{|p_1|}K^{12}(p_1,p_1-p_3,p_3)|f(p_3)-g(p_3)||f(p_1)|\left[|p_1|^n+|p_3|^n\right]|p_3|d|p_3|dp_1.
\end{aligned}
$$
Bounding the integral from $0$ to $|p_1|$ by an integral from $0$ to $\infty$ implies
$$\begin{aligned}
N_2~~\le &~~\int_{\mathbb{R}^3}\int_0^{\infty}K^{12}(p_1,p_1-p_3,p_3)|f(p_1)-g(p_1)||g(p_3)|\left[|p_1|^n+|p_3|^n\right]|p_3|d|p_3|dp_1\\
 &~~+\int_{\mathbb{R}^3}\int_0^{\infty}K^{12}(p_1,p_1-p_3,p_3)|f(p_3)-g(p_3)||f(p_1)|\left[|p_1|^n+|p_3|^n\right]|p_3|d|p_3|dp_1.
\end{aligned}
$$
We now switch the integral from $d|p_3|$ to $dp_3$ from the above inequality to obtain
$$\begin{aligned}
N_2~~\le &~~\int_{\mathbb{R}^{3\times2}}\frac{K^{12}(p_1,p_1-p_3,p_3)}{|p_3|}|f(p_1)-g(p_1)||g(p_3)|\left[|p_1|^n+|p_3|^n\right]dp_3dp_1\\
&~~+\int_{\mathbb{R}^{3\times2}}(1+|p_1|)\frac{K^{12}(p_1,p_1-p_3,p_3)}{|p_3|}|f(p_3)-g(p_3)||f(p_1)|\left[|p_1|^n+|p_3|^n\right]dp_3dp_1.
\end{aligned}
$$
Applying the inequality
$$|p_1|^n+|p_3|^n\le C(1+|p_1|^{n}+|p_3|^{n})$$
to the above bound on $N_2$, we get
$$\begin{aligned}
&~~N_2\le\\
~~\le &~~C\int_{\mathbb{R}^{3\times 2}}\frac{K^{12}(p_1,p_1-p_3,p_3)}{|p_3|}|f(p_3)-g(p_3)||f(p_1)|\left[1+|p_1|^{n}+|p_3|^{n}\right]dp_3dp_1\\
&~~+C\int_{\mathbb{R}^{3\times 2}}\frac{K^{12}(p_1,p_1-p_3,p_3)}{|p_3|}|f(p_1)-g(p_1)||g(p_3)|\left[1+|p_1|^{n}+|p_3|^{n}\right]dp_3dp_1.
\end{aligned}
$$
The same argument as for \eqref{Propo:HolderC12:E2} yields
\begin{equation}\label{Propo:HolderC12:E2b}\begin{aligned}
N_2
\le&~~C\int_{\mathbb{R}^{3}}|f(p_1)-g(p_1)||p_1|^{n}dp_1+C\int_{\mathbb{R}^{3}}|f(p_1)-g(p_1)|dp_1.
\end{aligned}\end{equation}
{\bf Step 3: Estimating $N_3$.} 
\\ By the definition of $\delta(p_1-p_2-p_3)$, $N_3$ can be rewritten as:
$$\begin{aligned}
N_3~~=&~~n_c\lambda_1\iint_{\mathbb{R}^{3\times2}}K^{12}(p_1,p_2,p_1-p_2)\delta(\mathcal{E}_{p_1}-\mathcal{E}_{p_2}-\mathcal{E}_{p_1-p_2})\\
~~&~~\times|f(p_1)-g(p_1)|\left[|p_1|^n+|p_2|^n+|p_1-p_2|^n\right]dp_1dp_2,
\end{aligned}$$
which, by the inequality,
$$|p_1-p_2|^n\le (|p_1|+|p_2|)^n\le 2^{n-1}(|p_1|^n+|p_2|^n),$$
can be bounded as
$$\begin{aligned}
N_3~~\le&~~C\int_{\mathbb{R}^{3}}\int_{S_{p_1}}K^{12}(p_1,p_2,p_1-p_2)|f(p_1)-g(p_1)|\left[|p_1|^n+|p_2|^n\right]d\sigma(p_2)dp_1.
\end{aligned}$$
Now, as an application of Lemma \ref{lem-Sp},
$$\begin{aligned}
\int_{S_{p_1}}(|p_1|^n+|p_2|^n)d\sigma(p_2)~~&\le~~ C\left(|p_1|^{n+2}+\int_{S_{p_1}}|p_2|^nd\sigma(p_2)\right)\\
~~&\le~~  C\left(|p_1|^{n+2}+\int_{0}^{|p_1|}|p_2|^{n+1}d|p_2|\right)\\
~~&\le~~ C\left(1+|p_1|^{n+3}\right),
\end{aligned}$$
which together with the fact that ${K^{12}(p_1,p_2,p_1-p_2)}$ is bounded, implies 
\begin{equation}\label{Propo:HolderC12:E3}\begin{aligned}
N_3~~\le&~~C\int_{\mathbb{R}^{3}}|f(p_1)-g(p_1)|\left[|p_1|^{n+3}+1\right]dp_1.
\end{aligned}\end{equation}
Combining \eqref{Propo:HolderC12:E2}, \eqref{Propo:HolderC12:E2b}, and \eqref{Propo:HolderC12:E3} yields
\begin{equation}\label{Propo:HolderC12:E4}\begin{aligned}
\|C_{12}[f]-C_{12}[g]\|_{L^1_n}~~\le&~~C\int_{\mathbb{R}^{3}}|f(p_1)-g(p_1)|\left[|p_1|^{n+3}+|p_1|^{n+1}+|p_1|^{n}+1\right]dp_1.
\end{aligned}
\end{equation}
Since $$|p|^n\leq C\left(|p|^{n+3}+1\right), |p|^{n+1}\leq C\left(|p|^{n+3}+1\right),$$
Inequality \eqref{Propo:HolderC12:1} follows from \eqref{Propo:HolderC12:E4}.  Inequality \eqref{Propo:HolderC12:2} is a consequence of Inequality \eqref{Propo:HolderC12:1}, Lemma \ref{Lemma:Holder} and 
$$\|f-g\|_{L^1_{n+3}}\le \|f-g\|_{L^1}^{\frac{1}{n+4}}\left(\|f\|_{L^1_{n+4}}+\|g\|_{L^1_{n+4}}\right)^{\frac{n+3}{n+4}}.$$
\end{proof}
\subsubsection{Holder estimates for $C_{22}^1$}\label{Sec:HolderEstimateC221}
\begin{proposition}\label{Propo:HolderC221} Let $f$ and $g$ be two functions in $L^1_n(\mathbb{R}^3)\cap L^1(\mathbb{R}^3)$, $n\in\mathbb{N},$ $n/2$ is an odd number, or $n=0$, then there exists a constant $\mathcal{C}$ depending on $\|f\|_{L^1_{n+1}}, \|f\|_{L^1}, \|g\|_{L^1_{n+1}}, \|g\|_{L^1}$ such that 
\begin{equation}\label{Propo:HolderC221:1} 
\|C_{22}^1[f]-C_{22}^1[g]\|_{L^1_{n}}\le \mathcal{C}\left(\|f-g\|_{L^1_{n+1}}+\|f-g\|_{L^1}\right).\end{equation}
If $\|f\|_{{L}^1_{n+2}},\|g\|_{{L}^1_{n+2}}<\mathcal{C}_0$, then
\begin{equation}\label{Propo:HolderC221:2} 
\|C_{22}^1[f]-C_{22}^1[g]\|_{L^1_{n}}\le \mathcal{C}_1\left(\|f-g\|_{L^1}^{\frac{1}{n+2}}+\|f-g\|_{L^1}\right),\end{equation}
where $\mathcal{C}_1$ is a constant depending on $\mathcal{C}_0$, $\mathcal{C}$.
\end{proposition}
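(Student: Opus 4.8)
The plan is to imitate, line for line, the argument used for $C_{12}$ in Proposition \ref{Propo:HolderC12}, adapted to the reduced Carleman form \eqref{Def:C221} of $C_{22}^1$. First I would start from
$$\|C_{22}^1[f]-C_{22}^1[g]\|_{L^1_n}=\int_{\mathbb{R}^3}|p_1|^n\bigl|C_{22}^1[f](p_1)-C_{22}^1[g](p_1)\bigr|\,dp_1,$$
bring the absolute value inside the collision integral, and pass to radial variables ($dp_1=4\pi|p_1|^2d|p_1|$), which together with the $|p_1|^{-2}$ in \eqref{Def:C221} turns the above into a four-fold integral over $\mathbb{R}_+^4$ of the kernel $K^{22}\,\min\{|p_1|,|p_2|,|p_3|,|p_4|\}\,|p_1|^{n+1}|p_2||p_3||p_4|\,\delta(\mathcal{E}_{p_1}+\mathcal{E}_{p_2}-\mathcal{E}_{p_3}-\mathcal{E}_{p_4})$ against $\bigl|[f(p_3)f(p_4)-g(p_3)g(p_4)]-[f(p_1)f(p_2)-g(p_1)g(p_2)]\bigr|$. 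Here I would invoke $K^{22}<\Gamma$ from \eqref{K22Bound}.

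Next I would split the quadratic difference with the elementary identity $ab-cd=(a-c)b+c(b-d)$, so that $\|C_{22}^1[f]-C_{22}^1[g]\|_{L^1_n}$ is dominated by a finite sum of integrals each carrying exactly one factor $|f(p_a)-g(p_a)|$ and one factor $h(p_b)$ with $h\in\{f,g\}$ and $a\ne b$ (with $\{a,b\}$ running over $\{1,2\}$ and $\{3,4\}$). For each such term I would bound $\min\{|p_1|,\dots,|p_4|\}$ by the length of whichever variable is convenient (never by the variable carrying $|f(p_a)-g(p_a)|$ when that variable already carries the large weight $|p_1|^{n+1}$), use the monotonicity bounds $\mathcal{E}'(|p|)\ge C|p|$ and $\mathcal{E}'(|p|)\ge\sqrt{\kappa_1}$ (hence $|p|\,d|p|\le C\,d\mathcal{E}_p$ and $(\mathcal{E}^{-1}(M))^2\le CM$) to integrate out one variable against $\delta(\mathcal{E}_{p_1}+\mathcal{E}_{p_2}-\mathcal{E}_{p_3}-\mathcal{E}_{p_4})$, and then integrate the remaining "free" variable (the one carrying no factor of $f$ or $g$) over the resulting bounded energy interval. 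Each term then reduces to a product of two radial integrals which, by a spherical change of variables, are moments of $f-g$ and of $h$; the momentum powers produced by the eliminated and free variables are redistributed over the two kept variables via the binomial theorem applied to $(\mathcal{E}_a+\mathcal{E}_b)^{\ell}$ — legitimate since $n/2\in\mathbb{N}$, and it is precisely the oddness of $n/2$, mirroring the role of $n$ odd in Proposition \ref{Propo:C22}, that makes the top-order terms telescope so that the weight attached to the $(f-g)$-variable never exceeds $|p|^{n+1}$. The cut-off $|p_i|\ge p_0$ built into $K^{22}$ is used to absorb the negative powers of $|p|$ that appear when converting radial $d|p|$-integrals back to $dp$-integrals on $\mathbb{R}^3$.

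Collecting the bounds yields $\|C_{22}^1[f]-C_{22}^1[g]\|_{L^1_n}\le\mathcal{C}\bigl(\|f-g\|_{L^1_{n+1}}+\|f-g\|_{L^1}\bigr)$ with $\mathcal{C}$ depending only on $\|f\|_{L^1_{n+1}},\|f\|_{L^1},\|g\|_{L^1_{n+1}},\|g\|_{L^1}$, i.e.\ \eqref{Propo:HolderC221:1}. Finally \eqref{Propo:HolderC221:2} follows from \eqref{Propo:HolderC221:1} exactly as the corresponding step in Proposition \ref{Propo:HolderC12}: by Lemma \ref{Lemma:Holder} one has $\|f-g\|_{L^1_{n+1}}\le\|f-g\|_{L^1}^{1/(n+2)}\bigl(\|f\|_{L^1_{n+2}}+\|g\|_{L^1_{n+2}}\bigr)^{(n+1)/(n+2)}$, which under $\|f\|_{L^1_{n+2}},\|g\|_{L^1_{n+2}}<\mathcal{C}_0$ is at most $(2\mathcal{C}_0)^{(n+1)/(n+2)}\|f-g\|_{L^1}^{1/(n+2)}$; substituting into \eqref{Propo:HolderC221:1} gives the claim with $\mathcal{C}_1$ depending on $\mathcal{C}_0$ and $\mathcal{C}$.

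I expect the main obstacle to be precisely the bookkeeping of the middle paragraph: for each of the several terms one must simultaneously choose which variable to charge the $\min$-factor to, which variable to eliminate against the energy $\delta$, and which to integrate freely, so that (i) all radial integrals converge — this is what forces the joint use of the $\min$-factor and the cut-off $|p_i|\ge p_0$; (ii) the power of $|p|$ multiplying $|f-g|$ never exceeds $n+1$; and (iii) the powers multiplying $f$ and $g$ generate only the moments $\|\cdot\|_{L^1_{n+1}}$ and $\|\cdot\|_{L^1}$ appearing in the statement and nothing higher. The rest is a routine adaptation of machinery already in hand — notably, no surface-integral lemmas (Lemmas \ref{lem-Sp}--\ref{lem-Sp2}) are needed here, since $C_{22}^1$ is given directly in reduced form, only the one-dimensional substitution $|p|\mapsto\mathcal{E}_p$ and the binomial expansions.
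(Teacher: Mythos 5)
Your plan is sound and rests on the same pillars as the paper's proof: reduce via the boundedness of $K^{22}$ from \eqref{K22Bound}, trade one radial variable for its energy using $|p|\,d|p|\lesssim d\mathcal{E}_p$, eliminate one variable against $\delta(\mathcal{E}_{p_1}+\mathcal{E}_{p_2}-\mathcal{E}_{p_3}-\mathcal{E}_{p_4})$, integrate the remaining free variable over the resulting bounded energy interval, expand $(\mathcal{E}_a+\mathcal{E}_b)^{\ell}$ by the binomial formula, and convert the two remaining radial integrals back to $\int_{\mathbb{R}^3}$ moments; then obtain \eqref{Propo:HolderC221:2} from \eqref{Propo:HolderC221:1} by the interpolation inequality $\|f-g\|_{L^1_{n+1}}\le \|f-g\|_{L^1}^{1/(n+2)}(\|f\|_{L^1_{n+2}}+\|g\|_{L^1_{n+2}})^{(n+1)/(n+2)}$, exactly as in the paper.

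There are, however, two places where you depart from the paper's route and where your version makes the bookkeeping measurably heavier. First, the paper passes to the symmetrized weak formulation (Lemma \ref{Lemma:WeakFormulation}, using the symmetries $p_1\leftrightarrow p_2$, $p_3\leftrightarrow p_4$, $(p_1,p_2)\leftrightarrow(p_3,p_4)$ of the kernel and the delta) before estimating, so the quadratic difference is always $|f(p_1)f(p_2)-g(p_1)g(p_2)|$ and the test-function weight becomes $|p_1|^n+|p_2|^n+|p_3|^n+|p_4|^n$; in particular the variable carrying $|f-g|$ is always one of the two whose $\min$- and measure-factors can be rebalanced. Your proposal skips this and works directly with $\bigl|[f(p_3)f(p_4)-g(p_3)g(p_4)]-[f(p_1)f(p_2)-g(p_1)g(p_2)]\bigr|$, which after the $ab-cd$ split yields four terms; for the two with $|f(p_3)-g(p_3)|$ or $|f(p_4)-g(p_4)|$, the heavy weight $|p_1|^{n+1}$ sits on the \emph{eliminated} variable, so you must push it through $\mathcal{E}_{p_1}^{n/2}=(\mathcal{E}_{p_3}+\mathcal{E}_{p_4}-\mathcal{E}_{p_2})^{n/2}$. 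This works, but the accounting is doubled relative to the paper. Second, and related, you invoke the cut-off $|p_i|\ge p_0$ to absorb $|p|^{-1}$'s when converting $d|p|$ back to $dp$. The paper never needs this for $C_{22}^1$: after symmetrizing it distributes $\min\{|p_1|,|p_2|\}\,|p_1||p_2|$ as either $|p_1|^2|p_2|$ or $|p_1||p_2|^2$ according to which exponent on $\mathcal{E}$ vanishes (this is exactly the split into $X_1$, where $i+s\ne 0$, and $X_2$, where $i+s=0$), and uses $\mathcal{E}_p\le C(|p|+|p|^2)$ to generate the extra power of $|p|$; the cut-off plays no role. Your cut-off argument is not wrong — the cut-off is genuinely in the kernel — but a careful choice of which variable to charge the $\min$ to (namely the one whose $\mathcal{E}$-exponent vanishes) avoids it, and keeping that choice explicit would bring you back to the paper's cleaner estimate. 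Finally, the role of the hypothesis ``$n/2$ odd'' is slightly misattributed: it is not a telescoping that caps the weight on the $(f-g)$-variable, but the cancellation $(-\mathcal{E}_{p_3})^{n/2}+\mathcal{E}_{p_3}^{n/2}=0$ in the Newton expansion of the test-function weight, which removes the index $k=n/2$ and keeps the coefficients $B_{i,j,k,n}$ tame; under the student's non-symmetrized splitting there is no explicit $+\mathcal{E}_{p_3}^{n/2}$ to cancel against, so you should double-check that the corresponding top-order term is controlled by the same choice of $\min$-charge.
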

\begin{proof}
Let us consider the $L^1_n$ norm of the difference $C_{22}^1[f]-C_{22}^1[g]$. As a view of Lemma \ref{Lemma:WeakFormulation}
$$\begin{aligned}
\begin{aligned}
&~~\int_{\mathbb{R}^3}\left|C_{22}^1[f](p_1)-C_{22}^1[g](p_1)\right||p_1|^{n}dp_1\\
\le&~~C\int_{\mathbb{R}_+^{4}}K^{22}(p_1,p_2,p_3,p_4){\min\{|p_1|,|p_2|,|p_3|,|p_4|\}}|p_1||p_2||p_3||p_4|\delta(\mathcal{E}_{p_1}+\mathcal{E}_{p_2}-\mathcal{E}_{p_3}-\mathcal{E}_{p_4}) \\
&~~\times|f(p_1)f(p_2)-g(p_1)g(p_2)|\Big[|p_4|^{n}+|p_3|^{n}+|p_2|^{n}+|p_1|^{n}\Big]d|{p}_1|d|{p}_2|d|{p}_3|d|p_4|,
\end{aligned}
\end{aligned}
$$
By the inequality $$|p|^n\leq C\mathcal{E}_p^{n/2},$$
one gets
$$|p_4|^{n}+|p_3|^{n}+|p_2|^{n}+|p_1|^{n}\leq C\mathcal{E}_{p_4}^{n/2}+C\mathcal{E}_{p_3}^{n/2}+C\mathcal{E}_{p_2}^{n/2}+C\mathcal{E}_{p_1}^{n/2},$$
which implies
$$\begin{aligned}
&~~\int_{\mathbb{R}^3}\left|C_{22}^1[f](p_1)-C_{22}^1[g](p_1)\right||p_1|^{n}dp_1\\
\le&~~C\int_{\mathbb{R}_+^{4}}K^{22}(p_1,p_2,p_3,p_4){\min\{|p_1|,|p_2|,|p_3|,|p_4|\}}|p_1||p_2||p_3||p_4|\delta(\mathcal{E}_{p_1}+\mathcal{E}_{p_2}-\mathcal{E}_{p_3}-\mathcal{E}_{p_4}) \\
&~~\times|f(p_1)f(p_2)-g(p_1)g(p_2)|\Big[\mathcal{E}_{p_4}^{n/2}+\mathcal{E}_{p_3}^{n/2}+\mathcal{E}_{p_2}^{n/2}+\mathcal{E}_{p_1}^{n/2}\Big]d|{p}_1|d|{p}_2|d|{p}_3|d|p_4|.
\end{aligned}
$$
Now, thanks to the Dirac function $\delta(\mathcal{E}_{p_1}+\mathcal{E}_{p_2}-\mathcal{E}_{p_3}-\mathcal{E}_{p_4})$, one can write $\mathcal{E}_{p_4}$ as $\mathcal{E}_{p_1}+\mathcal{E}_{p_2}-\mathcal{E}_{p_3}$, which implies
$$\begin{aligned}
\begin{aligned}
&~~\int_{\mathbb{R}^3}\left|C_{22}^1[f](p_1)-C_{22}^1[g](p_1)\right||p_1|^{n}dp_1\\
\le&~~C\int_{\mathbb{R}_+^{4}}K^{22}(p_1,p_2,p_3,p_4){\min\{|p_1|,|p_2|,|p_3|,|p_4|\}}|p_1||p_2||p_3||p_4|\delta(\mathcal{E}_{p_1}+\mathcal{E}_{p_2}-\mathcal{E}_{p_3}-\mathcal{E}_{p_4}) \\
&~~\times|f(p_1)f(p_2)-g(p_1)g(p_2)|\Big[\left(\mathcal{E}_{p_1}+\mathcal{E}_{p_2}-\mathcal{E}_{p_3}\right)^{n/2}+\mathcal{E}_{p_3}^{n/2}+\mathcal{E}_{p_2}^{n/2}+\mathcal{E}_{p_1}^{n/2}\Big]d|{p}_1|d|{p}_2|d|{p}_3|d|p_4|.
\end{aligned}
\end{aligned}
$$
Similar as for \eqref{Propo:C22:E3}, $|p_4|d|p_4|$ can be bounded by $Cd\mathcal{E}_{p_4}$ and $\min\{|p_1|,|p_2|,|p_3|,|p_4|\}$ can be bounded by $\min\{|p_1|,|p_2|,|p_3|\}$. Moreover, $K^{22}(p_1,p_2,p_3,p_4)$ is bounded by $\Gamma$ due to \eqref{K22Bound}. As a consequence,
$$\begin{aligned}
\begin{aligned}
&~~\int_{\mathbb{R}^3}\left|C_{22}^1[f](p_1)-C_{22}^1[g](p_1)\right||p_1|^{n}dp_1\\
\le&~~C\int_{\mathbb{R}_+^{4}}{\min\{|p_1|,|p_2|,|p_3|\}}|p_1||p_2||p_3|\delta(\mathcal{E}_{p_1}+\mathcal{E}_{p_2}-\mathcal{E}_{p_3}-\mathcal{E}_{p_4}) |f(p_1)f(p_2)-\\
&~~-g(p_1)g(p_2)|\Big[\left(\mathcal{E}_{p_1}+\mathcal{E}_{p_2}-\mathcal{E}_{p_3}\right)^{n/2}+\mathcal{E}_{p_3}^{n/2}+\mathcal{E}_{p_2}^{n/2}+\mathcal{E}_{p_1}^{n/2}\Big]d|{p}_1|d|{p}_2|d|{p}_3|d\mathcal{E}_{p_4}\\
\le&~~C\int_{\mathcal{E}_{p_3}\leq \mathcal{E}_{p_1}+\mathcal{E}_{p_2}}{\min\{|p_1|,|p_2|,|p_3|\}}|p_1||p_2||p_3| |f(p_1)f(p_2)-\\
&~~-g(p_1)g(p_2)|\Big[\left(\mathcal{E}_{p_1}+\mathcal{E}_{p_2}-\mathcal{E}_{p_3}\right)^{n/2}+\mathcal{E}_{p_3}^{n/2}+\mathcal{E}_{p_2}^{n/2}+\mathcal{E}_{p_1}^{n/2}\Big]d|{p}_1|d|{p}_2|d|{p}_3|,
\end{aligned}
\end{aligned}
$$
where in the last inequality, we have taken the integration with respect to $d\mathcal{E}_{p_4}$.\\
Since $n/2$ is an odd number, by Newton formula
$$\left(\mathcal{E}_{p_1}+\mathcal{E}_{p_2}-\mathcal{E}_{p_3}\right)^{n/2}+\mathcal{E}_{p_3}^{n/2}+\mathcal{E}_{p_2}^{n/2}+\mathcal{E}_{p_1}^{n/2}=\sum_{0\le i,j,k~;~i+j+k=n/2~;~k\ne n/2} B_{i,j,k,n}\mathcal{E}_{p_1}^i\mathcal{E}_{p_2}^j\mathcal{E}_{p_3}^k,$$
we obtain
\begin{equation}\label{Propo:HolderC221:E1}\begin{aligned}
\int_{\mathbb{R}^3}\left|C_{22}^1[f](p_1)-C_{22}^1[g](p_1)\right||p_1|^{n}dp_1
\le X,
\end{aligned}
\end{equation}
where 
$$\begin{aligned}
X:=&~~C\int_{\mathcal{E}_{p_3}\leq \mathcal{E}_{p_1}+\mathcal{E}_{p_2}}{\min\{|p_1|,|p_2|,|p_3|\}}|p_1||p_2||p_3| \Big|f(p_1)f(p_2)-\\
&~~-g(p_1)g(p_2)\Big|\left[\sum_{0\le i,j,k~;~i+j+k=n/2~;~k\ne n/2} |B_{i,j,k,n}|\mathcal{E}_{p_1}^i\mathcal{E}_{p_2}^j\mathcal{E}_{p_3}^k\right]d|{p}_1|d|{p}_2|d|{p}_3|.
\end{aligned}
$$
The rest of the proof is devoted to estimates of $X$.
\\ Similar as for \eqref{Propo:C22:E3}, $|p_3|d|p_3|$ can be bounded by $Cd\mathcal{E}_{p_3}$ and $\min\{|p_1|,|p_2|,|p_3|\}$ can be bounded by $\min\{|p_1|,|p_2|\}$:
$$\begin{aligned}
X_1\le&~~C\int_{\mathcal{E}_{p_3}\leq \mathcal{E}_{p_1}+\mathcal{E}_{p_2}}{\min\{|p_1|,|p_2|\}}|p_1||p_2| \Big|f(p_1)f(p_2)-\\
&~~-g(p_1)g(p_2)\Big|\left[\sum_{0\le i,j,k~;~i+j+k=n/2~;~k\ne n/2} \mathcal{E}_{p_1}^i\mathcal{E}_{p_2}^j\mathcal{E}_{p_3}^k\right]d|{p}_1|d|{p}_2|\mathcal{E}_{p_3}.
\end{aligned}
$$
Integrating with respect to $d\mathcal{E}_{p_3}$ the above integral and using Newton formula yields
$$\begin{aligned}
X\le&~~C\int_{\mathbb{R}_+^2}{\min\{|p_1|,|p_2|\}}|p_1||p_2| |f(p_1)f(p_2)-\\
&~~-g(p_1)g(p_2)|\left[\sum_{0\le i,j,k~;~i+j+k=n/2~;~k\ne n/2} \sum_{s=0}^{k+1} {{k+1}\choose{s}}\mathcal{E}_{p_1}^{i+s}\mathcal{E}_{p_2}^{k+1+j-s}\right]d|{p}_1|d|{p}_2|\\
\le&~~\sum_{0\le i,j,k~;~i+j+k=n/2~;~k\ne n/2} \sum_{s=0}^{k+1}C\int_{\mathbb{R}_+^2}{\min\{|p_1|,|p_2|\}}|p_1||p_2| |f(p_1)f(p_2)-\\
&~~-g(p_1)g(p_2)|\mathcal{E}_{p_1}^{i+s}\mathcal{E}_{p_2}^{k+1+j-s}d|{p}_1|d|{p}_2|\\
\le&~~\sum_{0\le i,j,k~;~i+j+k=n/2~;~k\ne n/2} \sum_{s=0;i+s\ne 0}^{k+1}C\int_{\mathbb{R}_+^2}{\min\{|p_1|,|p_2|\}}|p_1||p_2| |f(p_1)f(p_2)-\\
&~~-g(p_1)g(p_2)|\mathcal{E}_{p_1}^{i+s}\mathcal{E}_{p_2}^{k+1+j-s}d|{p}_1|d|{p}_2|+\\
&~~+C\int_{\mathbb{R}_+^2}{\min\{|p_1|,|p_2|\}}|p_1||p_2| |f(p_1)f(p_2)-g(p_1)g(p_2)|\mathcal{E}_{p_2}^{n/2+1}d|{p}_1|d|{p}_2|
.
\end{aligned}
$$
By the inequalities
$${\min\{|p_1|,|p_2|\}}|p_1||p_2|\le |p_1||p_2|^2,$$
and
$${\min\{|p_1|,|p_2|\}}|p_1||p_2|\le |p_1|^2|p_2|,$$
one deduce that
\begin{equation}\label{Propo:HolderC221:E2}\begin{aligned}
X
\le X_{1}+ X_{2},
\end{aligned}
\end{equation}
where
$$\begin{aligned}
X_{1}:=&~~\sum_{0\le i,j,k~;~i+j+k=n/2~;~k\ne n/2} \sum_{s=0;i+s\ne 0}^{k+1}C\int_{\mathbb{R}_+^2}|p_1||p_2|^2\Big|f(p_1)f(p_2)-\\
&~~-g(p_1)g(p_2)\Big|\mathcal{E}_{p_1}^{i+s}\mathcal{E}_{p_2}^{k+1+j-s}d|{p}_1|d|{p}_2|;\\
X_{2}:=&~~C\int_{\mathbb{R}_+^2}|p_1|^2|p_2| \Big|f(p_1)f(p_2)-g(p_1)g(p_2)\Big|\mathcal{E}_{p_2}^{n/2+1}d|{p}_1|d|{p}_2|
.
\end{aligned}
$$
Let us first estimate $X_{1}$ by looking at the terms inside the sum
$$\begin{aligned}
&~~\int_{\mathbb{R}_+^2}|p_1||p_2|^2|f(p_1)f(p_2)-g(p_1)g(p_2)|\mathcal{E}_{p_1}^{i+s}\mathcal{E}_{p_2}^{k+1+j-s}d|{p}_1|d|{p}_2|\\
\le&~~\int_{\mathbb{R}_+^2}|p_1||p_2|^2|f(p_1)-g(p_1)||g(p_2)|\mathcal{E}_{p_1}^{i+s}\mathcal{E}_{p_2}^{k+1+j-s}d|{p}_1|d|{p}_2|\\
 &~~+\int_{\mathbb{R}_+^2}|p_1||p_2|^2|f(p_2)-g(p_2)||f(p_1)|\mathcal{E}_{p_1}^{i+s}\mathcal{E}_{p_2}^{k+1+j-s}d|{p}_1|d|{p}_2|,
\end{aligned}
$$
where we have used the triangle inequality
$$|f(p_1)f(p_2)-g(p_1)g(p_2)|\le |f(p_1)-g(p_1)||g(p_2)|+|f(p_2)-g(p_2)||f(p_1)|.$$
Since $0<i+s\le n/2+1$ and $0\le k+1+j-s\le n/2+1$, we have
$$\mathcal{E}_{p_1}^{i+s}\leq C\left(|p_1|+|p_1|^{n+2}\right),$$
and
$$\mathcal{E}_{p_2}^{k+1+j-s}\leq C\left(1+|p_2|^{n+2}\right),$$
which yields
$$\begin{aligned}
&~~\int_{\mathbb{R}_+^2}|p_1||p_2|^2|f(p_1)f(p_2)-g(p_1)g(p_2)|\mathcal{E}_{p_1}^{i+s}\mathcal{E}_{p_2}^{k+1+j-s}d|{p}_1|d|{p}_2|\\
\le&~~C\int_{\mathbb{R}_+}|p_1|\left(|p_1|+|p_1|^{n+2}\right)|f(p_1)-g(p_1)|d|{p}_1|\int_{\mathbb{R}_+}|p_1|^2\left(1+|p_1|^{n+2}\right)|g(p_1)|d|{p}_1|\\
 &~~+C\int_{\mathbb{R}_+}|p_1|\left(|p_1|+|p_1|^{n+2}\right)|f(p_1)-g(p_1)|d|{p}_1|\int_{\mathbb{R}_+}|p_1|^2\left(1+|p_1|^{n+2}\right)|f(p_1)|d|{p}_1|\\
\le&~~C\int_{\mathbb{R}^3}\left(1+|p_1|^{n+1}\right)|f(p_1)-g(p_1)|d{p}_1\int_{\mathbb{R}^3}\left(1+|p_1|^{n+2}\right)|g(p_1)|d{p}_1\\
 &~~+C\int_{\mathbb{R}^3}\left(1+|p_1|^{n+1}\right)|f(p_1)-g(p_1)|d{p}_1\int_{\mathbb{R}^3}\left(1+|p_1|^{n+2}\right)|f(p_1)|d{p}_1,
\end{aligned}
$$
where in the last inequality, we have switched the integration on $\mathbb{R}_+$ to $\mathbb{R}^3$, by a spherical change of variables. Now, by the boundedness of $f$ and $g$ in $L^1$ and $L^1_{n+2}$,
$$\begin{aligned}
&~~\int_{\mathbb{R}_+^2}|p_1||p_2|^2|f(p_1)f(p_2)-g(p_1)g(p_2)|\mathcal{E}_{p_1}^{i+s}\mathcal{E}_{p_2}^{k+1+j-s}d|{p}_1|d|{p}_2|\\
\le&~~C\int_{\mathbb{R}^3}\left(1+|p_1|^{n+1}\right)|f(p_1)-g(p_1)|d{p}_1,
\end{aligned}
$$
which implies the following estimate on $X_{1}$
\begin{equation}\label{Propo:HolderC221:E3}\begin{aligned}
X_{1}
\le C\|f-g\|_{L^1}+C\|f-g\|_{L^1_{n+1}}.
\end{aligned}
\end{equation}
We now estimate $X_{2}$. As an application of the inequality
$$\mathcal{E}_{p_2}^{n/2+1}\leq C\left(|p_2|+|p_2|^{n+2}\right),$$
$X_2$ can be bounded as follows
$$\begin{aligned}
X_{2}\le &~~C\int_{\mathbb{R}_+^2}|p_1|^2|p_2| |f(p_1)f(p_2)-g(p_1)g(p_2)|\left(|p_2|+|p_2|^{n+2}\right)d|{p}_1|d|{p}_2|\\
\le &~~C\int_{\mathbb{R}_+^2}|p_1|^2|f(p_1)-g(p_1)||g(p_2)|\left(|p_2|^2+|p_2|^{n+3}\right)d|{p}_1|d|{p}_2|\\
&~~+C\int_{\mathbb{R}_+^2}|p_1|^2 |f(p_2)-g(p_2)||f(p_1)|\left(|p_2|^2+|p_2|^{n+3}\right)d|{p}_1|d|{p}_2|
.
\end{aligned}
$$
The same argument as for \eqref{Propo:HolderC221:E3} leads to
\begin{equation}\label{Propo:HolderC221:E4}\begin{aligned}
X_{2}
\le C\|f-g\|_{L^1}+C\|f-g\|_{L^1_{n+1}}.
\end{aligned}\end{equation}
Combining \eqref{Propo:HolderC221:E2}, \eqref{Propo:HolderC221:E3} and \eqref{Propo:HolderC221:E4} yields
\begin{equation}\label{Propo:HolderC221:E5}X
\le C\left(\|f-g\|_{L^1}+\|f-g\|_{L^1_{n+1}}\right).
\end{equation}
The two inequalities \eqref{Propo:HolderC221:E1} and \eqref{Propo:HolderC221:E5} lead to
\begin{equation}\label{Propo:HolderC221:E6}\begin{aligned}
\int_{\mathbb{R}^3}\left|C_{22}^{1}[f](p_1)-C_{22}^{1}[g](p_1)\right||p_1|^{n}dp_1
\le C\left(\|f-g\|_{L^1}+\|f-g\|_{L^1_{n+1}}\right).
\end{aligned}
\end{equation}
Inequality \eqref{Propo:HolderC221:2} is a consequence of Inequality \eqref{Propo:HolderC221:1}, Lemma \ref{Lemma:Holder} and
$$\|f-g\|_{L^1_{n+1}}\le \|f-g\|_{L^1}^{\frac{1}{n+2}}\left(\|f\|_{L^1_{n+2}}+\|g\|_{L^1_{n+2}}\right)^{\frac{n+1}{n+2}}.$$
\end{proof} 
\subsubsection{Holder estimates for $C_{22}^{2}$}\label{Sec:HolderEstimateC222}
\begin{proposition}\label{Propo:HolderC222N} Let $f$ and $g$ be two functions in $L^1_n(\mathbb{R}^3)\cap L^1(\mathbb{R}^3)$, $n/2\in\mathbb{N}$, $n$ can be $0$, then there exists a constant $\mathcal{C}$ depending on $\|f\|_{L^1_n}, \|f\|_{L^1}, \|g\|_{L^1_n}, \|g\|_{L^1}$, such that 
\begin{equation}\label{Propo:HolderC222N:1} \|C_{22}^{2}[f]-C_{22}^{2}[g]\|_{L^1_{n}}\le \mathcal{C}\left(\|f-g\|_{L^1_n}+\|f-g\|_{L^1}\right).\end{equation}
If $\|f\|_{{L}^1_{n+1}},\|g\|_{{L}^1_{n+1}}<\mathcal{C}_0$, then
\begin{equation}\label{Propo:HolderC222N:2} 
\|C_{22}^2[f]-C_{22}^2[g]\|_{L^1_{n}}\le \mathcal{C}_1\left(\|f-g\|_{L^1}^{\frac{1}{n+1}}+\|f-g\|_{L^1}\right),\end{equation}
where $\mathcal{C}_1$ is a constant depending on $\mathcal{C}_0$, $\mathcal{C}$.
\end{proposition}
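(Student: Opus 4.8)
The plan is to follow the strategy of the proof of Proposition \ref{Propo:HolderC221}, exploiting the fact that the cubic collision bracket of $C_{22}^{2}$ carries \emph{one more} density factor than the quadratic bracket of $C_{22}^{1}$, which is exactly what will allow us to avoid losing a moment. Writing $\mathcal{B}[f]:=f_3f_4(f_1+f_2)-f_1f_2(f_3+f_4)$ and $D:=C_{22}^{2}[f]-C_{22}^{2}[g]$, I would first pass to radial variables and use $K^{22}\le\Gamma$ (inequality \eqref{K22Bound}) to obtain
$$\|D\|_{L^1_n}\ \le\ C\iiiint_{\mathbb{R}_+^4}\min\{|p_1|,|p_2|,|p_3|,|p_4|\}\,|p_1|^{n+1}|p_2||p_3||p_4|\,\delta\big(\mathcal{E}_{p_1}+\mathcal{E}_{p_2}-\mathcal{E}_{p_3}-\mathcal{E}_{p_4}\big)\,\big|\mathcal{B}[f]-\mathcal{B}[g]\big|\,d|p_1|d|p_2|d|p_3|d|p_4|,$$
where the power $|p_1|^{n+1}$ accounts for the weight $|p_1|^n$, the spherical Jacobian $|p_1|^2$ of the outer $dp_1$-integral, and the kernel factor $|p_1|\cdot|p_1|^{-2}$. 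Then I would symmetrize the weight, using $n/2\in\mathbb{N}$: $|p_1|^{n}\le C\sum_{i=1}^{4}|p_i|^{n}\le C\sum_{i=1}^{4}\mathcal{E}_{p_i}^{n/2}$, so that the heavy weight may be placed on whichever variable turns out to be convenient.

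The algebraic heart of the argument is the decomposition of $\mathcal{B}[f]-\mathcal{B}[g]$ into a finite sum (about twelve terms) of the form $\pm\big(f(p_a)-g(p_a)\big)\,\ell(p_b)\,\ell'(p_c)$ with $\ell,\ell'\in\{f,g\}$ and $\{a,b,c\}$ a $3$-element subset of $\{1,2,3,4\}$; the point is that the remaining index $d$ then carries \emph{no} density factor, so the $p_d$-integral can be performed freely against the energy Dirac mass. For each such term I would integrate out $p_d$: by $\mathcal{E}'(|p|)\ge C|p|$ (inequality \eqref{Propo:C22:E1b}) one has $|p_d|\,d|p_d|\lesssim d\mathcal{E}_{p_d}$, the $\delta$-integration produces $1$ on the constraint set $\{\mathcal{E}_{p_d}=\mathcal{E}_{p_a}+\mathcal{E}_{p_b}-\mathcal{E}_{p_c}\ge 0\}$ (for the appropriate labelling of the surviving three indices), and the $\mathcal{E}_{p_d}^{n/2}$-piece of the symmetrized weight is controlled by $\mathcal{E}_{p_d}^{n/2}\le C(\mathcal{E}_{p_a}^{n/2}+\mathcal{E}_{p_b}^{n/2}+\mathcal{E}_{p_c}^{n/2})$. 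What is left is a product of three one-dimensional integrals over $|p_a|,|p_b|,|p_c|$, which decouple; in each I would convert the surviving monomial weights into genuine $\mathbb{R}^3$-moments using $\mathcal{E}_p\le C(|p|+|p|^2)$, $|p|\le C\mathcal{E}_p^{1/2}$ and $|p|^n\le C\mathcal{E}_p^{n/2}$ (the manipulations behind \eqref{Propo:C22:E8}), distributing the $\min$-factor so that the integral in the difference variable $p_a$ yields $\|f-g\|_{L^1_n}$ or $\|f-g\|_{L^1}$, while the integrals in $p_b,p_c$ yield $\|f\|_{L^1},\|f\|_{L^1_n},\|g\|_{L^1},\|g\|_{L^1_n}$. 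Summing the twelve contributions gives \eqref{Propo:HolderC222N:1}.

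Finally, \eqref{Propo:HolderC222N:2} follows from \eqref{Propo:HolderC222N:1} exactly as at the end of Propositions \ref{Propo:HolderC12} and \ref{Propo:HolderC221}: by H\"older's inequality (in the spirit of Lemma \ref{Lemma:Holder}),
$$\|f-g\|_{L^1_n}\ \le\ \|f-g\|_{L^1}^{\frac{1}{n+1}}\big(\|f\|_{L^1_{n+1}}+\|g\|_{L^1_{n+1}}\big)^{\frac{n}{n+1}}\ \le\ (2\mathcal{C}_0)^{\frac{n}{n+1}}\,\|f-g\|_{L^1}^{\frac{1}{n+1}}$$
under the hypothesis $\|f\|_{L^1_{n+1}},\|g\|_{L^1_{n+1}}<\mathcal{C}_0$; substituting this, together with the trivial bound for $\|f-g\|_{L^1}$, into \eqref{Propo:HolderC222N:1} gives the claim.

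I expect the main obstacle to be the last book-keeping step. After the $\delta$-integration the three surviving radial integrals carry rather few powers of $|p|$ near the origin (the kernel $\min\{|p_1|,\dots,|p_4|\}|p_1||p_2||p_3||p_4|/|p_1|^2$ is only mildly non-singular there), and closing the estimate with \emph{only} the norms $\|f\|_{L^1}$, $\|f\|_{L^1_n}$ of the two full density factors — i.e. with no loss of moments, in contrast with the quadratic part in Proposition \ref{Propo:HolderC221} — requires using the full strength of the $\min\{|p_1|,\dots,|p_4|\}$ weight. Concretely one has to split $\mathbb{R}_+^4$ according to which $|p_i|$ realizes the minimum, assign the power coming from $\min$ to that variable, and balance the remaining small-momentum contributions against the comparison inequalities for $\mathcal{E}_p$ and the constraint $\mathcal{E}_{p_d}\ge 0$; that the cubic bracket frees one full integration variable is precisely what makes this balancing possible. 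The large-momentum part of the estimate, by contrast, is routine once the symmetrized weight and the twelve-term decomposition are in place.
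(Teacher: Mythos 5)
The key step of the paper's proof is the inequality
\[
K^{22}(p_1,p_2,p_3,p_4)\,\min\{|p_1|,|p_2|,|p_3|\}\,|p_1||p_2||p_3|\ \le\ C\,|p_1|^2|p_2|^2|p_3|^2,
\]
which holds precisely because of the cutoff $\chi_{\{|p_1|,\dots,|p_4|\ge p_0\}}$ built into $K^{22}$ in \eqref{Def:TransitionProbabilityKernel:K22A}: on the support of $K^{22}$ one has $|p_i|\ge p_0$, so each linear factor $|p_i|$ may be promoted to $|p_i|^2/p_0$. Once this is in hand, the three surviving radial integrals carry the full $|p_i|^2\,d|p_i|$ Jacobians and the estimate closes with only $\|\cdot\|_{L^1}$ and $\|\cdot\|_{L^1_n}$. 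You, by contrast, throw this information away at the very first step by replacing $K^{22}$ with the blunt bound $K^{22}\le\Gamma$; this is exactly where your argument breaks.

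After integrating out the free variable $p_d$ against the energy Dirac mass, your remaining weight is $\min\{|p_1|,|p_2|,|p_3|\}\,|p_1|^{n+1}|p_2||p_3|$ (times Newton-formula powers $\mathcal{E}_{p_1}^i\mathcal{E}_{p_2}^j\mathcal{E}_{p_3}^k$ with $i+j+k=n/2$). You need each surviving one-dimensional integral to carry at least $|p_i|^{2}$ so as to reconstitute an $\mathbb{R}^3$-moment; the available raw powers are $n+1,\,1,\,1$, the $\min$ contributes one more, and the comparison $\mathcal{E}_p\gtrsim|p|$ near the origin turns $\mathcal{E}_{p_i}^{m}$ into roughly $|p_i|^{m}$ there. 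In the worst Newton term — e.g.\ $(i,j,k)=(n/2,0,0)$, or any term for which two of the exponents vanish, which always occurs — two of the three variables have only a single power of $|p|$ between them plus the $\min$. No allocation of the $\min$ factor (not even a fractional split $\min\le|p_b|^{\theta}|p_c|^{1-\theta}$) gives both of them the required $|p|^2$; what remains is an integral of the type $\int|f(p)|\,|p|^{-1}\,dp$, i.e.\ $\|f\|_{L^1_{-1}}$, which is not controlled by $\|f\|_{L^1}$ and $\|f\|_{L^1_n}$. Your ``main obstacle'' paragraph correctly identifies this as the delicate point, but the remedy you propose — splitting $\mathbb{R}_+^4$ by which $|p_i|$ realizes the minimum and then balancing with the $\mathcal{E}_p$ comparisons and the constraint set — does not supply the missing power. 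The cutoff in $K^{22}$ is the mechanism that does, and it must be retained. (The passage from \eqref{Propo:HolderC222N:1} to \eqref{Propo:HolderC222N:2} via H\"older interpolation is fine and matches the paper.)
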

\begin{proof}
As a view of Lemma \ref{Lemma:WeakFormulation}, the $L^1_n$ norm of the difference $C_{22}^{2}[f]-C_{22}^{2}[g]$ can be written as
$$\begin{aligned}
\begin{aligned}
&~~\int_{\mathbb{R}^3}\left|C_{22}^{2}[f](p_1)-C_{22}^{2}[g](p_1)\right||p_1|^{n}dp_1\\
\le&~~C\int_{\mathbb{R}_+^{4}}K^{22}(p_1,p_2,p_3,p_4){\min\{|p_1|,|p_2|,|p_3|,|p_4|\}}|p_1||p_2||p_3||p_4|\delta(\mathcal{E}_{p_1}+\mathcal{E}_{p_2}-\mathcal{E}_{p_3}-\mathcal{E}_{p_4}) \times\\
&~~\times|f(p_1)f(p_2)f(p_3)-g(p_1)g(p_2)g(p_3)|\Big[|p_4|^{n}+|p_3|^{n}+|p_2|^{n}+|p_1|^{n}\Big]d|{p}_1|d|{p}_2|d|{p}_3|d|p_4|,
\end{aligned}
\end{aligned}
$$
Similar as for Proposition \ref{Propo:HolderC221}, by the inequality 
$$|p_4|^{n}+|p_3|^{n}+|p_2|^{n}+|p_1|^{n}\leq C\mathcal{E}_{p_4}^{n/2}+C\mathcal{E}_{p_3}^{n/2}+C\mathcal{E}_{p_2}^{n/2}+C\mathcal{E}_{p_1}^{n/2},$$
one has
$$\begin{aligned}
&~~\int_{\mathbb{R}^3}\left|C_{22}^{2}[f](p_1)-C_{22}^{2}[g](p_1)\right||p_1|^{n}dp_1\\
\le&~~C\int_{\mathbb{R}_+^{4}}K^{22}(p_1,p_2,p_3,p_4){\min\{|p_1|,|p_2|,|p_3|,|p_4|\}}|p_1||p_2||p_3||p_4|\delta(\mathcal{E}_{p_1}+\mathcal{E}_{p_2}-\mathcal{E}_{p_3}-\mathcal{E}_{p_4})\times \\
&~~\times|f(p_1)f(p_2)f(p_3)-g(p_1)g(p_2)g(p_3)|\Big[\mathcal{E}_{p_4}^{n/2}+\mathcal{E}_{p_3}^{n/2}+\mathcal{E}_{p_2}^{n/2}+\mathcal{E}_{p_1}^{n/2}\Big]d|{p}_1|d|{p}_2|d|{p}_3|d|p_4|,
\end{aligned}
$$
By the Dirac function $\delta(\mathcal{E}_{p_1}+\mathcal{E}_{p_2}-\mathcal{E}_{p_3}-\mathcal{E}_{p_4})$,  $\mathcal{E}_{p_4}$ can be written as $\mathcal{E}_{p_1}+\mathcal{E}_{p_2}-\mathcal{E}_{p_3}$, which implies
$$\begin{aligned}
\begin{aligned}
&~~\int_{\mathbb{R}^3}\left|C_{22}^{2}[f](p_1)-C_{22}^{2}[g](p_1)\right||p_1|^{n}dp_1\\
\le&~~C\int_{\mathbb{R}_+^{4}}K^{22}(p_1,p_2,p_3,p_4){\min\{|p_1|,|p_2|,|p_3|,|p_4|\}}|p_1||p_2||p_3||p_4|\delta(\mathcal{E}_{p_1}+\mathcal{E}_{p_2}-\mathcal{E}_{p_3}-\mathcal{E}_{p_4})\times \\
&~~\times|f(p_1)f(p_2)f(p_3)-g(p_1)g(p_2)g(p_3)|\Big[\left(\mathcal{E}_{p_1}+\mathcal{E}_{p_2}-\mathcal{E}_{p_3}\right)^{n/2}+\\
&~~+\mathcal{E}_{p_3}^{n/2}+\mathcal{E}_{p_2}^{n/2}+\mathcal{E}_{p_1}^{n/2}\Big]d|{p}_1|d|{p}_2|d|{p}_3|d|p_4|.
\end{aligned}
\end{aligned}
$$
Similar as for \eqref{Propo:C22:E3}, $|p_4|d|p_4|$ can be bounded by $Cd\mathcal{E}_{p_4}$ and $\min\{|p_1|,|p_2|,|p_3|,|p_4|\}$ can be bounded by $\min\{|p_1|,|p_2|,|p_3|\}$, which leads to
$$\begin{aligned}
&~~\int_{\mathbb{R}^3}\left|C_{22}^{2}[f](p_1)-C_{22}^{2}[g](p_1)\right||p_1|^{n}dp_1\\
\le&~~C\int_{\mathbb{R}_+^{4}}K^{22}(p_1,p_2,p_3,p_4){\min\{|p_1|,|p_2|,|p_3|\}}|p_1||p_2||p_3|\delta(\mathcal{E}_{p_1}+\mathcal{E}_{p_2}-\mathcal{E}_{p_3}-\mathcal{E}_{p_4})\times\\
&~~ \times|f(p_1)f(p_2)f(p_3)-g(p_1)g(p_2)g(p_3))|\times\\
&~~\times \Big[\left(\mathcal{E}_{p_1}+\mathcal{E}_{p_2}-\mathcal{E}_{p_3}\right)^{n/2}+\mathcal{E}_{p_3}^{n/2}+\mathcal{E}_{p_2}^{n/2}+\mathcal{E}_{p_1}^{n/2}\Big]d|{p}_1|d|{p}_2|d|{p}_3|d\mathcal{E}_{p_4}\\
\le&~~C\int_{\mathcal{E}_{p_3}\leq \mathcal{E}_{p_1}+\mathcal{E}_{p_2}}K^{22}(p_1,p_2,p_3,p_4){\min\{|p_1|,|p_2|,|p_3|\}}|p_1||p_2||p_3| |f(p_1)f(p_2)f(p_3)-\\
&~~-g(p_1)g(p_2)g(p_3)|\Big[\left(\mathcal{E}_{p_1}+\mathcal{E}_{p_2}-\mathcal{E}_{p_3}\right)^{n/2}+\mathcal{E}_{p_3}^{n/2}+\mathcal{E}_{p_2}^{n/2}+\mathcal{E}_{p_1}^{n/2}\Big]d|{p}_1|d|{p}_2|d|{p}_3|,
\end{aligned}
$$
where we have taken the integration with respect to $d\mathcal{E}_{p_4}$.\\
Since $n/2$ is a natural number, by Newton formula
$$\left(\mathcal{E}_{p_1}+\mathcal{E}_{p_2}-\mathcal{E}_{p_3}\right)^{n/2}+\mathcal{E}_{p_3}^{n/2}+\mathcal{E}_{p_2}^{n/2}+\mathcal{E}_{p_1}^{n/2} = \sum_{0\le i,j,k~;~i+j+k=n/2~} D_{i,j,k,n}\mathcal{E}_{p_1}^i\mathcal{E}_{p_2}^j\mathcal{E}_{p_3}^k,$$
where $D_{i,j,k,n}$ are positive constants. As an application of the above Newton formula, one has
$$\begin{aligned}
&~~\int_{\mathbb{R}^3}\left|C_{22}^{2}[f](p_1)-C_{22}^{2}[g](p_1)\right||p_1|^{n}dp_1\\
\le&~~C\sum_{0\le i,j,k~;~i+j+k=n/2~}\int_{\mathcal{E}_{p_3}\leq \mathcal{E}_{p_1}+\mathcal{E}_{p_2}}K^{22}(p_1,p_2,p_3,p_4){\min\{|p_1|,|p_2|,|p_3|\}}|p_1||p_2||p_3|\\
&~~\times|f(p_1)f(p_2)f(p_3)-g(p_1)g(p_2)g(p_3)|\mathcal{E}_{p_1}^i\mathcal{E}_{p_2}^j\mathcal{E}_{p_3}^kd|{p}_1|d|{p}_2|d|{p}_3|,
\end{aligned}
$$
where $C$ is a positive constant varying from line to line.\\
By using the fact that
$$K^{22}(p_1,p_2,p_3,p_4){\min\{|p_1|,|p_2|,|p_3|\}}|p_1||p_2||p_3| \le C|p_1|^2|p_2|^2|p_3|^2,$$
where $C$ is a positive constant depending on $p_*$ defined in \eqref{Def:TransitionProbabilityKernel:K22A}, we get
$$\begin{aligned}
&~~\int_{\mathbb{R}^3}\left|C_{22}^{2}[f](p_1)-C_{22}^{2}[g](p_1)\right||p_1|^{n}dp_1\\
\le&~~C\sum_{0\le i,j,k~;~i+j+k=n/2~}\int_{\mathcal{E}_{p_3}\leq \mathcal{E}_{p_1}+\mathcal{E}_{p_2}}|p_1|^2|p_2|^2|p_3|^2 |f(p_1)f(p_2)f(p_3)-\\
&~~-g(p_1)g(p_2)g(p_3)|\mathcal{E}_{p_1}^i\mathcal{E}_{p_2}^j\mathcal{E}_{p_3}^kd|{p}_1|d|{p}_2|d|{p}_3|\\
\le&~~C\sum_{0\le i,j,k~;~i+j+k=n/2~}\int_{\mathbb{R}_+^3}|p_1|^2|p_2|^2|p_3|^2 |f(p_1)f(p_2)f(p_3)-\\
&~~-g(p_1)g(p_2)g(p_3)|\mathcal{E}_{p_1}^i\mathcal{E}_{p_2}^j\mathcal{E}_{p_3}^kd|{p}_1|d|{p}_2|d|{p}_3|.
\end{aligned}
$$
Changing from the radial integration on $\mathbb{R}_+$ to the integration on $\mathbb{R}^3$ in the above inequality, by a spherical coordinate change of variables, yields
 $$\begin{aligned}
&~~\int_{\mathbb{R}^3}\left|C_{22}^{2}[f](p_1)-C_{22}^{2}[g](p_1)\right||p_1|^{n}dp_1\\
\le&~~C\sum_{0\le i,j,k~;~i+j+k=n/2~}\int_{\mathbb{R}^{3\times3}}|f(p_1)f(p_2)f(p_3)-g(p_1)g(p_2)g(p_3)|\mathcal{E}_{p_1}^i\mathcal{E}_{p_2}^j\mathcal{E}_{p_3}^kd{p}_1d{p}_2d{p}_3.
\end{aligned}
$$
Applying the triangle inequality 
\begin{eqnarray*}
&&|f(p_1)f(p_2)f(p_3)-g(p_1)g(p_2)g(p_3)|\\
&\le &|f(p_1)-g(p_1)||f(p_2)||f(p_3)|+|f(p_2)-g(p_2)||g(p_1)||f(p_3)|+|f(p_3)-g(p_3)||g(p_1)||g(p_3)|,
\end{eqnarray*}
to the previous inequality gives
 $$\begin{aligned}
&~~\int_{\mathbb{R}^3}\left|C_{22}^{2}[f](p_1)-C_{22}^{2}[g](p_1)\right||p_1|^{n}dp_1\\
\le&~~C\sum_{0\le i,j,k~;~i+j+k=n/2~}\int_{\mathbb{R}^{3\times3}}|f(p_1)-g(p_1)||f(p_2)||f(p_3)|\mathcal{E}_{p_1}^i\mathcal{E}_{p_2}^j\mathcal{E}_{p_3}^kd{p}_1d{p}_2d{p}_3\\
&~~+C\sum_{0\le i,j,k~;~i+j+k=n/2~}\int_{\mathbb{R}^{3\times3}}|f(p_2)-g(p_2)||g(p_1)||f(p_3)|\mathcal{E}_{p_1}^i\mathcal{E}_{p_2}^j\mathcal{E}_{p_3}^kd{p}_1d{p}_2d{p}_3\\
&~~+C\sum_{0\le i,j,k~;~i+j+k=n/2~}\int_{\mathbb{R}^{3\times3}}|f(p_3)-g(p_3)||g(p_1)||g(p_3)|\mathcal{E}_{p_1}^i\mathcal{E}_{p_2}^j\mathcal{E}_{p_3}^kd{p}_1d{p}_2d{p}_3
.
\end{aligned}
$$
Notice that we can estimate $\mathcal{E}_{p_1}^i$, $\mathcal{E}_{p_2}^j$ and $\mathcal{E}_{p_3}^k$ as
\begin{eqnarray*}
\mathcal{E}_{p}^i\le C(1+|p|^n),~~~
\mathcal{E}_{p}^j\le C(1+|p|^n),~~~
\mathcal{E}_{p}^k\le C(1+|p|^n),
\end{eqnarray*}
which leads to the following estimate on the norm of $C_{22}^{2}[f]-C_{22}^{2}[g]$
 $$\begin{aligned}
&~~\int_{\mathbb{R}^3}\left|C_{22}^{2}[f](p_1)-C_{22}^{2}[g](p_1)\right||p_1|^{n}dp_1\\
\le&~~C\sum_{0\le i,j,k~;~i+j+k=n/2~}\int_{\mathbb{R}^{3\times3}}|f(p_1)-g(p_1)||f(p_2)||f(p_3)|\\
&~~\times (1+|p_1|^n)(1+|p_2|^n)(1+|p_3|^n)d{p}_1d{p}_2d{p}_3\\
&~~+C\sum_{0\le i,j,k~;~i+j+k=n/2~}\int_{\mathbb{R}^{3\times3}}|f(p_2)-g(p_2)||g(p_1)||f(p_3)|\\
&~~\times(1+|p_1|^n)(1+|p_2|^n)(1+|p_3|^n)d{p}_1d{p}_2d{p}_3\\
&~~+C\sum_{0\le i,j,k~;~i+j+k=n/2~}\int_{\mathbb{R}^{3\times3}}|f(p_3)-g(p_3)||g(p_1)||g(p_3)|\\
&~~\times(1+|p_1|^n)(1+|p_2|^n)(1+|p_3|^n)d{p}_1d{p}_2d{p}_3
.
\end{aligned}
$$
Now, since  
$$\int_{\mathbb{R}^3}|f(p)|(1+|p|^n)=\|f\|_{L^1}+\|f\|_{L^1_n},~~~\int_{\mathbb{R}^3}|g(p)|(1+|p|^n)=\|g\|_{L^1}+\|g\|_{L^1_n},$$
we get from the above inequality that
 $$\begin{aligned}
~\int_{\mathbb{R}^3}\left|C_{22}^{2}[f](p_1)-C_{22}^{2}[g](p_1)\right||p_1|^{n}dp_1
\le C\left(\|f-g\|_{L^1}+\|f-g\|_{L^1_n}\right)
.
\end{aligned}
$$
Inequality \eqref{Propo:HolderC222N:2} is a consequence of Inequality \eqref{Propo:HolderC222N:1}, Lemma \ref{Lemma:Holder} and
$$\|f-g\|_{L^1_{n}}\le \|f-g\|_{L^1}^{\frac{1}{n+1}}\left(\|f\|_{L^1_{n+1}}+\|g\|_{L^1_{n+1}}\right)^{\frac{n}{n+1}}.$$
\end{proof}
\subsection{Proof of Theorem \ref{Theorem:ExistenceKinetic}}\label{Sec:ExistenceKinetic}
In order to prove Theorem \ref{Theorem:ExistenceKinetic}, we will use Theorem \ref{Theorem:ODE}. Choose $E=\mathbb{L}^{1}_{2n}\big(\mathbb{R}^{3}\big)$. 
We define the function $|\cdot|_*$ to be
$$|f|_*=\int_{\mathbb{R}^3}f(p)dp.$$
Set
$$\|f\|_*=\int_{\mathbb{R}^3}|f(p)|dp.$$
By \eqref{Propo:Mass:E5}, it is clear that for all $f\geq0$, $f\in E$, the following inequality holds true
\begin{equation}\label{CStar}
|Q[f]|_*\le C^*\left(1+\|f\|_*\right),
\end{equation}
where $C^*$ depends on $\|f\|_{\mathcal{L}^1_2(\mathbb{R}^3)}$. We then choose $C_*$ in Theorem \ref{Theorem:ODE} as $C^*$.\\ 
The set $\mathcal{S}_T$ is defined as follows: 
\begin{align}\label{SetS} 
\begin{split}
\mathcal{S}_T:=\Big\{& f \in \mathbb{L}^{1}_{2n}\big(\mathbb{R}^{3}\big) \;\big| \; (S_1)\;f \geq 0, ~~~ f(p)=f(|p|),\; (S_2) \int_{\mathbb{R}^3}\,f(|p|) dp\le \mathfrak{c}_0,\;\\
& (S_3) \int_{\mathbb{R}_+}\,f(|p|)\mathcal{E}_p dp= \mathfrak{c}_1,\; (S_4)\int_{\mathbb{R}_+}\,f(|p|)\mathcal{E}_p^{n^*}dp\leq\mathfrak{c}_{n^*}\Big\},
\end{split}
\end{align}
where 
\begin{equation}\label{CONDITION:INITIALMASS}
\mathfrak{c}_0:=(2\mathcal{R}+1)e^{(C^*+1)T},
\end{equation}
 and 
\begin{equation}\label{CONDITION:INITIALMOMENT}
\mathfrak{c}_{n^*}=\frac{3\rho_{n_*}}{2},
\end{equation}
with $\rho_{n_*}$ defined in \eqref{RHON}.  It is clear that $\mathcal{S}_T$ is a bounded, convex and closed subset of $\mathbb{L}^{1}_{2n}\big(\mathbb{R}^{3}\big)$. Moreover for all $f$ in $\mathcal{S}_T$, it is straightforward that $|f|_*=\|f\|_*$.\\
In the four Sections \ref{Subtangent}, \ref{Con2}, \ref{Holder}, \ref{Lipschitz}, we will verify  the four conditions $(\mathfrak{A})$, $(\mathfrak{B})$, $(\mathfrak{C})$ and $(\mathfrak{D})$ of Theorem \ref{Theorem:ODE}. Then, Theorem \ref{Theorem:ExistenceKinetic} follows as an application of Theorem \ref{Theorem:ODE}. 
\subsubsection{Checking Condition $(\mathfrak{A})$}\label{Con2}
We choose the constant $R_*$ to be $\mathcal{R}+1$, then for all $u$ in $\mathcal{S}_T$,  $\|u\|_*\le (2{R}_*+1)e^{(C^*+1)T}$. Condition  $(\mathfrak{A})$ is satisfied.

\subsubsection{Checking Condition $(\mathfrak{B})$}\label{Subtangent}
First, the same argument as for \eqref{Propo:MomentsPropa:E4} gives
\begin{equation}\label{Subtangent:E1}\begin{aligned}
&~~\int_{\mathbb{R}^3}Q[f]\mathcal{E}_p^{n^*} dp \leq \mathcal{P}\big[m_{n^*}(f)\big]:=\\
&~~{C} m_{n^*}(f)+Cm_{n^*}(f)^{\frac{{n^*}-1}{{n^*}}}+Cm_{n^*}(f)^{\frac{{n^*}-2}{{n^*}}}+ Cm_{n^*}(f)^{\frac{1}{{n^*}}}-Cm_{{n^*}}(f)^{\frac{{n^*}+1}{{n^*}}}\\
 &~~+C\sum_{0\le i, j, k<{n^*};~ i+j+k={n^*}}\sum_{s=0}^{k+1}m_{n^*}(f)^{\frac{i+s}{{n^*}}}\left(m_{n^*}(f)^{\frac{j+k-s}{{n^*}}}+m_{n^*}(f)^{\frac{j+k-s+1/2}{{n^*}}}\right)+\\
&~~+C\sum_{0\le i, j, k<{n^*};~ i+j+k={n^*};~j,k>0}m_{n^*}(f)^{\frac{i}{{n^*}}}\left(m_{n^*}(f)^{\frac{j-1}{{n^*}}}+m_{n^*}(f)^{\frac{j-1/2}{{n^*}}}\right)\times\\
&~~\times\left(m_{n^*}(f)^{\frac{k-1}{{n^*}}}+m_{n^*}(f)^{\frac{k-1/2}{{n^*}}}\right),~~~\forall f\in\mathcal{S}_T,
\end{aligned}
\end{equation}
 where $C$ is a positive constant  depending on $\mathfrak{c}_0$.

\begin{equation}\label{RHON}
\mbox{Let $\rho_{n^*}$ be the solution of $\mathcal{P}(\rho)=0$: if $0<\rho<\rho_{n^*}$, $\mathcal{P}(\rho)<0$; if $\rho>\rho_{n^*}$, $\mathcal{P}(\rho)>0$.}
\end{equation} 
Notice that $\rho_{n^*}$ depends on $\mathfrak{c}_0$.
\\ Let $f$ be an arbitrary element of the set $\mathcal{S}_T\cap B_*\Big(O,(2R_*+1)e^{(C_*+1)T}\Big)$ and consider the element $f+hQ[f]$. We will show that for all $\epsilon>0$, there exists $h_*$ depending on $f$ and $\epsilon$ such that $B(f+hQ[f],h\epsilon)\cap\mathcal{S}_T$ is not empty for all $0<h<h_*$. Define $\chi_R(p)$ to be the characteristic function of the ball $B(O,R)$ centered at the origin with radius $R$. Set $f_R(p)=\chi_R(p)f(p)$ and $w_R=f+hQ[f_R]$.  Since $Q[f_R]\in \mathbb{L}^1_{2n}(\mathbb{R}^3)$, we find that $w_R\in \mathbb{L}^1_{2n}(\mathbb{R}^3)$. We will prove that for $h_*$ small enough and $R$ large enough, $w_R$ belongs to $\mathcal{S}_T$. We now verify the four conditions $(S_1)$, $(S_2)$, $(S_3)$ and $(S_4)$.
\begin{itemize}
\item {\it Condition $(S_1)$:} Since $f_R$ is compactly supported, it is clear that $Q^{-}[f_R]$, with $Q^-$ defined in \eqref{QGainLoss}, is bounded by $C(f,R,\mathfrak{c}_0,\mathfrak{c}_{n^*})$, a positive constant depending on $f$, $R$, $\mathfrak{c}_0$, $\mathfrak{c}_{n^*}$, which implies
$$w_R\geq f-h f_RQ^{-}[f_R]\geq f(1-hQ^{-}[f_R])\geq 0,$$
for $h< C(f,R,\mathfrak{c}_0,\mathfrak{c}_{n^*})^{-1}$.
\item {\it Condition $(S_2)$:} Since $$\|f\|_*<(2R_*+1)e^{(C_*+1)T},$$
and 
$$\lim_{h\to 0}\|f-w_R\|_*=0,$$
we can choose $h_*$ small enough such that 
$$\|w_R\|_*<(2R_*+1)e^{(C_*+1)T}.$$
\item {\it Condition $(S_3)$:} By the conservation of energy, we have
$$\int_{\mathbb{R}^3}w_R\mathcal{E}_pdp=\int_{\mathbb{R}^3}(f+hQ[f_R])\mathcal{E}_pdp=\int_{\mathbb{R}^3}f\mathcal{E}_pdp=\mathfrak{c}_1.$$
\item {\it Condition $(S_4)$:}
Now, we claim that $R$ and $h_*$ can be chosen, such that
$$\int_{\mathbb{R}^3}w_R\mathcal{E}_p^{n^*} dp<\frac{3\rho_{n^*}}{2}.$$
In order to see this, we consider two cases:
\\ If 
$$\int_{\mathbb{R}^3}f\mathcal{E}_p^{n^*} dp<\frac{3\rho_{n^*}}{2},$$
we deduce from the fact  
$$\lim_{h\to 0}\int_{\mathbb{R}^3}|w_R-f|\mathcal{E}_p^{n^*}dp=0,$$
that we can choose $h_*$ small enough such that
$$\int_{\mathbb{R}^3}w_R\mathcal{E}_p^{n^*} dp<\frac{3\rho_{n^*}}{2}.$$
\\  If, on the other hand, we have
$$\int_{\mathbb{R}^3}f\mathcal{E}_p^{n^*} dp=\frac{3\rho_{n^*}}{2},$$
we can choose $R$ large enough such that
$$\int_{\mathbb{R}^3}f_R\mathcal{E}_p^{n^*} dp>{\rho_{n^*}},$$
which implies, by \eqref{RHON}, that
$$\int_{\mathbb{R}^3}Q[f_R]<0.$$
As a consequence, 
$$\int_{\mathbb{R}^3}w_R\mathcal{E}^{n^*}_pdp<\int_{\mathbb{R}^3}f\mathcal{E}^{n^*}_pdp=\frac{3\rho_{n^*}}{2}.$$
\end{itemize}
Finally, we have $w_R\in\mathcal{S}_T$ for all $0<h<h_*$.
\\ Now since 
$$\lim_{R\to\infty}\frac{1}{h}\|w_R-f-hQ[f_R]\|_{\mathbb{L}^1_{2n}(\mathbb{R}^3)}=\lim_{R\to\infty}\|Q[f]-Q[f_R]\|_{\mathbb{L}^1_{2n}(\mathbb{R}^3)}=0,$$
then for $R$ large enough, $w_R\in B(f+hQ[f],h\epsilon)$, which implies $B(f+hQ[f],h\epsilon)\cap \mathcal{S}_T\backslash\{f+hQ[f]\}.$ Condition $(\mathfrak{B})$ is verified.  

\subsubsection{Checking Condition $(\mathfrak{C})$}\label{Holder}
Condition $(\mathfrak{C})$ follows from Propositions \ref{Propo:HolderC12}, \ref{Propo:HolderC221},  and \ref{Propo:HolderC222N}.
\subsubsection{Checking Condition $(\mathfrak{D})$}\label{Lipschitz}
By the Lebesgue dominated convergence theorem, we have that
\begin{equation}\label{Lipschitz:E1}
\Big[\varphi,\phi\Big]\le \int_{\mathbb{R}^3}\varphi(p)\mathrm{sign}(\phi(p))(1+\mathcal{E}_p^n)dp,\end{equation}
which means that Condition $(\mathfrak{D})$ is satisfied if we have the following inequality
\begin{equation}\label{Lipschitz:E2}
\mathcal{M}_0:=\int_{\mathbb{R}^3}[Q[f](p)-Q[g](p)]\mathrm{sign}((f-g)(p))(1+\mathcal{E}_p^n)dp\leq C\|f-g\|_{\mathbb{L}^1_{2n}}.
\end{equation}
Since $Q=C_{12}+C_{22}$, let us split $$\mathcal{M}_0=\mathcal{M}_1+\mathcal{M}_2,$$
where 
$$\mathcal{M}_1:=\int_{\mathbb{R}^3}[C_{12}[f](p)-C_{12}[g](p)]\mathrm{sign}((f-g)(p))(1+\mathcal{E}_p^n)dp,$$
and
$$\mathcal{M}_2:=\int_{\mathbb{R}^3}[C_{22}[f](p)-C_{22}[g](p)]\mathrm{sign}((f-g)(p))(1+\mathcal{E}_p^n)dp.$$
{\bf Step 1: Estimating $\mathcal{M}_1$.}
\\ Define $\varphi_k(p)=\mathrm{sign}((f-g)(p))\mathcal{E}^k_p$, $k\in\mathbb{Z}, k\geq 0$, $k\ne 1$. Let us consider the following generalized term of $\mathcal{M}_1$
\begin{equation}\label{Lipschitz:E3}
\mathcal{N}_0:=\int_{\mathbb{R}^3}[C_{12}[f](p)-C_{12}[g](p)]\varphi_k(p)dp,
\end{equation}
which by Lemma \ref{Lemma:WeakFormulation} can be rewritten as
\begin{equation}\label{Lipschitz:E3a}
\begin{aligned}
\mathcal{N}_0:=&~~~\int_{\mathbb{R}^{3\times 3}}[R_{12}[f](p_1)-R_{12}[g](p_1)][\varphi_k(p_1)-\varphi_k(p_2)-\varphi_k(p_3)]dp_1dp_2dp_3\\
=&~~~\int_{\mathbb{R}^{3\times 2}}\bar{K}^{12}(p_2+p_3,p_2,p_3)\delta({\mathcal{E}_{p_2+p_3}-\mathcal{E}_{p_2}-\mathcal{E}_{p_3}})[(f(p_2)f(p_3)-g(p_2)g(p_3))\\
&~~~-2(f(p_2)f(p_2+p_3)-g(p_2)g(p_2+p_3))-(f(p_2+p_3)-g(p_2+p_3))]\times\\
&~~~\times[\varphi_k(p_2+p_3)-\varphi_k(p_2)-\varphi_k(p_3)]dp_2dp_3.
\end{aligned}
\end{equation}
Split $\mathcal{N}_0$ into the sum of three terms:
\begin{equation}\label{Lipschitz:E4}
\begin{aligned}
\mathcal{N}_1:=&~~~\int_{\mathbb{R}^{3\times 2}}\bar{K}^{12}(p_2+p_3,p_2,p_3)\delta({\mathcal{E}_{p_2+p_3}-\mathcal{E}_{p_2}-\mathcal{E}_{p_3}})[f(p_2)f(p_3)-g(p_2)g(p_3)]\\
&~~~\times[\varphi_k(p_2+p_3)-\varphi_k(p_2)-\varphi_k(p_3)]dp_2dp_3,
\end{aligned}
\end{equation}
\begin{equation}\label{Lipschitz:E5}
\begin{aligned}
\mathcal{N}_2:=&~~~-2\int_{\mathbb{R}^{3\times 2}}\bar{K}^{12}(p_2+p_3,p_2,p_3)\delta({\mathcal{E}_{p_2+p_3}-\mathcal{E}_{p_2}-\mathcal{E}_{p_3}})[f(p_2)f(p_2+p_3)-g(p_2)g(p_2+p_3)]\\
&~~~\times[\varphi_k(p_2+p_3)-\varphi_k(p_2)-\varphi_k(p_3)]dp_2dp_3,
\end{aligned}
\end{equation}
and
\begin{equation}\label{Lipschitz:E6}
\begin{aligned}
\mathcal{N}_3:=
&~~~-\int_{\mathbb{R}^{3\times 2}}\bar{K}^{12}(p_2+p_3,p_2,p_3)\delta({\mathcal{E}_{p_2+p_3}-\mathcal{E}_{p_2}-\mathcal{E}_{p_3}})[f(p_2+p_3)-g(p_2+p_3)]\times\\
&~~~\times[\varphi_k(p_2+p_3)-\varphi_k(p_2)-\varphi_k(p_3)]dp_2dp_3.
\end{aligned}
\end{equation}
The same arguments as for \eqref{Propo:HolderC12:E2} and \eqref{Propo:HolderC12:E2b} give
\begin{equation}\label{Lipschitz:E7}
\begin{aligned}
\mathcal{N}_1\leq C\|f-g\|_{\mathbb{L}^1_{2k}(\mathbb{R}^3)},
\end{aligned}
\end{equation}
and
\begin{equation}\label{Lipschitz:E8}
\begin{aligned}
\mathcal{N}_2\leq C\|f-g\|_{\mathbb{L}^1_{2k}(\mathbb{R}^3)},
\end{aligned}
\end{equation}
where $C$ is a positive constant varying from line to line.
\\ The third term $\mathcal{N}_3$ can be estimated as
\begin{equation}\label{Lipschitz:E9}
\begin{aligned}
\mathcal{N}_3=
&~~~-\int_{\mathbb{R}^{3\times 2}}\bar{K}^{12}(p_2+p_3,p_2,p_3)\delta({\mathcal{E}_{p_2+p_3}-\mathcal{E}_{p_2}-\mathcal{E}_{p_3}})[f(p_2+p_3)-g(p_2+p_3)]\times\\
&~~~\times [\mathcal{E}_{p_2+p_3}^k\mathrm{sign}((f(p_2+p_3)-g(p_2+p_3))-\mathcal{E}_{p_2}^k\mathrm{sign}((f(p_2)-g(p_2))-\\
&~~~-\mathcal{E}_{p_3}^k\mathrm{sign}((f(p_3)-g(p_3))]dp_2dp_3\\
\le
&~~~\int_{\mathbb{R}^{3\times 2}}\bar{K}^{12}(p_2+p_3,p_2,p_3)\delta({\mathcal{E}_{p_2+p_3}-\mathcal{E}_{p_2}-\mathcal{E}_{p_3}})|f(p_2+p_3)-g(p_2+p_3)|\times\\
&~~~\times[\mathcal{E}_{p_2}^k+\mathcal{E}_{p_3}^k-\mathcal{E}_{p_2+p_3}^k]dp_2dp_3.
\end{aligned}
\end{equation}
Now, let us consider the two cases $k=0$ and $k>1$ separately.
\begin{itemize}
\item If $k=0$,
\begin{equation}\label{Lipschitz:E10}
\begin{aligned}
\mathcal{N}_3
\le
&~~~\int_{\mathbb{R}^{3\times 2}}\bar{K}^{12}(p_2+p_3,p_2,p_3)\delta({\mathcal{E}_{p_2+p_3}-\mathcal{E}_{p_2}-\mathcal{E}_{p_3}})|f(p_2+p_3)-g(p_2+p_3)|dp_2dp_3,
\end{aligned}
\end{equation}
which, by the same arguments that lead to \eqref{Propo:HolderC12:E3}, can be bounded as
\begin{equation}\label{Lipschitz:E11}
\begin{aligned}
\mathcal{N}_3
\le
C\|f-g\|_{\mathbb{L}^1_{3}(\mathbb{R}^3)}.
\end{aligned}
\end{equation}
\item 
If $k>1$, since $\mathcal{E}_{p_2+p_3}=\mathcal{E}_{p_2}+\mathcal{E}_{p_3}$, it is straight forward that
$$\mathcal{E}_{p_2}^k+\mathcal{E}_{p_3}^k-\mathcal{E}_{p_2+p_3}^k=\mathcal{E}_{p_2}^k+\mathcal{E}_{p_3}^k-(\mathcal{E}_{p_2}+\mathcal{E}_{p_3})^k\le - k\mathcal{E}_{p_2}\mathcal{E}_{p_3}^{k-1}\le 0.$$
As a consequence, we can estimate $\mathcal{N}_3$ as
\begin{equation}\label{Lipschitz:E12}
\begin{aligned}
&\mathcal{N}_3\le\\
\le
&~-\int_{\mathbb{R}^{3\times 2}}\bar{K}^{12}(p_2+p_3,p_2,p_3)\delta({\mathcal{E}_{p_2+p_3}-\mathcal{E}_{p_2}-\mathcal{E}_{p_3}})|f(p_2+p_3)-g(p_2+p_3)|k\mathcal{E}_{p_2}\mathcal{E}_{p_3}^{k-1}dp_2dp_3\\
\le
&~-\int_{\mathbb{R}^{3}}\int_{S_{p_1}^0}\bar{K}_0^{12}(p_1,p_2,p_1-p_2)|f(p_1)-g(p_1)|k\mathcal{E}_{p_2}\mathcal{E}_{p_1-p_2}^{k-1}d\sigma(p_2)dp_1.
\end{aligned}
\end{equation}
As a view of Lemma \ref{lem-Sp}, we find the following bound on $\mathcal{N}_3$
\begin{equation}\label{Lipschitz:E13}
\begin{aligned}
\mathcal{N}_3
\le
&~-C\int_{\mathbb{R}^{3}}|f(p)-g(p)|\Big(|p|^{2k+1}\min\{1,|p|\}^{2k+7}\Big)dp.
\end{aligned}
\end{equation}
\end{itemize}
Combining \eqref{Lipschitz:E7}, \eqref{Lipschitz:E8}, \eqref{Lipschitz:E11} and \eqref{Lipschitz:E13} for the two cases $k=0$ and $k=n$, yields
\begin{equation}\label{Lipschitz:E14}\begin{aligned}
\mathcal{M}_1
\le&~~ C\int_{\mathbb{R}^3}|f(p)-g(p)|\Big(1+|p|+|p|^3+|p|^{2n}+|p|^{2n+1}\\
&~~-|p|^{2n+1}\min\{1,|p|\}^{2n+7}\Big)dp.
\end{aligned}
\end{equation} 
{\bf Step 2: Estimating $\mathcal{M}_2$.}
\\ We can estimate $\mathcal{M}_2$ in a straightforward manner by employing Propositions \ref{Propo:HolderC221} and \ref{Propo:HolderC222N}, as follows
\begin{equation}\label{Lipschitz:E15}\begin{aligned}
\mathcal{M}_2
\le&~~ C\int_{\mathbb{R}^3}|f(p)-g(p)|\Big(1+|p|+|p|^{2n}+|p|^{2n+1}\Big)dp.
\end{aligned}
\end{equation} 
{\bf Step 3: Estimating $\mathcal{M}_0$.}
\\ Combining \eqref{Lipschitz:E14} and \eqref{Lipschitz:E15} yields
\begin{equation}\label{Lipschitz:E16}\begin{aligned}
\mathcal{M}_0
\le&~~ C\int_{\mathbb{R}^3}|f(p)-g(p)|\Big(1+|p|+|p|^3+|p|^{2n}\\
&~~-|p|^{2n+1}\min\{1,|p|\}^{2n+7}\Big)dp.
\end{aligned}
\end{equation} 
Since for $|p|\leq1$, 
$$1+|p|+|p|^3+|p|^{2n}-|p|^{4n+8}\le 5,$$
and for $|p|>1$, there exists $C>0$ independent of $p$ such that
$$1+|p|+|p|^3+|p|^{2n}-|p|^{2n+1}\le C,$$
we find that the weight 
$$1+|p|+|p|^3+|p|^{2n}-|p|^{2n+1}\min\{1,|p|\}^{2n+7}$$
of \eqref{Lipschitz:E16} is bounded uniformly in $p$ by a universal positive constant $C$. As a consequence,
Inequality \eqref{Lipschitz:E16} implies
\begin{equation}\label{Lipschitz:E17}\begin{aligned}
\mathcal{M}_0
\le&~~ C\int_{\mathbb{R}^3}|f(p)-g(p)|dp,
\end{aligned}
\end{equation} 
which concludes the proof of \eqref{Lipschitz:E2}. 
 \section{Proof of Theorem \ref{Theorem:ODE}}\label{Appendix}
Our proof is a extension and generalization of the  framework proposed in \cite{Bressan}. The proof is divided into four parts:
{\\\\\bf Part 1:}
Fix a element $v$ of $\mathcal{S}_T$, due to the Holder continuity property of $Q$, we have $$\|Q(u)\|\le \|Q(v)\|+ C\|u-v\|^\beta,~~~~~\forall u\in\mathcal{S}_T.$$ 
According to our assumption, $\mathcal{S}_T$ is bounded by a constant $C_S$. We deduce from the above inequality that 
$$\|Q(u)\|\le \|Q(v)\|+ C\left(\|u\|+\|v\|\right)^\beta\le \|Q(v)\|+ C\left(C_S+\|v\|\right)^\beta=:C_Q, \ \forall u \in \mathcal{S}_T.$$ 
For an element $u$ be in $\mathcal{S}_T$, there exists $\xi_u>0$ such that for $0<\xi<\xi_u$, $u+\xi Q(u)\in{\mathcal{S}_T}$, which implies $$B(u+\xi Q(u),\delta)\cap {\mathcal{S}_T}\backslash\{u+\xi Q(u)\}\ne {\O},$$ for $\delta$ small enough. Choose $\epsilon=2C((C_Q+1)\xi)^\beta$, then $\|Q(u)-Q(v)\|\leq \frac{\epsilon}{2}$ if $\|u-v\|\leq (C_Q+1)\xi$, by the Holder continuity of $Q$. Let $z\in B\left(u+\xi Q(u),\frac{\epsilon \xi}{2}\right)\cap {\mathcal{S}_T}\backslash\{u+\xi Q(u)\} $ and define
$$t\mapsto \vartheta(t)=u+\frac{t(z-u)}{\xi},~~~~t\in[0,\xi].$$
Since  $\mathcal{S}_T$ is convex, $\vartheta$ maps $[0,\xi]$ into $\mathcal{S}_T$. 
It is straightforward that $$\|\vartheta(t)-u\|\leq \xi\|Q(u)\|+\frac{\epsilon \xi}{2}<(C_Q+1)\xi,$$
which implies $$\|Q({\vartheta}(t))-Q(u)\|\leq \frac{\epsilon}{2},~~\forall t\in[0,\xi].$$
The above inequality and the fact that $$\|\dot{\vartheta}(t)-Q(u)\|\leq \frac{\epsilon}{2},$$
leads to 
\begin{equation}\label{Theorem:ODE:E1}
\|\dot{\vartheta}(t)-Q({\vartheta}(t))\|\leq \epsilon,~~\forall t\in[0,\xi].
\end{equation}*
{\\\bf Part 2:} Let $\vartheta$ be a solution to \eqref{Theorem:ODE:E1} on $[0,\tau]$  constructed in Part 1. From Part 1, we have that
$$\begin{aligned}
\|\vartheta(t)\|_* =|\vartheta(t)|_* = \left|u+\frac{t(z-u)}{\tau}\right|_* & \le  |u|_*+\left|\frac{t(z-u)}{\tau}\right|_* \le  |u|_* + |u|_*\frac{tC_*}{2} \\
&= \|\vartheta(0)\|_*\left(1+\frac{tC_*}{2}\right).
\end{aligned}$$
We then obtain
\begin{equation}\label{Theorem:ODE:E3}
\|\vartheta(t)\|_*\le (\|\vartheta(0)\|_*+1)e^{C_*t}-1 <
 (2R_*+1)e^{C_*t}.\end{equation}

Using the procedure of Part 1, we assume that $\vartheta$ can be extended to the interval $[\tau,\tau+\tau']$.
 \\ The same arguments that lead to \eqref{Theorem:ODE:E3} imply
  $$\|\vartheta(\tau+t)\|_*\le \left( (\|\vartheta(\tau)\|_*+1)e^{C_*t}-1\right), \ \  \ t\in[0,\tau'].$$
Combining the above inequality with \eqref{Theorem:ODE:E3} yields
\begin{equation}\label{Theorem:ODE:E5}
\begin{aligned}
\|\vartheta(\tau+t)\|_*\
\le&~~\left(\left(\|\vartheta(0)\|_*+1\right)e^{C_*\tau}-1+1\right)e^{C_*t}-1\\
\le&~~\left(\|\vartheta(0)\|_*+1\right)e^{C_*(\tau+t)}-1\\
<&~~ (2R_*+1)e^{C_*(\tau+t)},
\end{aligned}
\end{equation}
where the last inequality follows from the fact that $R_*\ge 1$.
\\\\\bf Part 3:} From Part 1, there exists a solution $\vartheta$ to the equation \eqref{Theorem:ODE:E1} on an interval $[0,h]$. Now, we have the following procedure.
\begin{itemize}
\item {\it Step 1:}  Suppose that  we can construct the solution $\vartheta$ of \eqref{Theorem:ODE:E1} on $[0,\tau]$ $(\tau<T)$. Since $\vartheta(\tau)\in\mathcal{S}_T$, by the same process as in Part 1 and by \eqref{Theorem:ODE:E3} and \eqref{Theorem:ODE:E5}, the solution $\vartheta$ could be extended to  $[\tau,\tau+h_\tau]$ where $\tau+h_\tau\le T, h_\tau\le \tau$. 
\item {\it Step 2:} Suppose that we can construct the solution $\vartheta$ of \eqref{Theorem:ODE:E1}  on a series of intervals $[0,\tau_1]$, $[\tau_1,\tau_2]$, $\cdots$, $[\tau_n,\tau_{n+1}]$, $\cdots$. Observe that the increasing sequence $\{\tau_n\}$ is bounded by $T$, the sequence has a limit, defined by $\tau.$
Recall that $Q({\vartheta})$ is bounded by $C_Q$ on $[\tau_n,\tau_{n+1}]$ for all $n\in\mathbb{N},$ then $\dot{\vartheta}$ is bounded by $\epsilon+C_Q$ on $[0,\tau)$. As a consequence  $\vartheta(\tau)$ can be defined as $$\vartheta(\tau)=\lim_{n\to\infty}\vartheta(\tau_n), \dot{\vartheta}(\tau)=\lim_{n\to\infty}\dot{\vartheta}(\tau_n),$$
which, together with the fact that $\mathcal{S}_T$ is closed, implies that $\vartheta$ is a solution of \eqref{Theorem:ODE:E1} on $[0,\tau]$. 
\end{itemize}
By Step 2, if
 the solution $\vartheta$ can be defined on $[0,T_0)$, $T_0<T$, it could be extended to $[0,T_0]$. Now, we suppose that $[0,T_0]$ is the maximal  closed interval that $\vartheta$ could be defined, by Step 1, $\vartheta$ could be extended to a larger interval $[T_0,T_0+T_h]$, which means that $T=T_0$ and $\vartheta$ is defined on the whole interval $[0,T]$.
{\\\\\bf Part 4:} Finally, let us consider a sequence of solution $\{u^\epsilon\}$ to \eqref{Theorem:ODE:E1} on $[0,T]$. We will prove that this is a Cauchy sequence. 
Let $\{u^\epsilon\}$ and $\{v^\epsilon\}$ be two sequences of solutions to \eqref{Theorem:ODE:E1} on $[0,T]$. We note that $u^\epsilon$ and $v^\epsilon$ are affine functions on $[0,T]$. Moreover by the one-side Lipschitz condition
\begin{eqnarray*}
\frac{d}{dt} \|u^\epsilon(t)-v^\epsilon(t)\|&=&\Big[u^\epsilon(t)-v^\epsilon(t),\dot{u}^\epsilon(t)-\dot{v}^\epsilon(t)\Big]\\
&\le& \Big[u^\epsilon(t)-v^\epsilon(t),Q[u^\epsilon(t)]-Q[v^\epsilon(t)]\Big]+2\epsilon\\
&\le& C\|u^\epsilon(t)-v^\epsilon(t)\|+2\epsilon, 
\end{eqnarray*}
for a.e. $t\in[0,T]$, which leads to
$$\|u^\epsilon(t)-v^\epsilon(t)\|\le 2\epsilon\frac{e^{LT}}{L}.$$
Let $\epsilon$ tend to $0$, $u^\epsilon\to u$ uniformly on $[0,T]$. It is straightforward that $u$ is a solution to \eqref{Theorem_ODE_Eq}.

  ~~ \\{\bf Acknowledgements.} 
This work was partially supported by a grant from the Simons Foundation ($\#$395767 to Avraham Soffer). A. Soffer is partially supported by NSF grant DMS 1201394.
M.-B Tran is partially supported by NSF Grant DMS (Ki-Net) 1107291, ERC Advanced Grant DYCON. M.-B Tran would like to thank Professor Daniel Heinzen, Professor Linda Reichl, Professor Mark Raizen, Professor Robert Dorfman and Professor Jeremie Szeftel for fruiful discussions on the topic. A part of the research was carried on while M.-B. Tran was visiting  University of Texas at Austin. A part of this work was done when both of the authors were visiting the Central China Normal University, China. The authors would like to acknowledge the institutions for the hospitality. The authors would like to thank the referee for useful comments.
 \bibliographystyle{plain}
 \bibliography{QuantumBoltzmann}

\end{document}